\newcommand{\ignore}[1]{}
\newcommand{\bes} {\begin{subequations}}
\newcommand{\ees} {\end{subequations}}
\newcommand{\beq}{\begin{equation}}
\newcommand{\eeq}{\end{equation}}
\newcommand{\bea}{\begin{eqnarray}}
\newcommand{\eea}{\end{eqnarray}}
\newcommand\Tr{\mathrm{Tr}}
\newcommand{\pen}[1]{\left(#1\right)}								% PARENTHESIS
\newcommand{\ben}[1]{\left[#1\right]}								% BRACKETS
\newcommand{\cen}[1]{\left\{#1\right\}}								% CURLY BRACKETS
\newcommand{\expv}[1]{\langle #1\rangle}							% EXPECTATION VALUE
\newtheorem*{theorem*}{Theorem}
\newtheorem*{lemma*}{Lemma}
\newtheorem{mylemma}{Lemma}
\newcommand{\mc}{\mathcal}
\renewcommand{\Re}{\textrm{Re}}
\renewcommand{\Im}{\textrm{Im}}
\def\a{\alpha}
\def\b{\beta}
\def\g{\gamma}
\def\o{\omega}
\def\beq{\begin{equation}}
\def\eeq{\end{equation}}
\DeclareMathOperator{\sinc}{sinc}
\DeclareMathOperator{\sgn}{sgn}
\begin{document}

%\begin{frontmatter}

%% Title, authors and addresses

\title{Completely positive master equation for arbitrary driving and small level spacing}

%% use the tnoteref command within \title for footnotes;
%% use the tnotetext command for the associated footnote;
%% use the fnref command within \author or \address for footnotes;
%% use the fntext command for the associated footnote;
%% use the corref command within \author for corresponding author footnotes;
%% use the cortext command for the associated footnote;
%% use the ead command for the email address,
%% and the form \ead[url] for the home page:
%%
%% \title{Title\tnoteref{label1}}
%% \tnotetext[label1]{}
%% \author{Name\corref{cor1}\fnref{label2}}
%% \ead{email address}
%% \ead[url]{home page}
%% \fntext[label2]{}
%% \cortext[cor1]{}
%% \address{Address\fnref{label3}}
%% \fntext[label3]{}

%% use optional labels to link authors explicitly to addresses:
%% \author[label1,label2]{<author name>}
%% \address[label1]{<address>}
%% \address[label2]{<address>}

\author{Evgeny Mozgunov}
	\affiliation{Center for Quantum Information Science \&
		Technology, University of Southern California, Los Angeles, California 90089, USA}
\author{Daniel Lidar}
	\affiliation{Center for Quantum Information Science \&
		Technology, University of Southern California, Los Angeles, California 90089, USA}
	\affiliation{Department of Electrical and Computer Engineering, University of Southern California, Los Angeles, California 90089, USA}
	\affiliation{Department of Chemistry, University of Southern California, Los Angeles, California 90089, USA}
	\affiliation{Department of Physics and Astronomy, University of Southern California, Los Angeles, California 90089, USA}

\begin{abstract}
Markovian master equations are a ubiquitous tool in the study of open quantum systems, but deriving them from first principles involves a series of compromises. On the one hand, the Redfield equation is 
valid for fast environments (whose correlation function decays much faster than the system relaxation time) regardless of the relative strength of the coupling to the system Hamiltonian, but is notoriously non-completely-positive. On the other hand,
the Davies equation preserves complete positivity but is valid only in the ultra-weak coupling limit and for systems with a finite level spacing, {which makes it incompatible with arbitrarily fast time-dependent driving}. 
%\sout{Moreover, including time-dependent driving in the Davies equation was previously been done only under the adiabatic approximation.}
Here we show that a recently derived {Markovian coarse-grained master equation (CGME)}, already known to be completely positive, has a much expanded range of applicability compared to the Davies equation, and moreover, is locally generated and can be generalized to accommodate arbitrarily fast driving. This generalization, which we refer to as the time-dependent CGME, is thus suitable for the analysis of fast operations in gate-model quantum computing, such as quantum error correction and dynamical decoupling. Our derivation proceeds directly from the Redfield equation and allows us to place rigorous error bounds on all three equations: Redfield, Davies, and coarse-grained. Our main result is thus a completely positive Markovian master equation that is a controlled approximation to the true evolution for any time-dependence of the system Hamiltonian, and works for systems with arbitrarily small level spacing. We illustrate this with an analysis showing that dynamical decoupling can extend coherence times even in a strictly Markovian setting.

\end{abstract}

\maketitle

%\tableofcontents
%\begin{keyword}
%master equation \sep Lindblad generators \sep Positivity

%% keywords here, in the form: keyword \sep keyword

%% MSC codes here, in the form: \MSC code \sep code
%% or \MSC[2008] code \sep code (2000 is the default)

%\end{keyword}

%\end{frontmatter}

%%
%% Start line numbering here if you want
%%
%\linenumbers

%% main text

\section{Introduction}
Modeling experiments requires taking into account that physical systems are open, i.e., not ideally isolated from their environments. Usually an environment contains many more degrees of freedom than the system itself, but is not in any interesting phase of matter where the detailed modeling of each of those degrees of freedom is required. This makes the problem of modeling open systems tractable. More precisely, the problem is tractable under a list of assumptions on the parameters of the bath (environment) and its interaction with the system~\cite{alicki_quantum_2007,Breuer:book,Gardiner:book}.

One natural approach is to consider the case when this interaction is weak compared to all other energy scales in the problem. In this limit, Davies showed~\cite{Davies:74} that the dynamics of the system are given by a Markovian master equation with what is now called Davies generators. Rigorous bounds on the error between the solution of this equation and the true evolution can be obtained [presented here in Eq.~\eqref{LblBound}]. The largest contribution to the error comes from making the rotating wave approximation (RWA).\footnote{This approximation is stated  explicitly in Ref.~\cite{Gardiner:book}, page 87, around Eq.~(3.6.67).} After the system evolves for a time set by a relevant relaxation timescale, the error in its density matrix is given by:
\beq
    \textrm{system-bath coupling strength}  \times ~\sqrt{\textrm{bath correlation time}/\textrm{system level-spacing}}\ .
    \label{eq:1}
\eeq
This quantity, in general, becomes exponentially large in the system size due its denominator, so that the 
%%DL: the coupling strength is not naturally system-size dependent, so it seems best not to hang the need to be exponentially small on it
% coupling needs to be exponentially small 
coupling strength or the bath correlation time need to be sufficiently small for the use of Davies generators to be justified, up to some upper system size limit. 
In practice, for convenience and also because it may apply in a range that is larger than the rigorous but conservative bounds imply, the Lindblad master equation~\cite{Lindblad:76} with Davies generators is often used outside its formal range of applicability in modeling of experiments (e.g., Ref.~\cite{q-sig2,Boixo:2014yu,Albash:2015pd,Mishra:2018}). In such cases it should be understood as a semi-empirical model of open system dynamics, that possesses all the physical properties of the dynamics without reproducing them exactly. Unfortunately, since we naturally wish to apply modeling tools to problems for which we do not know a priori what the correct answer is, this approach cannot provide correctness guarantees. However, other methods are even more susceptible to this problem, e.g., a path-integral (non-master equation) approach based on integrating out the bath~\cite{Feynman:63,Caldeira1983,Makarov:94}: this involves integration over a very high-dimensional space, so methods~\cite{Eunji:2001} that bring it back into the realm of numerical feasibility usually involve uncontrollable approximations.

One of the important requirements of open system dynamics is complete positivity~\cite{Kraus:book}, under which density matrices whose eigenvalues are by definition non-negative are mapped to other such matrices, even when applied to a subsystem of a larger system.\footnote{It should be noted that complete positivity can be relaxed without losing physicality; see, e.g., Ref.~\cite{Dominy:2016xy}.}  
This property is equivalent to the Markovian master equation being in Lindblad canonical form~\cite{Lindblad:76,gorini_properties_1978}. There are numerous other master equations (e.g., Refs.~\cite{Nakajima:58,Zwanzig:60a,Redfield:66,PhysRevA.60.1944,Wonderen:2000yi,Lidar200135,Daffer:03,ShabaniLidar:05,ManPet06,piilo2008non,Breuer:2008aa,Whitney:2008aa,Wu09,Kossakowski:10,ABLZ:12-SI,Mozgunov:2016aa,Smirnov:2018,Kosloff:2018,Venuti:2018ab,McCauley:19,Benatti_2009,Benatti_2010,Merkli_2019}), some in Lindblad form and some not, whose domains of validity and ranges of applicability were not always studied in detail. One promising candidate is a coarse-grained master equation (CGME)~\cite{Majenz:2013qw}, 
which is in Lindblad form and thus automatically completely positive (CP). In this work we show that a suitably generalized variant of this equation{, that includes an arbitrarily time-dependent system Hamiltonian (hence called the ``time-dependent CGME''),} has a much milder error than~\eqref{eq:1}:
\beq
    \textrm{system-bath coupling strength}\times \textrm{bath correlation time}\ ,
    \label{eq:2}
\eeq
which does not diverge with the system size $n$ for local observables, and has only a polynomial [$O(n^\alpha)$ with $1/2 \leq \alpha \leq 1$] prefactor for nonlocal observables. We note that the full form of the error estimate~\eqref{eq:2} is not uniform over the evolution time, unlike some of the error bounds developed elsewhere~\cite{Merkli_2019,Cubitt:2015}. We also show that this new master equation is an improvement over the Lindblad equation with Davies generators in that it has a local structure, compatible with approximate simulation methods with asymptotically smaller costs [$O(n)$ \textit{vs}. $O(\exp(n))$], and that the error estimate~\eqref{eq:2} is valid for an arbitrary time-dependence of the system Hamiltonian. In particular, the time-dependent CGME is 
% {that the time-dependent CGME is 
compatible with the assumption of arbitrarily fast gates, often made in the circuit model of quantum computing, e.g., in the analysis of fault-tolerant quantum error correction~\cite{Knill:05}. This assumption was shown to be incompatible with the derivation of the Davies master equation~\cite{PhysRevA.73.052311}.

This paper is structured as follows. We summarize our main results in Sec.~\ref{sec:Three}, where we present the main master equations involved in this work: Redfield and Davies-Lindblad (well known), and coarse-grained, both time-independent and time-dependent (new). We also present bounds that provide the range of applicability of each type of master equation (new). We show that these bounds can be expressed entirely in terms of only two timescales, namely the fastest system decoherence timescale, and the characteristic timescale of the decay of the bath correlation function.
The reader interested only in the results and not in the details of derivations can safely read only this section and then skip to the conclusions in Sec.~\ref{sec:conc}. Derivations begin in Sec.~\ref{deriv1}, where we present a simple, new derivation of the time-independent CGME. We express the equation in CP (Lindblad) form, state the range of validity for different approximations made in the derivation, and thus for the equation itself. We also compare the range of applicability with other master equations, 
and note the spatial locality of the Lindblad generators of the CGME. At this point we are ready to address the case of time-dependent Hamiltonians. The derivation of the equation for this case (which happens to result in exactly the same form) is given in Sec.~\ref{TD}. In particular, this equation is well suited for the simulation of open system adiabatic quantum computation, 
but also for dynamical decoupling, which involves the opposite limit of very fast system dynamics. We note that while we will sometimes refer to qubits, none of the results we discuss in this work are limited to qubits, and in fact any finite multi-level system, or interacting set of such systems, is captured by the formalism. We study some applications and examples in Sec.~\ref{scalSec}, including error suppression using a dynamical decoupling protocol, and
% in Sec.~\ref{ddSec}. 
numerical results of the comparison of our master equation with the Redfield and Lindblad-Davies master equations. In the remainder of the paper we derive the range of validity of the various master equations we study. We give a detailed treatment of the Born approximation and the other approximations involved, in terms of rigorous bounds presented in Sec.~\ref{errSec}, where we also discuss the generalization to multiple coupling terms and analyze the scaling of the error bounds for large system sizes. 
% can be found in Sec.~\ref{numeric}. 
Conclusions are presented in Sec.~\ref{sec:conc}.
Various additional technical details are given in the Appendix.

\section{Three master equations}
\label{sec:Three}

This section provides a summary of our main results. We first present a brief background to define basic concepts and notation, followed by the definition and properties of the two main timescales we use later to provide bounds and ranges of applicability. We then summarize the three master equations we focus on in this work, followed by upper bounds on the distance of their solutions from the true state. 

\subsection{Background}

%Before we briefly introduce the three main master equations discussed in this work, we quickly review some key background concepts. 
Consider a system interacting with its environment as described by the total Hamiltonian 
\begin{equation}
H_{\textrm{tot}} = H\otimes I_\text{b} + A\otimes B + I\otimes H_\text{b}\ .
\label{eq:Htot}
\end{equation}
Here $H$ is the time-independent system Hamiltonian (we treat the time-dependent case in Sec.~\ref{TD}), $H_\text{b}$ is the bath Hamiltonian, and $A$ and $B$ are, respectively, Hermitian system and bath operators. The more general situation, with $V=A\otimes B$ replaced by $\sum_i A_i\otimes B_i$, is easily treated as well (see Sec.~\ref{nTerms}), but we avoid it here to simplify the notation. We choose $A$ to be dimensionless with $\|A\| =1$ [the operator norm is defined in Eq.~\eqref{opaNorm}] and $B$ to have energy units, but we deliberately do \emph{not} set $\|B\| =1$, since we wish to account for baths (such as harmonic oscillator baths) for which $\|B\|$ can be infinite. We also set $\hbar\equiv 1$ throughout, so that energy and frequency have identical units.

Let the eigenstates of $H$ be $\{|n\rangle\}$ , and the corresponding eigenvalues $\{E_n\}$. Equivalently, $H = \sum_n E_n \Pi_n$, where $\Pi_n$ is a projector onto the eigenspace with eigenvalue $E_n$, and $\Pi_m\Pi_n = \delta_{mn}\Pi_n$. 
Note that in the interaction picture $A(t) = U^\dagger(t) A U(t)= \sum_{nm} A_{nm} e^{-iE_{mn}t} |n\rangle \langle m |$ where $E_{mn} = E_m - E_n$, $A_{nm} = \bra{n}A\ket{m}$, and $U$ is the solution of $\dot{U}=-iHU$. 
Let 
\beq
A_{\omega} = \sum_{mn: E_{mn} = \omega} A_{nm}|n\rangle \langle m |  = \sum_{mn: E_{mn} = \omega}\Pi_n A \Pi_{m} = A_{-\omega}^\dag\ , 
\label{eq:A_om}
\eeq  
so that\footnote{The choice of the sign for this notation, as well as other notation choices, follow the textbook~\cite{Breuer:book}, p.133-134.} 
\begin{equation}
A(t) = \sum_\omega A_{\omega}e^{-i\omega t}\ .
\label{eq:A(t)}
\end{equation}
Here $\omega$ runs over all the energy differences (Bohr frequencies) between the eigenvalues of $H$. 

We assume henceforth that the bath is stationary --- $[\rho_\text{b},H_\text{b}]=0$ where $\rho_\text{b}$ is the bath state --- and introduce the bath correlation function $\mathcal{C} (t)$ and its Fourier transform 
$\gamma(\omega)$, known as the bath spectral density:
\bes
\begin{align}
    \mathcal{C}(t) &= \Tr[\rho_\text{b} B(t) B] = \mathcal{C} ^*(-t) = \frac{1}{2\pi}\int_{-\infty}^{\infty} e^{-i\o t} \gamma(\o)d\o
    \label{eq:C(t)} \\
\label{eq:gamma31}
\g(\o) &= \int_{-\infty}^\infty \mathcal{C}(t)e^{i\omega t}dt = f(\o) + f^*(\o) \ ,\end{align}
\ees
where $B(t) = e^{iH_\text{b} t} B e^{-iH_\text{b} t}$ and 
\beq
f(\o) = \int_0^\infty C(t)e^{i\o t} dt = \frac{1}{2}\gamma(\omega)+iS(\omega) \ , \quad S(\omega) = \frac{1}{2i}\left[f(\omega) - f^*(\omega)\right] = S^*(\omega)
\ .
\label{eq:6c}
\eeq
Note that the real-valued functions $S(\o)$ and $\g(\o)$ are related via a Kramers-Kronig transform
\beq
S(\o)= \frac{1}{2\pi} \int_{-\infty}^\infty \mc{P}\left(\frac{1}{\o-\o'}\right) [\gamma(\omega')] d \omega'\ , 
\label{eq:S-gamma}
\eeq
where $\mc{P}\left(\frac{1}{x}\right)[f] = \lim_{\epsilon\to 0} \int_{-\epsilon}^\epsilon \frac{f(x)}{x}dx$ is the Cauchy principal value.%
\footnote{This follows from the identity $\int_{0}^\infty d\tau e^{ix\tau} = \pi \delta(x) + i \mc{P}\left(\frac{1}{x}\right)$.}
When $A\otimes B$ is replaced by $\sum_i A_i\otimes B_i$, we note the positivity of $\gamma_{ij}(\omega) = \int dt e^{i\omega t}\Tr[\rho_\text{b} B_i(t) B_j] $ as a matrix in the indices $i,j$ (of course this reduces to $\gamma(\o) \geq 0$ in the scalar case). This is a non-trivial fact that is usually associated with Bochner's theorem, as e.g., in the textbook~\cite{Breuer:book}. In Appendix~\ref{app:A} we give two different proofs, one using Bochner's theorem and another that is direct. If we assume not only that the bath state is stationary, but that it is also in thermal equilibrium at inverse temperature $\beta$, i.e., $\rho_\text{b}=e^{-\beta H_\text{b}}/\mathcal{Z}$, $\mathcal{Z}=\Tr e^{-\beta H_\text{b}}$, then it follows that the correlation function satisfies the time-domain {Kubo-Martin-Schwinger (KMS) condition}~\cite{Breuer:book,Lidar:2019aa}:
\beq
C(t) = \Tr[\rho_\text{b} B(0) B(t+i\beta)]\ .
\label{eq:KMS-C}
\eeq
If in addition the correlation function is analytic in the strip between $t=-i\b$ and $t=0$, then it follows that the Fourier transform of the bath correlation function satisfies the {frequency domain KMS condition}~\cite{Breuer:book,Lidar:2019aa}:
\beq
\gamma(-\omega) = e^{-\beta \omega} \gamma(\omega)\ \forall \o \geq 0\ .
\label{eq:KMS}
\eeq
We note that the thermal equilibrium assumption is not necessary for the results we derive in this work, and we never use it in our proofs. We mention it here for completeness, and also since we use it in one of our dynamical decoupling examples later on. Finally, note that $\mathcal{C} (t)$ has units of frequency squared.

%%%%%%%%%%%%%%

\subsection{The bath correlation time $\tau_B$ and the fastest system decoherence timescale $\tau_{SB}$}
\label{sec:times}

We define the two key quantities
\bes
\label{eq:T1tauB}
\begin{align}
\label{eq:T1tauB-a}
   \frac{1}{\tau_{SB}} &= \int_0^{\infty}|\mathcal{C} (t)|dt \\
\label{eq:T1tauB-b}
%    \frac{\tau_B}{\tau_{SB}} &= \int_0^{T}t|\mathcal{C} (t)|dt\ .
\tau_B &= \frac{\int_0^{T}t|\mathcal{C} (t)|dt}{\int_0^{\infty}|\mathcal{C} (t)|dt}
\end{align}
\ees
Here $T$ is the total evolution time, used as a cutoff which can often be taken as $\infty$. We  show later that in the interaction picture $\|\dot{\rho}\|_1\leq 4\|\rho\|_1/\tau_{SB}$, where $\rho$ is the system density matrix and the trace norm defined in Eq.~\eqref{treNorm}, so that we can interpret $\tau_{SB}$ as the fastest system decoherence timescale, or timescale over which $\rho$ changes due to the coupling to the bath, in the interaction picture (i.e., every other system decay timescale, including the standard $T_1$ and $T_2$ relaxation and dephasing times, must be longer). The quantity $\tau_B$ is the characteristic timescale of the decay of $\mathcal{C} (t)$. We return in Sec.~\ref{errSec} to why we define these timescales in this manner, but note that Eq.~\eqref{eq:T1tauB-b} becomes an identity if we choose $|\mc{C}(t)| \propto e^{-t/\tau_B}$ and take the limit $T\to\infty$. 

We further note that $\tau_{SB}$ and $\tau_B$ are the only two parameters relevant for determining the range of applicability of the various master equations; in particular, nothing about the norm or time-dependence of $H_{\text{tot}}$ affects the accuracy of our time-dependent CGME, given below. In particular, $H_{\text{tot}}$ can be arbitrarily large or small in the operator norm (strong coupling, unbounded bath), or have an arbitrarily large derivative (non-adiabatic regime). This remark is important in light of previous approaches, so we expand on it some more.

In previous work it was common to extract a dimensionful coupling parameter $g$, i.e., to replace $B$ by $g\tilde{B}$, where $\|\tilde{B}\|=1$. One then defines $\tau_B = \int_0^\infty |\tilde{C}(t)|dt$ in terms of a dimensionless correlation function $\tilde{C}(t) = C(t)/g^2$~\cite{ABLZ:12-SI}. One problem with this approach is the arbitrariness of distributing the numerical factors between $g$ and $\tilde{B}$. Another is that it precludes unbounded baths, such as oscillator baths, for which $\|B\|=\infty$. In contrast, the formalism we present here is applicable even when $\|B\|$ diverges. Furthermore, 
$\|B\|$ contains an extra scale that does not carry any new information about the range of applicability of the master equations we discuss and derive. By not introducing this extra scale we highlight that there are only two free parameters in our analysis of the error: $\tau_B$ and $\tau_{SB}$, and no other information about the bath is needed. I.e., even though different baths will lead to different equations, our results are universal for any bath with the same values of these two parameters (in fact only their dimensionless ratio matters).

%\blue{To sum up, we can replace $\tau_{SB}$ by a very specific choice of what to call effective coupling strength: $\tilde{g}^2 = 1/\tau_{SB}\tau_B$. That would allow to rewrite all of our results in terms of $\tilde{g}$ and $\tau_B$, defined by the following:
%}
%\begin{equation}
%    \blue{\frac{\int_0^{\infty}|\mathcal{C} (t)|dt   }{\tilde{g}^2} = \tau_B = \frac{\int_0^{T}t|\mathcal{C} (t)|dt}{\int_0^{\infty}|\mathcal{C} (t)|dt}}
%\end{equation}
%\blue{Here $\mathcal{C}$ was from our original definition. We can however define the system bath interaction operator as $\tilde{g}A \otimes \tilde{B} = B$, which would lead to the new correlation function $\tilde{\mathcal{C}}$ that satisfies:}
%\begin{equation}
%   \blue{\mathcal{C} = \tilde{g}^2 \tilde{\mathcal{C}}, \quad \int_0^{\infty}|\tilde{\mathcal{C}} (t)|dt    = \tau_B = \frac{\int_0^{T}t|\tilde{\mathcal{C}} (t)|dt}{\int_0^{\infty}|\tilde{\mathcal{C}} (t)|dt}}
%\end{equation}
%\blue{for this and only this choice of coupling strength, the conventional definition of $\tau_B$ coincides with ours. We used $1/\tau_{SB} = \tilde{g^2}\tau_B$ at some points in the paper to connect with the introduction. We dropped the $\tilde{~}$ then.}
%
%\blue{As for the name $\tau_{SB}$, we will show that $\|\dot{\rho}\|_1 \leq 4/\tau_{SB}$, so up to a factor of $4$ our definition bounds any relaxation process in the system. In particular, our $\tau_{SB}/4$ will be a lower bound to what experimentalists usually call $\tau_{SB}$}

Note that we can relate $\tau_B$ and $\tau_{SB}$ to the spectral density. First, using $\g(\o)>0$ and $C(t) = C^*(-t)$:
\beq
\g(\o) = |\g(\o)| \leq \int_{-\infty}^\infty |C(t)|dt = 2 \int_{0}^\infty |C(t)|dt = \frac{2}{\tau_{SB}} \ .
\label{eq:22new}
\eeq
The dephasing time $T_2$ and relaxation time $T_1$ for a single qubit are usually defined as $\propto 1/\g(0)$ and $\propto 1/\g(\o)$ respectively, where $\omega$ is the qubit operating frequency (e.g., see Ref.~\cite{Albash:2015nx}), so we see that $\tau_{SB} \lesssim T_1,T_2$, which illustrates our earlier comment that $\tau_{SB}$ is the fastest system decoherence time-scale.
Second, $\g'(\o) = i \int_{-\infty}^\infty t C(t) e^{i\o t} dt$, so that
\beq
|\g'(\o)| \leq \int_{-\infty}^\infty |t| |C(t)| dt = 2 \int_{0}^\infty t |C(t)| dt \stackrel{\tiny{T\to\infty}}{=} 2\frac{\tau_B}{\tau_{SB}}\ ,
\label{eq:23}
\eeq
where the limit is taken in Eq.~\eqref{eq:T1tauB-b}.
And lastly, 
%$\left.~\frac{d\g(-\omega)}{d\o} = -\frac{d\g}{d\o}\right|_{-\omega}
$\frac{d\g(-\omega)}{d\o} =-\frac{d\g(\omega)}{d\o}|_{-\omega}$, while differentiating the KMS condition yields $\frac{d\g(-\omega)}{d\o} = -\b e^{-\b \o} \g(\o) +e^{-\b \o}\frac{d\g(\omega)}{d\o} $, so that:
\begin{equation}
    \gamma'(0) = \frac12 \beta \gamma(0) \ .
\end{equation}
This implies that $\gamma'(0) >0$, so that under the KMS condition~\eqref{eq:KMS} we can replace Eq.~\eqref{eq:23} by
\beq
\beta \gamma(0) \leq 4\frac{\tau_B}{\tau_{SB}} \ .
\label{eq:25new}
\eeq
The upper bounds on the approximation errors of all the master equations we present below involves the ratio $\tau_B/\tau_{SB}$ [see, e.g., Eqs.~\eqref{err:Redfield}-\eqref{eq:59b}]. The smallness of this ratio is sufficient for the CGME to be useful, while additional assumptions are required for the Davies-Lindblad equation and some versions of the Redfield equation.
It follows from Eq.~\eqref{eq:25new} that for our bound to be small it is necessary that the temperature $\b^{-1}$ is not too low and/or the spectral density at $\o=0$ (roughly the same as the dephasing rate $T_2^{-1}$) is not too large. 
The simple condition $\beta \gamma(0) \ll 1$ already rules out diverging spectral densities such as the case of $1/f$ noise, though this can be ameliorated by introducing a low-frequency cutoff. 
%\DL{I commented out the final sentence of this paragraph from MElean.}
%For a smooth cutoff that keeps $\g(\o)$ approximately constant in the interval $[-\o_{\min},\o_{\min}]$, this simple condition allows one to estimate the normalization of $\g(\o)$, and the estimate matches the more accurate calculation, to be presented in future work.

We now present the three master equations, in the Schr\"odinger picture.

%%%%%%%%%
\subsection{Redfield master equation} 
\label{sec:Redfield}

For the derivation see Sec.~\ref{sec:Redfield-deriv}.
This equation was known earlier than the Davies-Lindblad equation~\cite{Redfield:66}, and is often used in quantum chemistry. It  is not CP and hence cannot be put in Lindblad form:
\begin{equation}
{\dot\rho}_R(t) =-i[H,\rho_R(t)] +(A \rho_R A_f - \rho_R  A_f A +\textrm{h.c.})\ ,
\label{eq:Red}
\end{equation}
where we defined the ``filtered" operator: 
\begin{equation}
    A_f =\int_{0}^\infty \mathcal{C} (-t)A(-t)dt=  \sum_\omega A_\omega \int^{\infty}_0 \mathcal{C} (-t) e^{i\omega t}dt =  \sum_\omega f^*(-\o) A_\omega \ ,
    \label{eq:Af}
\end{equation}
where $f(\o)$ is defined in Eq.~\eqref{eq:6c}, and we used the subscript $R$ to denote that the density matrix satisfies the Redfield equation (we use a similar subscript notation below to distinguish the solutions of different master equations). There is no natural way to separate the Lamb shift. One of the benefits of the Redfield equation is that there is only one generator $A_f$ per interaction with the environment $A\otimes B$. We will show that as long as it preserves positivity, the Redfield equation is more accurate than the Davies-Lindblad master equation.

%%%%%%%%%%%%

\subsection{Davies-Lindblad master equation} 

For the derivation see Sec.~\ref{app:derivAll}.
This is the familiar result~\cite{Davies:74}
\bes
\label{eq:Davies-Lind}
\begin{align}
{\dot\rho}_D(t) &=-i[H +H_{\mathrm{LS}} ,\rho_D] +\sum_{\omega} \gamma(\omega)  (A_{\omega} \rho_D A^\dagger_{\omega} - \frac{1}{2}\{\rho_D, A^\dagger_{\omega} A_{\omega} \}) \ ,
\\
    H_{\mathrm{LS}} &= -\sum_\omega S(\o) A^\dagger_{\omega}A_{\omega}\ , \quad S(\o) = \frac{1}{2i} \int_{-\infty}^\infty \sgn(t)\mathcal{C}(t)e^{i\omega t} dt \ ,
    \label{eq:LS-final}
\end{align}
\ees
where $H_{\mathrm{LS}}$ is the Lamb shift. We note that ad hoc forms of the Lindblad equation are often written down and used without justification, e.g., with just one Lindblad term per qubit (e.g., $\sigma_-$ or $I-\sigma_z$~\cite{Znidaric:10,Medvedyeva:16}). In reality the Davies generators are derived from first principles, i.e., from the description of the total Hamiltonian of the system and bath. The number of Davies generators is unfortunately exponential in the number of qubits $n$: $\sum_\omega$ is over all $4^n$ energy differences.

\subsection{Coarse-grained master equation (CGME) for time-independent or time-dependent system Hamiltonians} 

This is our main new set of results, generalizing Ref.~\cite{Majenz:2013qw} to the time-dependent case. 

\subsubsection{The time-independent case}

For the derivation see Sec.~\ref{sec:CGME}.
First, considering time-independent system Hamiltonians, we obtain:
\begin{equation}
{\dot\rho}_C(t) =-i[H +H_{\mathrm{LS}},\rho_C(t)] + \int_{-\infty}^{\infty}d\epsilon \left(A_{\epsilon} \rho_C(t) A_{\epsilon}^\dag - \frac{1}{2}\left\{ \rho_C(t) , A_{\epsilon}^\dag A_{\epsilon}\right\} \right)\ ,
\label{eq:Lind}
\end{equation}
where the Lindblad operators are
\begin{subequations}
    \label{eq:newLindops}
\begin{align}
    \label{Atime}
    A_\epsilon &=  \sqrt{\frac{\gamma(\epsilon)}{2\pi T_a}}\int_{-T_a/2}^{T_a/2}e^{i\epsilon t}A(t)dt \\
    &= \sum_\omega f(\epsilon,\omega) A_\omega \ , \quad f(\epsilon,\omega) =  \sqrt{\frac{\gamma(\epsilon)T_a}{2\pi}} \textrm{sinc}[T_a(\epsilon-\omega)/2] \ ,
\end{align}
\end{subequations}
with $\sinc(x) \equiv \sin(x)/x$, and where $T_a$ is the averaging, or coarse-graining time, discussed below. The continuous family of terms labeled by $\epsilon$ is an unusual form of the Lindblad equation, but is manifestly CP. The more standard form in terms of a discrete sum over transition frequencies is equivalent to this one, and is given in Eq.~\eqref{eq:CGME} below. The range of applicability of the CGME is only slightly smaller than that of the Redfield ME, as we discuss in Sec.~\ref{rangesDiscuss}. Much like for Davies generators, there are exponentially many frequency differences, but unlike the Davies case the discretization of the continuous integral $\int d\epsilon$ makes it possible to keep the number of generators constant in the system size for each $A\otimes B$ term. These results are presented in Sec.~\ref{posLoc}.

The same applies for the Lamb shift, which is 
\begin{subequations}
\label{eq:HLS-CG}
\begin{align}
    H_{\mathrm{LS}}&=\sum_{\omega\omega'}F_{\omega\omega'}A_{\omega'}A_{\omega}\\
%    F_{\omega_1\omega_2}  =\frac{\cos (T_a\o_+)}{4iT_a} [R(\omega_2) - R(-\omega_1)] -  \frac{\sin (T_a\o_+)}{4T_a} [I(-\omega_1) +  I(\omega_2)]\ .
    F_{\omega\omega'}  &=\frac{1}{2T_a\o_+} \Re  \int_{0}^{T_a} d\theta \left( e^{i(\o\theta-T_a\o_+)} - e^{-i(\o'\theta-T_a\o_+)} \right) \mathcal{C} (\theta) = F_{\omega\omega'}^* = F_{-\omega',-\omega}\ ,
    \label{eq:Fw1w2}
\end{align}
\end{subequations}
where $\omega_+ = \frac{\omega + \omega'}{2}$.
Note that $F_{\omega\omega'}$ is well defined in the limit $\omega_+ \to 0$. The results above are generalized to the $n$-qubit setting in Sec.~\ref{nTerms}.

\subsubsection{The time-dependent case}

For the derivation see Sec.~\ref{TD}.
In the case of a time-dependent system Hamiltonian $H(t)$ we obtain the same form, and even the same range of applicability, except that all the operators are now time-dependent:
\begin{equation}
\label{tdCoarse}
{\dot\rho}_C(t) =-i[H(t) +H_{\mathrm{LS}}(t),\rho_C] + \int_{-\infty}^{\infty}d\epsilon \left(A_{\epsilon}(t) \rho_C A_{\epsilon}^\dag(t) - \frac{1}{2}\left\{ \rho_C , A_{\epsilon}^\dag(t) A_{\epsilon}(t)\right\} \right)\ ,
\end{equation}
where the time-dependent Lindblad operators are
\begin{equation}
\label{eq:14}
    A_\epsilon(t) = \sqrt{\frac{\gamma(\epsilon)}{2\pi T_a}}\int_{-T_a/2}^{T_a/2} e^{i\epsilon t_1}A(t+t_1,t)dt_1  \ ,
\end{equation}
where $A(t',t) = U^\dagger (t',t)A U(t',t)$ with $U(t',t) = \mathcal{T} \exp[-i\int_t^{t'} H(s) ds]$ (the forward time-ordered exponential, from $t'$ on the left to $t$ on the right), and the time-dependent Lamb shift is
\beq
H_{\mathrm{LS}}(t) = \frac{i}{2T_a}\int_{-T_a/2}^{T_a/2}dt_1\int_{-T_a/2}^{T_a/2}  dt_2 \sgn(t_1-t_2)\mathcal{C} (t_2-t_1)  A(t+t_2,t) A(t+t_1,t)\ .
\label{eq:15}
\eeq
Although the coarse-graining time $T_a$ is a free parameter, we show that 
%for the best simulation accuracy 
to minimize the upper bound we derive on the distance between the solutions of the Redfield and coarse-grained master equations
it is nearly optimal to choose 
\begin{equation}
    T_a = \sqrt{\tau_{SB} \tau_B/5}\ .
    \label{eq:T_a-opt}
    %\frac{\sqrt{\int_0^\infty |\mathcal{C} (t)| tdt}}{\int_0^\infty |\mathcal{C} (t)| dt}
\end{equation}

\subsection{Error bounds and range of applicability}
\label{sec:err-bounds}
The error bound of the Redfield master equation is
\beq
   \|\rho_{\text{true}}(t) -\rho_R(t) \|_1 \le O\left(\frac{\tau_B}{\tau_{SB}}   e^{12t/\tau_{SB}}\right)\text{ln}\left(\frac{\tau_{SB}}{\tau_B} \right)\ , 
   \label{err:Redfield}
\eeq
where $\rho_{\text{true}}(t)$ denotes the true (approximation-free) state.

The error bound of the Davies-Lindblad master equation is
\beq
   \|\rho_{\text{true}}(t) -\rho_D(t) \|_1 \le  O\left( \left(\frac{\tau_B}{\tau_{SB}}  +\frac{1}{\sqrt{\tau_{SB}\delta E}}\right)e^{12t/\tau_{SB}}\right)\ ,
       \label{eq:72}
\eeq
where $\delta E = $min$_{i\ne j}|E_i-E_j|$ is the level spacing, with $E_i$ the eigenenergies of the system Hamiltonian $H$. 

The error bound of both the time-independent and time-dependent 
coarse-grained master equation is:\footnote{Strictly, our proof is only for the time-independent case, but there do not seem to be any obstacles for its generalization to the time-dependent case.}
\begin{align}
    \|\rho_{\text{true}}(t) -\rho_{C}(t) \|_1 \le O\left(\sqrt{\frac{\tau_B}{\tau_{SB}}}e^{6t/\tau_{SB}}\right) \ .
   \label{eq:59b}
\end{align}
In the above expressions, $O(X)$ is understood for $X\to 0$, specifically for CGME $X =\sqrt{\tau_B/\tau_{SB}}e^{6t/\tau_{SB}}$. There is a subtlety in the definition of big-$O$ notation that we would like to emphasize. By definition, “$f(X) = O(X)$ at $X\to0$” means there exist $X_0, M$ s.t. for any $X\leq X_0$ we have $f(X)\leq MX$.
Applying this to Eq.~\eqref{eq:59b}, we have the following unpacking of the above statement:
there exist universal constants $x_0, M$ such that if $\sqrt{\tau_B/\tau_{SB}}\leq X_0$ and $t\leq (\tau_{SB}/6)\text{ln}(X_0/\sqrt{\tau_B/\tau_{SB}})$, then $\|\rho_{\text{true}}(t) -\rho_{C}(t) \|_1 \le M\sqrt{\tau_B/\tau_{SB}}e^{6t/\tau_{SB}}$ holds. Note that for sufficiently small $\tau_B/\tau_{SB}$, a dimensionless quantity $t/\tau_{SB}$ can have an arbitrarily large constant value in this bound. The error can be made arbitrarily small at any fixed value of $t/\tau_{SB}$ just by choosing a sufficiently small $\tau_B/\tau_{SB}$.  This is what is commonly referred to as the weak coupling limit.
We emphasize that the constants $X_0, M$ involved in this statement are universal, i.e. independent of the parameters of the equation such as the Hamiltonian and the initial state. In that sense, our bound is uniform. It does depend on time $t$ as explicitly stated, so it is not uniform in time.\footnote{Note that in both the Davies-Lindblad and CGME cases the l.h.s. is bounded above by $2$ due to the positivity of the density matrices; this is not necessarily true in the Redfield case.} Thus, comparing these bounds, we observe that the Redfield bound is the tightest, which is natural given that it involves the fewest approximations. However, recall that the Redfield master equation is not CP. When comparing the bounds on the solutions of the two CP master equations, we observe that unlike the Davies-Lindblad equation the CGME does not involve the energy gap, which means that the CGME in principle has a much larger range of applicability, in particular for systems whose gap shrinks with growing system size.\footnote{By range of applicability we mean the range of parameters over which the approximation is accurate. This range can be defined formally in terms of an upper bound on the right-hand side (r.h.s.), e.g., $\sqrt{{\tau_B}/{\tau_{SB}}} e^{6t/\tau_{SB}} < \epsilon$ for some fixed $\epsilon<1$.}

While the equations and inequalities can be derived for any bath, we made some extra assumptions to prove the error bounds as presented above. Specifically, we assumed a Gaussian bath and the convergence of our series expansion for the Born error, as defined and discussed in detail in Sec.~\ref{Born}. For a non-Gaussian bath, extra time scales relating to higher correlation functions generally appear, and the error bound can be derived by a straightforward generalization of our approach. It will contain, besides $\tau_B$ and $\tau_{SB}$, additional terms involving those time scales.

%%%%%%%%%%%%%%%%
\section{(Re-)Derivation of the coarse-grained master equation}
\label{deriv1}

Our goal in this section is to present a compact and simplified derivation of the CGME found in Ref.~\cite{Majenz:2013qw}. This master equation is in Lindblad form but avoids making use of the RWA, which contributes the largest approximation error. Instead, the key idea is to introduce a coarse-graining (averaging) timescale, first exploited in this context in Refs.~\cite{PhysRevA.60.1944,Lidar200135}. In this section we consider the case of a time-independent system Hamiltonian, and later generalize our derivation to the time-dependent case, which includes adiabatic evolution as a special case. 

Along the way we also derive the Redfield and Davies-Lindblad master equations. 

\subsection{Setting up}
\label{sec:derivation}

We start from the total Hamiltonian given in Eq.~\eqref{eq:Htot}, and let $V=A\otimes B$. Recall that $A$ is dimensionless and $B$ has units of energy.
The first few steps in our derivation are standard~\cite{Breuer:book}. In the double system-bath interaction picture $V(t) = U_{\textrm{0}}^\dagger(t) V U_{\textrm{0}}(t)$ [where $U_{\textrm{0}}$ is the solution of $\dot{U}_{\textrm{0}} = -i(H\otimes I_\text{b} + I\otimes H_\text{b})U_{\textrm{0}}, ~ U_{0}(0) = 1$], and the state satisfies
\bes
\label{eq:16}
\begin{align}
\label{eq:16a}
{\dot\rho}_{\text{tot}}(t) &=-i[V(t),\rho_{\text{tot}}(t)]\\
 \rho_{\text{tot}}(t) &=\rho_{\text{tot}}(0) - i \int_0^t [V(\tau),\rho_{\text{tot}}(\tau)]d\tau .
\label{eq:16b}
\end{align}
\ees
Substituting this back into the original equation yields:
\begin{equation}
{\dot\rho}_{\text{tot}}(t) =-i[V(t),\rho_{\text{tot}}(0)]-[V(t),\int_0^t [V(\tau),\rho_{\text{tot}}(\tau)]]d\tau .
\end{equation}
The reduced system state in the interaction picture is defined as $\rho_{\textrm{true},I}(t) = \Tr_\text{b}[\rho_{\text{tot}}(t)]$, where the subscript ``true" denotes that this is the correct, completely approximation-free state. The corresponding true system state in the Schr\"odinger picture is $\rho_{\textrm{true}} = U_0(t) \rho_{\textrm{true},I} U^\dag_0(t)$. The Born approximation,
\beq
\rho_{\text{tot}}(t)= \rho_{\textrm{true,I}}(t)\otimes \rho_\text{b} + \delta\rho_{\text{tot}} ,
\eeq 
together with the shift of $B$ such that $\Tr[\rho_\text{b}B] =0$ allows one to write:
\begin{equation}
{\dot\rho}_{\textrm{true},I}(t) =-\Tr_\text{b}[A(t)\otimes B(t),\int_0^t [A(\tau)\otimes B(\tau),\rho_{\textrm{true,I}}(\tau) \otimes\rho_\text{b}]]d\tau  + \mathcal{E}_B \ ,
\label{eq:Born20}
\end{equation}
which can be understood as the definition of the Born approximation error $\mathcal{E}_B$:
\beq
\mathcal{E}_B \equiv {\dot\rho}_{\textrm{true},I}(t) + \Tr_\text{b}[A(t)\otimes B(t),\int_0^t [A(\tau)\otimes B(\tau),\rho_{\textrm{true,I}}(\tau) \otimes\rho_\text{b}]]d\tau , \quad \|\mathcal{E}_B\|_1 = O\left(\frac{\tau_B}{\tau_{SB}^2}\right)\ .
\label{eq:Born21}
\eeq
Henceforth we assume that the bath correlation function decays rapidly, namely that $\tau_B \ll \tau_{SB}$, where $\tau_{SB}$ and $\tau_B$ were defined in Eq.~\eqref{eq:T1tauB}.

Now, let $\rho_{B,I}$ denote the solution of the master equation after the Born approximation:
\begin{equation}
{\dot\rho}_{B,I}(t) =-\Tr_\text{b}[A(t)\otimes B(t),\int_0^t [A(\tau)\otimes B(\tau),\rho_B(\tau) \otimes\rho_\text{b}]]d\tau \ .
\end{equation}
Later we collect other errors as letters next to $B$. The error estimate given in Eq.~\eqref{eq:Born21} for $\|\mathcal{E}_B\|_1$ is derived in Sec.~\ref{Born} [Eq.~\eqref{eq:193b}]. 

We digress briefly to carefully explain the meaning of the norms and big-$O$ notation used here, since this is important for the remainder of this work. $\mathcal{E}_B$ is an operator acting on the system Hilbert space. 
The trace norm $\|A\|_1$ is:
\begin{equation}
    \|A\|_1 \equiv \Tr\sqrt{A^\dagger A} \label{treNorm}
\end{equation}
The operator norm $\|X\|$ is defined as follows:
\begin{equation}
    \|X\| = \max_i \lambda_{X,i} \label{opaNorm}
\end{equation}
where $\lambda_{X,i}$ are the eigenvalues of $|X|\equiv \sqrt{X^\dag X}$ (singular values of $X$) indexed by $i$.
The big-$O$ notation $\mathcal{E}=O(x)$ is taken for times such that $t/\tau_{SB} \leq M$ and $x\to 0$, i.e., there exist an $M$-dependent number $\epsilon_M$ and an $M$-dependent constant $C_M$  such that for any $x< \epsilon_M$ the error $\|\mathcal{E}\|_1 \leq C_M x$. 

It is natural to use the trace norm for density matrices. Indeed, if we want to find the deviation in an observable $O$ relative to the difference between two states, $\delta\rho = \rho_1-\rho_2$, then
\begin{equation}
   | \Tr O\delta \rho| \leq 2^n \|O\| \|\delta \rho\|, \quad  | \Tr O\delta \rho| \leq \|O\| \|\delta \rho\|_1\ .
\end{equation}
The second expression gives a tighter bound, if we manage to have the same bound on $\|\delta \rho\|_1$ as on $\|\delta \rho\|$. Fortunately, this will turn out to be the case in this work, and was already observed for the Markov error in Ref.~\cite{ABLZ:12-SI}. The key property used to prove the inequality above is submultiplicativity, valid for any unitarily invariant norm~\cite{Bhatia:book}:
\begin{equation}
     \|AB\|_1 \leq \|A\|\|B\|_1\ .
     \label{eq:submult}
\end{equation}
%Note that the first norm in the submultiplicativity inequality is the operator norm.

For simplicity, we assume the bath state to be stationary, such that $\mathcal{C} (t,t') = \Tr[\rho_B B(t) B(t')]$ is invariant with respect to shifts in time of both arguments.\footnote{The derivation can potentially be extended to the general case of a two-time correlation function, i.e., without assuming translational invariance; indeed the derivation in \cite{Majenz:2013qw} does not.} We can then replace $\mathcal{C} (t,t')$ by $\mathcal{C} (t) = \Tr[\rho_B B(t)B]$, as in Eq.~\eqref{eq:C(t)}. Recall that $\mathcal{C} ^*(t) = \mathcal{C} (-t)$. After expanding the commutators, one arrives at
\begin{equation}
{\dot\rho}_{B,I}(t) =\int_0^t \mathcal{C} (\tau-t) [A(t) \rho_{B,I}(\tau) A(\tau) - \rho_{B,I}(\tau) A(\tau) A(t)] d\tau +\textrm{h.c.} \equiv \int_0^t K^{B,2}_{t-\tau} (\rho_{B,I}(\tau))d\tau\ ,
\label{eq:rho_Born}
\end{equation}
where $K^{B,2}_{t-\tau}$ is a superoperator.
%Writing the latter equation as 
%\beq
%\dot{\rho}_{B,I} = \mc{L}\rho_{B,I} ,
%\label{eq:23}
%\eeq 
%and 
Taking the trace norm, we note that:
% independent of time:
\beq
\|\dot{\rho}_{B,I} \|_1 
%\leq \| \mc{L} \| 
\leq 4c_B \int_0^\infty |\mathcal{C} (t)| dt =4c_B/\tau_{SB} ,
\label{eq:|L|B}
\eeq
where we used our earlier choice of setting $\|A\| =1$, and the constant $c_B$ is chosen as an upper bound on $\|\rho_{B,I}\|_1$ (if this were a CP map, $c_B=1$ would hold). Under the condition $\forall t, ~ \|\rho_{\text{test}}(t)\|_1=1$ we have
\begin{equation}
   \left\| \int_0^t K^{B,2}_{t-\tau} (\rho_{\text{test}}(t))d\tau\right\|_1 \leq 4/\tau_{SB} \label{eq:|L|}\ .
\end{equation}

We next introduce the Markov approximation $\rho_{B,I}(\tau) \mapsto \rho_{B,I}(t)$:
\bes
\begin{align}
{\dot\rho}_{B,I}(t) &=\int_0^t \mathcal{C} (\tau-t) [A(t) \rho_{B,I}(t) A(\tau) - \rho_{B,I}(t) A(\tau) A(t)] d\tau +\textrm{h.c.} +\mathcal{E}_M, \quad \|\mathcal{E}_M\|_1=O\left(\frac{\tau_B}{\tau_{SB}^2}\right) \\
{\dot\rho}_{BM,I}(t) &=\int_0^t \mathcal{C} (\tau-t) [A(t) \rho_{BM,I}(t) A(\tau) - \rho_{BM,I}(t) A(\tau) A(t)] d\tau +\textrm{h.c.} =\mathcal{L}^{BM,I}_t(\rho_{BM,I}(t))
\label{RedStart}
\\ &\qquad\qquad\qquad\qquad\qquad\qquad\qquad\qquad\forall t, ~ \|\rho_{\text{test}}\|_1=1 : ~ \|\mathcal{L}^{BM,I}_t(\rho_{\text{test}})\|_1 \leq 4/\tau_{SB} \label{newLambda}
\end{align}
\ees
Equation~\eqref{RedStart} is the Redfield equation~\cite{Breuer:book}, 
which is notoriously non-CP  (though various fixes have been proposed \cite{Gaspard:1999aa,Whitney:2008aa}). The Markov approximation error $\mathcal{E}_M$ is of same order as the Born approximation error given in Eq.~\eqref{eq:Born21} (for more detail see Sec.~\ref{errSec}). 

We note that the errors on the r.h.s. are not generally additive: if we try to write a Markovian master equation for the true (approximation-free) evolution $\rho_{\text{true},I}$, then the error on the r.h.s. will, besides $ \mathcal{E}_B +\mathcal{E}_M$, contain a correction to the Markov error from the Born error:
\bes
\begin{align}
\label{errorAd}
    {\dot\rho}_{\text{true},I}(t) =\int_0^t \mathcal{C} (\tau-t) [A(t) \rho_{\text{true},I}(t) A(\tau) - \rho_{\text{true},I}(t) A(\tau) A(t)] d\tau +\textrm{h.c.} +\mathcal{E}_B + \mathcal{E}_M(\rho_{\text{true}}), \\ 
   \mathcal{E}_M(\rho_{\text{true}}) \ne \mathcal{E}_M
\end{align}
\ees
In this particular case, using methods from Section \ref{errSec} the difference between bounds on $\|\mathcal{E}_M(\rho_{\text{true}})\|_1$ and $\|\mathcal{E}_M\|_1$ can be found to be subleading in $\tau_B/\tau_{SB}$. Since we wish to present higher orders of the error, we do not attempt to collect errors as in Eq.~\eqref{errorAd}. Instead, our derivation will present a sequence of equations, e.g., $\rho_{\text{true}}, \rho_{B},\rho_{BM}$, and an error estimate for each step.

After using Eq.~\eqref{eq:A(t)}, i.e., in the frequency representation, Eq.~\eqref{RedStart} takes the following form: 
\begin{equation}
{\dot\rho}_{BM,I}(t) =\sum_{\omega, \omega'}\int_0^t d\tau \mathcal{C} (\tau-t)e^{-i(\omega t+\omega' \tau)}  (A_{\omega} \rho_{BM,I} A_{\omega'} - \rho_{BM,I} A_{\omega'} A_{\omega}) +\textrm{h.c.}  
\label{LindStart}
\end{equation}
We sometimes omit the explicit time dependence of $\rho$ on the r.h.s. since it is always $\rho(t)$ from hereon.

%we will omit the error term in the bulk of derivations, mentioning only the new errors that appear. We will collect all the errors in the final form of each equation.
\subsection{Redfield master equation}
\label{sec:Redfield-deriv}

Let us digress briefly in order to establish the form of the Redfield equation presented in Sec.~\ref{sec:Three}.
Rotating Eq.~\eqref{RedStart} back to the Schr\"odinger picture we obtain: 
\begin{equation}
{\dot\rho}_{BM}(t) =-i[H,\rho_{BM}(t)] +\int_0^t \mathcal{C} (\tau-t) [A \rho_{BM}(t) A(\tau-t) - \rho_{BM}(t)  A(\tau-t) A] d\tau +\textrm{h.c.} 
\end{equation}
Introducing a new integration variable $t' = t- \tau$:
\begin{equation}
\label{pureRedfield}
{\dot\rho}_{BM}(t) =-i[H,\rho_{BM}(t)] +\int_0^t \mathcal{C} (-t') [A \rho_{BM}(t) A(-t') - \rho_{BM}(t)  A(-t') A] dt' +\textrm{h.c.} 
\end{equation}
We extend the upper integration limit from $t$ to $\infty$, which introduces an additional error:
\bes
    \label{eq:31}
    \begin{align}
    \label{eq:31a}
    {\dot\rho}_{BM}(t) &=-i[H,\rho_{BM}(t)] +\int_0^\infty \mathcal{C} (-t') [A \rho_{BM}(t) A(-t') - \rho_{BM}(t)  A(-t') A] dt' +\textrm{h.c.} + \mathcal{E}_l\\
    \|\mathcal{E}_l\|_1 &\leq 4c_{BM}\int_t^{\infty} |\mathcal{C} (t')|dt' \leq \frac{4c_{BM}}{t}\int_t^{T}t' |\mathcal{C} (t')|dt' + 4c_{BM}\int_T^{\infty} |\mathcal{C} (t')|dt' =O\left(\frac{\tau_B}{\tau_{SB}^2} +\frac{\epsilon_T}{\tau_{SB}} \right) ,
    \label{eq:31c}
\end{align}
\ees
where for the first summand in Eq.~\eqref{eq:31c} we assumed a lower cutoff on the evolution time $t>\tau_{SB} \delta >0$ where $\delta$ is a small constant number which we absorbed in the big-$O$ notation [i.e., $1/t = O(1/\tau_{SB})$]. 
We analyze the transients happening for $t<\tau_{SB} \delta $ in Sec.~\ref{errLimits}. The constant $c_{BM}$ is chosen as an upper bound on $\|\rho_{BM,I}\|_1$ (again, if this were a CP map, $c_{BM}=1$ would hold). We will see in Sec.~\ref{errSec} that $c_{BM} =O(1)$ in terms of our big-$O$ notation, specified in Eq.~\eqref{eq:bigO}.  For the second summand, we introduced a new bath parameter 
\beq
\epsilon_T \equiv \tau_{SB} \int_T^\infty |\mathcal{C} (t)|dt \ .
\label{eq:eps_T}
\eeq 
For most physically motivated choices of bath correlation functions $\int_0^\infty t|\mathcal{C} (t)|dt$ converges, so we may replace the total evolution time $T\to\infty$ and then $\epsilon_T=0$. But, in case $\tau_B/\tau_{SB}$ diverges as a function of $T$ [Eq.~\eqref{eq:T1tauB-b}], the above bound should be used. 

The big-$O$ notation in terms of all of these parameters $\tau_{SB}\delta \leq t \leq \tau_{SB}M, \epsilon_T, \tau_B/\tau_{SB}, c_{BM}$ should be understood as follows: there exist constants $C(M,\delta), \epsilon(M,\delta)$ such that for $\tau_B/\tau_{SB} + \epsilon_T \leq \epsilon(M,\delta)$
\begin{equation}
    \|\mathcal{E}_l\|_1 \leq C(M,\delta, \epsilon)\left(\frac{\tau_B}{\tau_{SB}^2} +\frac{\epsilon_T}{\tau_{SB}} \right)\ .
    \label{eq:bigO}
\end{equation}
As seen in the derivation above, $C(\delta,\epsilon) = \max(4c_{BM}/\delta,4c_{BM})$ in fact does not depend on $\epsilon$. 

Note that the change of integration limit $t\to \infty$ is optional: the Redfield equation can be used without it, but the change does simplify its form. Eq.~\eqref{eq:31a} without the error gives:
\begin{equation}
{\dot\rho}_{BMl}(t) =-i[H,\rho_{BMl}(t)] +\int_0^{\infty} \mathcal{C} (-t') [A \rho_{BMl}(t) A(-t') - \rho_{BMl}(t)  A(-t') A] dt' +\textrm{h.c.}
\label{eq:25}
\end{equation}
Introducing the ``filtered" operator $A_f = \int_0^{\infty} \mathcal{C} (-t')A(-t') dt'$ immediately yields Eq.~\eqref{eq:Red}, so that in fact $\rho_{BMl} = \rho_R$.
%\begin{equation}
%{\dot\rho}(t) =-i[H(t),\rho(t)] +(A \rho A_f - \rho  A_f A +\textrm{h.c.})
%\end{equation}
Replacing $A(-t')$  in $A_f$ by its Fourier decomposition [Eq.~\eqref{eq:A(t)}] gives Eq.~\eqref{eq:Af}, for which 
%Extracting the time evolution
%\begin{equation}
%    A_f = \sum_\omega \int_0^{\infty} dx \mathcal{C} (x) e^{-i\omega x}A_\omega
%\end{equation}
the integral can be expressed in multiple ways:
\begin{equation}
    \int_0^{\infty} \mathcal{C} (-t') e^{i\omega t'}dt'  ~ ~ \stackrel{\mathclap{\normalfont \mbox{\tiny{$\theta$=-$t'$}}}}{=} ~  ~  \frac{1}{2}\int_{-\infty}^\infty (\mathcal{C} (\theta) - \mathcal{C} (\theta) \sgn(\theta)) e^{-i\omega \theta}d\theta = \frac{1}{2}  \gamma(-\omega)  -iS(- \omega) = f^*(-\o)\ .
    \label{eq:35}
\end{equation}

\subsection{Davies-Lindblad master equation}
\label{app:derivAll}

As a second digression let us establish the form of the Davies-Lindblad equation presented in Sec.~\ref{sec:Three}.
We start from the form given in Eq.~\eqref{LindStart}, and use 
$\o t +\o' \tau = \o(t-\tau)+(\o+\o')\tau$ and a change of variables to $t'=\tau-t$ to write it as
\begin{equation}
{\dot\rho}_{BM,I}(t) =\sum_{\omega, \omega'}\int_{-t}^0 dt' \mathcal{C} (t')e^{i[\omega t'-(\o+\omega')(t+t')]}  (A_{\omega} \rho_{BM,I} A_{\omega'} - \rho_{BM,I} A_{\omega'} A_{\omega}) +\textrm{h.c.}
\label{eq:D1}
\end{equation} 
Next we apply the RWA, in which one considers the term $e^{-i(\o+\o')(t+t')}$ to be rapidly oscillating and hence self-cancelling, so the non-negligible contribution is only due to the terms with $\omega'= -\omega$. This allows us to write:
% (see Sec.~\ref{sec:t-ave-err}): 
%
\begin{equation}
{\dot\rho}_{D,I}(t) =\sum_{\omega}\int_{-\infty}^0 dt' \mathcal{C} (t')e^{i\omega t'}  (A_{\omega} \rho_{D,I} A_{-\omega} - \rho_{D,I} A_{-\omega} A_{\omega}) +\textrm{h.c.} , 
\label{eq:D2}
\end{equation}
where we replaced $\rho_{BM,I}$ by $\rho_{D,I}$ and $\o'$ by $-\o$.
The error $\mathcal{E}\sim O(1/\tau_{SB})$ between the right-hand sides of Eqs.~\eqref{eq:D1} and~\eqref{eq:D2} is large but rapidly oscillating, so the solutions $\rho_{BM,I}(t)$ and $\rho_{D,I}(t) $ remain close (see Sec.~\ref{rangesDiscuss}). 

We now transform back to the Schr\"odinger picture (recall that this requires multiplying each $A_\o$ by $e^{i\omega t}$, but these cancel now due to the corresponding $A_{-\o}$):
\begin{align}
{\dot\rho}_D(t) 
%&=-i[H ,\rho_D] +\sum_{\omega}\int_{-\infty}^0 dt' \mathcal{C} (t')e^{i\omega t'}  (A_{\omega} \rho_D A_{-\omega} - \rho_D A_{-\omega} A_{\omega}) +\textrm{h.c.} \\
 &=-i[H ,\rho_D] +\sum_{\omega}\int_{-\infty}^0  \left(\mathcal{C} (t')e^{i\omega t'}  (A_{\omega} \rho_D A_{\omega}^\dag - \rho_D A_{\omega}^\dag A_{\omega}) +  \mathcal{C} ^*(t')e^{-i\omega t'}  (A_{\omega} \rho_D A_{\omega}^\dag  -   A_{\omega}^\dag A_{\omega}\rho_D)\right)dt'  .
\label{eq:D3}
 \end{align}
%For the last integral it
It is convenient to separate the real and imaginary parts, using $\mathcal{C} ^*(t) = \mathcal{C} (-t)$:
\begin{subequations}
\label{eq:32}
\begin{align}
\int_{-\infty}^0 dt' \mathcal{C} (t')e^{i\omega t'} &= \frac12 \g(\o)  -\frac12\int_{-\infty}^\infty dt' \sgn(t')\mathcal{C} (t')e^{i\omega t'}\\
\int_{-\infty}^0 dt' \mathcal{C} ^*(t')e^{-i\omega t'} &= \frac12 \g(\o) +\frac12\int_{-\infty}^\infty dt' \sgn(t')\mathcal{C} (t')e^{i\omega t'}\ .
\end{align}
\end{subequations} 
Using the decomposition given by Eq.~\eqref{eq:32} in Eq.~\eqref{eq:D3} directly yields
%becomes:
%%
%\begin{equation}
%{\dot\rho}(t) =-i[H +H_{\mathrm{LS}} ,\rho] +\sum_{\omega} \gamma(\omega)  (A_{\omega} \rho A_{-\omega} - \frac{1}{2}\{\rho, A_{-\omega} A_{\omega} \}) \ ,
%\end{equation}
%where we defined the Lamb shift as:
%\begin{equation}
%    H_{\mathrm{LS}} = \sum_\omega\frac{i}{2}\int_{-\infty}^\infty dt' \sgn(t')\mathcal{C} (t')e^{i\omega t'} A_{-\omega} A_{\omega}\ .
%\end{equation}
%%
%This is 
the form of the Davies-Lindblad master equation presented in Eq.~\eqref{eq:Davies-Lind}. We note that versions of the Davies-Lindblad master equation that allow a time-dependent drive have been derived before (see, e.g., Refs.~\cite{PhysRevA.73.052311,ABLZ:12-SI}), but such derivations must always assume that the driving is adiabatic.

%%%%%%%%%%%%

\subsection{Coarse-grained master equation}
\label{sec:CGME}

Let us now return to our main goal in this section, the derivation of the CGME. 
At this point we deviate from the standard derivations and instead follow the coarse-graining approach~\cite{PhysRevA.60.1944,Lidar200135,Majenz:2013qw}. There are two main steps: (i) Time-averaging of the state, which is done in lieu of the RWA (and reduces to the RWA in the limit of large time-averaging scale; see Appendix~\ref{app:D}), (ii) removal of a small part of the integration domain, which restores complete positivity (new in this work). We describe each in turn, along with the associated error.

\subsubsection{Time-averaging of the state}
Before we leave the interaction picture, we coarse-grain, or time-average over $T_a$ such that $\tau_B \ll T_a \ll \tau_{SB}$. Specifically, we consider another equation satisfied by a new state $\rho_{BMT,I}(t)$, where the time-averaging operation $\frac{1}{T_a}\int_{t-T_a/2}^{t+T_a/2}dt'$ is applied to the r.h.s. of Eq.~\eqref{LindStart} taken at time $t'$, except that the state is still at time $t$ on the l.h.s.:
\begin{equation}
   {\dot\rho}_{BMT,I}(t) =\sum_{\omega, \omega'}\frac{1}{T_a}\int_{t-T_a/2}^{t+T_a/2} dt' \int_0^{t'} d\tau \mathcal{C} (\tau-t')e^{-i(\omega t'+\omega' \tau)} (A_{\omega} \rho_{BMT,I}(t) A_{\omega'} - \rho_{BMT,I}(t) A_{\omega'} A_{\omega}) +\textrm{h.c.} 
   \label{eq:22} 
\end{equation}
Alas, the associated error $\mathcal{E}_T$ has $\|\mathcal{E}_T\|_1 =O(1/\tau_{SB})$, which is not small. The reason is that the fast-rotating terms may have a large derivative. What can be proven is that the solutions of Eqs.~\eqref{LindStart} and~\eqref{eq:22} remain close: 
\beq
\|\rho_{BM,I}(t) -\rho_{BMT,I}(t)\|_1=O(T_a/\tau_{SB})\ ,
\label{eq:E_T}
\eeq 
where $t\leq c\tau_{SB}$, and the r.h.s. $O(T_a/\tau_{SB})$ does not depend on $t$, only on $c$, which is considered a constant for the purposes of big-$O$ notation. 
The proof is given in Sec.~\ref{sec:t-ave-err2}, in Lemma \ref{LemTave} (this analysis first appeared in Ref.~\cite{Mozgunov:2016aa}). 
The error we track is now the error of the solution, not its derivative. 

We use the following fact about time-local (Markovian) and time-nonlocal (non-Markovian) differential equations:
%%%%%
\begin{mylemma}
\label{factDif}
Assume that
\begin{equation}
    \dot{x}(t) =\mathcal{L}(x(t)) +\mathcal{E}, \quad \dot{y}(t) = \mathcal{L}(y(t)), \quad x(0) =y(0) \ ,
\end{equation}
where $\mathcal{L}$ is a linear superoperator and 
$\Lambda$ is a positive constant such that $\sup_{\tau,x:\|x\|_1 = 1}\|\mathcal{L}_\tau(x)\|_1\leq \Lambda$. 
Or, assume that
\begin{equation}
    \dot{x}(t) =\int_0^tK_{t-\tau}(x(\tau))d\tau +\mathcal{E}, \quad \dot{y}(t) = \int_0^tK_{t-\tau}(y(\tau))d\tau,\quad x(0) =y(0) \ ,
\end{equation}
where $K_{t-\tau}(x)$ is a linear superoperator and $\Lambda$ is a positive constant such that $\sup_{t,x(\tau):\|x(\tau)\|_1 = 1}\int_0^t \|K_{t-\tau}(x(\tau))\|_1 d\tau \leq \Lambda$. 
Then:
\beq
\forall t\leq \frac{c}{ \Lambda}: ~ \|x(t) -y(t) \|_1 \leq (e^c-1)\frac{\|\mathcal{E}\|_1}{\Lambda}\ .
\eeq
\end{mylemma}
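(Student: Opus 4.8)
The plan is to treat the two cases uniformly by deriving a Grönwall-type integral inequality for the quantity $\Delta(t) \equiv \|x(t) - y(t)\|_1$, and then to solve that inequality explicitly to produce the factor $(e^c-1)/\Lambda$. First I would subtract the two differential equations and integrate from $0$ to $t$, using $x(0) = y(0)$, to write $x(t) - y(t)$ as an integral of the difference of the superoperator terms plus the accumulated error term $t\,\mathcal{E}$ (in the Markovian case) or the appropriate double integral (in the non-Markovian case). In the Markovian case this gives $x(t) - y(t) = \int_0^t \mathcal{L}_\tau\big(x(\tau) - y(\tau)\big)\,d\tau + t\,\mathcal{E}$; taking trace norms, using the triangle inequality for the integral, linearity of $\mathcal{L}$, and the bound $\|\mathcal{L}_\tau(z)\|_1 \le \Lambda \|z\|_1$ (which follows from the stated supremum bound by homogeneity), yields
\begin{equation}
\Delta(t) \le \Lambda \int_0^t \Delta(\tau)\,d\tau + t\,\|\mathcal{E}\|_1\ .
\end{equation}

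Next I would handle the non-Markovian case analogously: subtracting the two time-nonlocal equations and integrating gives $x(t)-y(t) = \int_0^t \int_0^s K_{s-\tau}\big(x(\tau)-y(\tau)\big)\,d\tau\,ds + t\,\mathcal{E}$, and then swapping the order of the $s$ and $\tau$ integrations (Fubini) together with the hypothesis $\sup_t \int_0^t \|K_{t-\tau}(z(\tau))\|_1\,d\tau \le \Lambda$ when $\|z(\tau)\|_1 = 1$ — again promoted to a bound proportional to $\max_{\tau \le t}\|z(\tau)\|_1$ by homogeneity — produces exactly the same integral inequality $\Delta(t) \le \Lambda \int_0^t \Delta(\tau)\,d\tau + t\|\mathcal{E}\|_1$, at least after bounding $\Delta(\tau)$ inside the integral by its running maximum. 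So both cases reduce to one scalar inequality.

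Finally I would solve the inequality. Let $u(t) = \int_0^t \Delta(\tau)\,d\tau$. Then $\dot{u}(t) = \Delta(t) \le \Lambda u(t) + t\|\mathcal{E}\|_1$, a linear first-order differential inequality with $u(0)=0$; multiplying by the integrating factor $e^{-\Lambda t}$ and integrating gives $u(t) \le \|\mathcal{E}\|_1 \int_0^t s\, e^{\Lambda(t-s)}\,ds$, and hence $\Delta(t) \le \Lambda u(t) + t\|\mathcal{E}\|_1 \le \|\mathcal{E}\|_1\big(e^{\Lambda t} - 1\big)/\Lambda$ after evaluating the elementary integral and simplifying (the $t$ terms cancel neatly). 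Substituting $t \le c/\Lambda$ and using monotonicity of $e^{\Lambda t} - 1$ gives $\Delta(t) \le (e^c - 1)\|\mathcal{E}\|_1/\Lambda$, as claimed.

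The main obstacle I anticipate is the bookkeeping in the non-Markovian case: controlling $\Delta(\tau)$ under the double integral requires care, since after Fubini one naively gets $\int_0^t \|K_{t-\tau}(x(\tau)-y(\tau))\|_1 d\tau$ with a $\tau$-dependent argument rather than a unit-norm one, so I would either work with the running supremum $\bar\Delta(t) = \sup_{\tau\le t}\Delta(\tau)$ throughout (which is monotone and still satisfies the same integral inequality, since the right-hand side is monotone in $t$) or invoke a standard generalized Grönwall lemma for kernels. One should also confirm that the constant $\Lambda$ in the time-averaging application is precisely the $4/\tau_{SB}$ appearing in Eq.~\eqref{newLambda}, so that with $c$ a fixed constant and $t \le c\tau_{SB}/4$ the conclusion matches the $O(T_a/\tau_{SB})$ claim in Eq.~\eqref{eq:E_T} once $\|\mathcal{E}\|_1 = \|\mathcal{E}_T\|_1 = O(1/\tau_{SB})$ times $T_a$-dependent factors is plugged in; but that is an application, not part of the lemma's proof.
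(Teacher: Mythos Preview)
Your proposal is correct and follows essentially the same route as the paper: both reduce to the scalar Grönwall-type inequality $\Delta(t) \le \Lambda\int_0^t \Delta(\tau)\,d\tau + t\|\mathcal{E}\|_1$ and then solve it (the paper by passing to the saturating ODE $\dot b_1 = \Lambda b_1 + \|\mathcal{E}\|_1$, you by the integrating-factor computation on $u = \int_0^t \Delta$, which is equivalent). For the non-Markovian case, your running-supremum trick is exactly what the paper does as well --- it argues that the saturating bound $b_K$ is monotone and pulls $b_K(\tau)\le b_K(t)$ outside the $\tau$-integral before applying the integral hypothesis on $K$.
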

%%%%%
For the proof see Sec.~\ref{sec:t-ave-err}. 
We can take $\Lambda = 4/\tau_{SB}$ because of Eq.~(\ref{eq:|L|}, \ref{newLambda}). Using this, we rewrite the result of the Lemma as follows, for brevity: if $t \leq c/\Lambda = c\tau_{SB}/4$, then $\|x-y\|_1 = O(\tau_{SB} \|\mathcal{E}\|_1)$. Thus, using the previously noted results that $\|\mathcal{E}_{B}\|_1,\|\mathcal{E}_{M}\|_1=O\left({\tau_B}/{\tau_{SB}^2}\right)$:
\bes
\begin{align}
    \|\rho_{\text{true},I}(t) -\rho_{BMT,I}(t)\|_1&\leq \|\rho_{\text{true},I}(t) -\rho_{B,I}(t)\|_1 +\|\rho_{B,I}(t) -\rho_{BM,I}(t)\|_1 +\|\rho_{BM,I}(t) -\rho_{BMT,I}(t)\|_1 \\
    &=  O(\tau_{SB}\|\mathcal{E}_{B}\|_1)+O(\tau_{SB}\|\mathcal{E}_{M}\|_1) +O(T_a/\tau_{SB})  = O\left(\frac{\tau_B +T_a}{\tau_{SB}}\right) .  
    \label{solErr}
\end{align}
\ees
%%%%
\subsubsection{Neglecting part of the integration domain to regain complete positivity}
Upon changing variables to $s=t'-t$, and renaming $s$ as $t'$, Eq.~\eqref{eq:22} becomes:
\begin{equation}
{\dot\rho}_{BMT,I}(t) =\sum_{\omega, \omega'}\frac{1}{T_a}\int_{-T_a/2}^{T_a/2} dt' \int_0^{t+t' } d\tau \mathcal{C} (\tau-t - t')e^{-i\omega (t+t')-i\omega' \tau}  (A_{\omega} \rho_{BMT,I} A_{\omega'} - \rho_{BMT,I} A_{\omega'} A_{\omega}) +\textrm{h.c.}
\end{equation}
We rotate out of the system interaction picture into the system Schr\"odinger picture by defining the coarse-grained state $\rho_{BMT} =U(t)\rho_{BMT,I}  U^\dag(t) = e^{-iHt}\rho_{BMT,I} e^{iHt}$, leaving only the bath interaction picture, by multiplying each summand by a $e^{i(\omega+\omega')t}$ factor:\footnote{ 
Note that the transformation back from the interaction picture is $U\dot{\rho}_IU^\dag$. Thus, e.g., $A_{\omega} \rho_I A_{\omega'} \mapsto U A_{\omega} \rho_I A_{\omega'}U^\dagger = U A_{\omega} U^\dagger \rho U A_{\omega'} U^\dagger$, where $U = \sum_k e^{-iE_kt}\Pi_k$. Using Eqs.~\eqref{eq:A_om} and \eqref{eq:A(t)}, we have $U A_{\omega} U^\dagger = \sum_{kl} \sum_{mn: E_{mn} = \omega}e^{-i (E_k-E_{l}) t}\Pi_k \Pi_n A  \Pi_m \Pi_{l} = \sum_{mn: E_{mn}=\o} e^{-i(E_n-E_m) t} \Pi_n A\Pi_m = e^{i\omega t} A_\omega$, with the last equality holding due to the fact that $E_n-E_{m}$ is fixed at $\omega$ for all combinations of $m$ and $n$. The claim now follows.
}
\begin{equation}
{\dot\rho}_{BMT}(t) =-i[H,\rho_{BMT}] + \sum_{\omega, \omega'}\frac{1}{T_a}\int_{-T_a/2}^{T_a/2}dt' \int_0^{t+t'} d\tau \mathcal{C} (\tau - t - t')e^{-i\omega t'-i\omega' (\tau-t)}   (A_{\omega} \rho_{BMT} A_{\omega'} - \rho_{BMT} A_{\omega'} A_{\omega}) +\textrm{h.c.}
\label{eq:27S}
\end{equation}
Next, we change variables to $\tau' = \tau-t$, so that
\begin{equation}
{\dot\rho}_{BMT}(t) =-i[H,\rho_{BMT}] + \sum_{\omega, \omega'}\frac{1}{T_a}\int_{-T_a/2}^{T_a/2} dt' \int_{-t}^{t'} d\tau'\mathcal{C} (\tau' -t')e^{-i(\omega t'+\omega' \tau')}   (A_{\omega} \rho_{BMT} A_{\omega'} - \rho_{BMT} A_{\omega'} A_{\omega}) +\textrm{h.c.}
\label{eq:11}
\end{equation}
%
%%%%
\begin{figure}[t]
\includegraphics[width=0.8\linewidth]{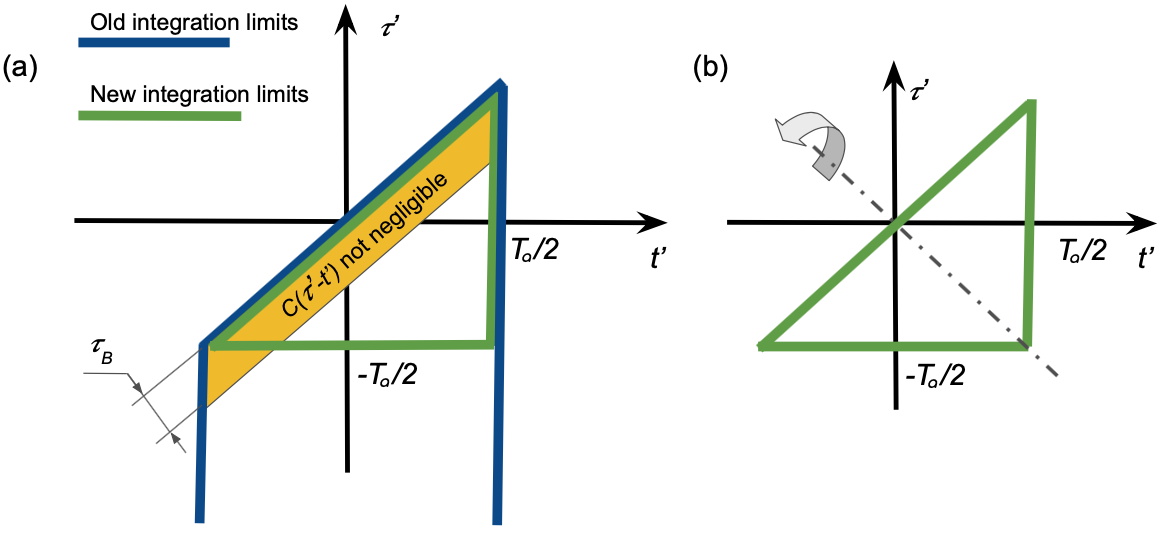}
\caption{(a) The box outside the green line is neglected in the integration, which restores complete positivity. The area of the non-negligible part (yellow) is $\sim \tau_B T_a$ while the area we nevertheless neglect is $\sim\tau_B^2$. (b) Illustration of the interchange of the order of integration limits used to prove Eq.~\eqref{eq:29b}.}
\label{cut1}
\end{figure}
%%%%
%
We define $\mathcal{L}^{BMT}_t(\rho_{BMT})$ to be the superoperator on the r.h.s. The key trick now is to replace $-t$ by $-T_a/2$, which lets us write (note that Lemma \ref{factDif} will work for $\mathcal{E}_p$ either in the initial or in the resulting equation, like we do here):
\begin{subequations}
\begin{align}
\label{eq:26}
{\dot\rho}_C(t) &=\mathcal{L}^{BMT}_t(\rho_C) - \mathcal{E}_p=-i[H,\rho_C] + \sum_{\omega, \omega'}
x_{\omega\omega'}(A_{\omega} \rho_C A_{\omega'} - \rho_C A_{\omega'} A_{\omega}) +\textrm{h.c.}  \\
\label{eq:xww'}
x_{\omega\omega'} &\equiv \frac{1}{T_a}\int_{-T_a/2}^{T_a/2} dt' \int_{-T_a/2}^{t'} d\tau'\ \mathcal{C} (\tau'-t')e^{-i(\omega t'+\omega' \tau')} \ ,
\end{align}
\end{subequations} 
This replacement cuts a corner of area $\sim \tau_B^2$ out of the integration domain with non-vanishing $\mathcal{C} (t)$, itself of area $\sim T_a \tau_B$, as illustrated in Fig.~\ref{cut1}(a) for $t>T_a/2$ (times $t<T_a/2$ introduce a transient, as discussed in more detail in Sec.~\ref{sec:t-ave-err2}).
Thus, the corner being cut for $t>T_a/2$ is just a $\tau_B/T_a$ fraction of the whole integral in Eq.~\eqref{eq:xww'} (ignoring factors of $\sqrt{2}$), which [using Eq.~\eqref{eq:T1tauB-a}] is itself upper bounded by $1/\tau_{SB}$ in absolute value. Thus, the error introduced by the corner removal is 
\beq
\|\mathcal{E}_p\|_1 = O(\tau_B/(T_a\tau_{SB}))\ \  \mathrm{ for }\ \ t>T_a/2
\label{eq:Ep}
\eeq 
[for $t<T_a/2$ it can be $O(1/\tau_{SB})$ for a short transient]. The constant in big-$O$ notation here is independent of $\tau_{SB},\tau_B, T_a$. As we show below, $\mathcal{E}_p$ is the cost of recovering complete positivity, in the sense that the master equation after subtraction of $\mathcal{E}_p$ is in canonical Lindblad form.
%The norm is unchanged by transforming to the interaction picture and back, so we can use the same fact as before to bound the error in solutions. 

Eq.~\eqref{eq:26} is the coarse-grained master equation from Ref.~\cite{Majenz:2013qw}. 
To see this we need some properties of the coefficient $x_{\omega\omega'}$.
In particular, we need the properties
\bes
\begin{align}
\label{eq:29a}
x_{\omega\omega'}^* &=\frac{1}{T_a}\int_{-T_a/2}^{T_a/2} dt' \int_{t'}^{T_a/2} d\tau'\ \mathcal{C} (\tau'-t')e^{-i(\omega t'+\omega' \tau')}  \\
x_{\omega\omega'} &=x_{-\omega'-\omega}  \ ,
\label{eq:29b}
\end{align}
\ees
with the first following from $\mathcal{C} ^*(t) = \mathcal{C} (-t)$, and the second from interchanging integration limits after a reflection w.r.t. the $\tau' = -t'$ axis, as illustrated in Fig.~\ref{cut1}(b).
This allows us to reshuffle the $\textrm{h.c.}$ part of Eq.~\eqref{eq:26}, taking the $-\omega', -\omega$ term in the sum for the $\omega, \omega'$ term in the original part. The resulting equation is:
\begin{equation}
{\dot\rho}_C(t) =-i[H,\rho_C] + \sum_{\omega, \omega'}x_{\omega\omega'} (A_{\omega} \rho_C A_{\omega'} - \rho_C A_{\omega'} A_{\omega}) +x_{\omega\omega'}^* (A_{\omega} \rho_C A_{\omega'} - A_{\omega'} A_{\omega}\rho_C )\  .
\end{equation}
The $\Im x_{\o\o'}$ part cancels for $A_{\omega} \rho A_{\omega'}$ and sends $A_{\omega'} A_{\omega}$ into the Lamb shift:
\beq
H_{\mathrm{LS}} \equiv -\sum_{\o\o'}\Im(x_{\o\o'})A_{\o'}A_{\o} =\frac{i}{2T_a}\int_{-T_a/2}^{T_a/2}dt_1\int_{-T_a/2}^{T_a/2}  dt_2\ \sgn(t_1-t_2)\mathcal{C} (t_2-t_1)  A(t_2) A(t_1)\ ,
\label{eq:HLS1}
\eeq
where to get the second form we used $\Im(x_{\o\o'}) = \frac{1}{2i}(x_{\o\o'}-x^*_{\o\o'})$, combined Eqs.~\eqref{eq:xww'}, \eqref{eq:29a} into a single integral using the $\sgn(t_1-t_2)$ function, and used Eq.~\eqref{eq:A(t)}. The Lamb shift can be further simplified as shown in Appendix~\ref{app:LS-simp}. The simplified form is the one we presented as a part of our main result [Eq.~\eqref{eq:HLS-CG}] in Sec.~\ref{sec:Three}.

We define $\gamma_{\omega\omega'} \equiv x_{\omega\omega'}^* +x_{\omega\omega'} =2\Re x_{\omega\omega'}$ and obtain:
\begin{equation}
\gamma_{\omega\omega'} = \frac{1}{T_a}\int_{-T_a/2}^{T_a/2}dt' \int_{-T_a/2}^{T_a/2} d\tau' \mathcal{C} (\tau' -t')e^{-i(\omega t'+\omega' \tau')}  \ .
\label{eq:18}
\end{equation}
Therefore 
\begin{equation}
{\dot\rho_C}(t) =-i[H +H_{\mathrm{LS}},\rho_C] + \sum_{\omega, \omega'}\gamma_{\omega\omega'} (A_{\omega} \rho_C A_{\omega'} - \frac{1}{2}\left\{ \rho_C , A_{\omega'} A_{\omega}\right\} ) \ ,
\label{eq:CGME}
\end{equation}
which is Eq.~(50) from~\cite{Majenz:2013qw} up to a shift in the center of averaging and changing variables $A_{\omega'} \to A_{\omega'}^\dag, ~ \omega' \to -\omega'$. 

\subsubsection{Complete positivity}

Complete positivity is not immediately apparent from Eq.~\eqref{eq:CGME}. Let us demonstrate it next.
%\begin{equation}
%\gamma_{\omega\omega'} =\frac{1}{T_a}\int_{-T_a/2}^{T_a/2}dt' \int_{-T_a/2}^{T_a/2} d\tau' \mathcal{C} (t'-\tau')e^{i\omega t'+i\omega' \tau'}  
%\end{equation}
Substituting $\mathcal{C} (t) = \frac{1}{2\pi}\int_{-\infty}^{\infty} e^{-i\epsilon t} \gamma(\epsilon) d\epsilon$  [Eq.~\eqref{eq:C(t)}], where $\gamma(\epsilon)\geq 0$ (see Appendix~\ref{app:A}), into Eq.~\eqref{eq:18}, we have:
\begin{equation}
\gamma_{\omega\omega'} =\int_{-\infty}^{\infty}d\epsilon \frac{\gamma(\epsilon)}{2\pi T_a}\int_{-T_a/2}^{T_a/2}dt' \int_{-T_a/2}^{T_a/2} d\tau' e^{-i\epsilon(\tau'-t')}e^{-i\omega t'-i\omega' \tau'}  = \int_{-\infty}^{\infty}d\epsilon f(\epsilon,\omega) f^*(\epsilon,-\omega')\ ,
\label{eq:34}
\end{equation}
where 
\begin{equation}
    f(\epsilon,\omega) \equiv 
    %\frac{1}{i}
    \sqrt{\frac{\gamma(\epsilon)}{2\pi T_a}}\int_{-T_a/2}^{T_a/2}dt e^{i(\epsilon-\omega)t} 
    %= \sqrt{\frac{\gamma(\epsilon)T_a}{2\pi}} \sinc[T_a(\epsilon+\omega)/2]\ .
    \label{eq:f}
\end{equation}
is a filter function whose integrated form is given in Eq.~\eqref{eq:newLindops}.
%We now look at the whole equation:
%\begin{equation}
%{\dot\rho}(t) =-i[H +H_{\mathrm{LS}},\rho] + \sum_{\omega, \omega'}\gamma_{\omega\omega'} (A_{\omega} \rho A_{\omega'} - \frac{1}{2}\left\{ \rho , A_{\omega'} A_{\omega}\right\} )
%\end{equation}
By using the above decomposition for $\gamma_{\omega\omega'}$ in Eq.~\eqref{eq:CGME}, we obtain the manifestly CP Lindblad form given in Eq.~\eqref{eq:Lind}.
%\begin{equation}
%{\dot\rho}(t) =-i[H +H_{\mathrm{LS}},\rho] + \int_{-\infty}^{\infty}d\epsilon \left(A_{\epsilon} \rho A_{\epsilon}^\dag - \frac{1}{2}\left\{ \rho , A_{\epsilon}^\dag A_{\epsilon}\right\} \right)\ ,
%\label{eq:Lind}
%\end{equation}
%where we defined the new Lindblad operators
%\begin{equation}
%    A_\epsilon \equiv \sum_\omega f(\epsilon,\omega) A_\omega = \sum_\omega \sqrt{\frac{\gamma(\epsilon)T_a}{2\pi}} \sinc[T_a(\epsilon+\omega)/2] A_\omega\ .
%    \label{eq:newLindops}
%\end{equation}
%This form of the Lindblad equation has one unusual feature: there is a continuous family of terms labeled by $\epsilon$.

Note further that it was shown in Ref.~\cite{Majenz:2013qw} that the Davies-Lindblad equation [Eq.~\eqref{eq:Davies-Lind}] arises as the $T_a \to \infty$ limit of the CGME. Similar ideas have appeared in \cite{Benatti_2009,Benatti_2010}. We provide this proof for completeness in Appendix~\ref{app:D}. This derivation has the advantage that it arrives at the Davies-Lindblad equation without invoking the (uncontrolled) RWA, and instead shows that the Davies-Lindblad equation is the limit of infinite coarse graining time of the CGME. One of our new results is to show how the error of the Davies-Lindblad equation can be controlled for a sufficiently large (but not infinite) coarse graining time; see Sec.~\ref{DavSec}.

%%%%%%%%%%%%%

\subsection{Ranges of applicability of the three master equations}
\label{rangesDiscuss}

We now discuss the ranges of applicability of the master equations derived above in more rigor than in Sec.~\ref{sec:err-bounds}, while postponing the derivations to Sec.~\ref{errSec}. Recall that the ranges are dependent on the timescales $\tau_{SB}$ and $\tau_B$ defined in Eq.~\eqref{eq:T1tauB}.

Errors are incurred from different approximations made along the way. The very first approximation made in the derivation of all three master equations is the Born approximation (see Sec.~\ref{sec:derivation}). Unlike the other approximations, we do not prove a rigorous error bound for the Born approximation error; we just provide a bound on the first order contribution to this error. However, under the assumption that the error converges in the first place, the Born error will turn out to be subleading. Therefore we settle for an estimate of this error for the sake of simplicity, but we note that a rigorous bound can be obtained~\cite{Davies:74}.

We first collect all the errors introduced in the derivation of the CGME: the difference in solutions $\|\rho_{\text{true},I}(t) -\rho_{BMT,I}(t)\|_1$ [Eq.~\eqref{solErr}, which includes the error due to time-averaging], and the error due to enforcing complete positivity, $ \|\mathcal{E}_p\|_1 = O(\tau_B/(T_a\tau_{SB}))$ [Eq.~\eqref{eq:Ep}]. The difference between the solution of the CGME $\rho_C(t)$ and the true time evolved state $\rho_{\text{true}}(t)$ thus satisfies (note that the norm is unitarily invariant and hence unaffected by transformation to or out of the interaction picture):
\bes
\begin{align}
    \|\rho_{\text{true}}(t) -\rho_{C}(t) \|_1 \leq \|\rho_{\text{true},I}(t) -\rho_{BMT,I}(t)\|_1 +  \|\rho_{BMT}(t) -\rho_C(t)\|_1 &=    O\left(\frac{T_a +\tau_B}{\tau_{SB}}\right) + O(\tau_{SB}\|\mathcal{E}_p\|_1) \\
    &= O\left(\frac{\tau_B}{T_a} +\frac{T_a +\tau_B}{\tau_{SB}}\right)  
\end{align}
\ees
Here we used Lemma~\ref{factDif} to multiply $\|\mathcal{E}_p\|_1$ by $\tau_{SB}$. 
Optimizing $T_a$ to minimize this error, we obtain $T_a = \sqrt{\tau_{SB} \tau_B}$, and: 
\begin{equation}
    \|\rho_{\text{true}}(t) -\rho_{C}(t) \|_1 =O\left(2\sqrt{\frac{\tau_B}{\tau_{SB}}} +\frac{\tau_B}{\tau_{SB}}\right) =   O\left(\sqrt{\frac{\tau_B}{\tau_{SB}}} \right) \quad \text{for} \  t>T_a/2 \ .
    \label{eq:57new}
\end{equation}
Note that if we define the coupling $g$ strength via $1/\tau_{SB} = g^2 \tau_B$ (recall the discussion in Sec.~\ref{sec:times}), then the error bound of Eq.~\eqref{eq:57new} is $O(g\tau_B)$.

We can in fact improve on the result presented in Eq.~\eqref{eq:57new}. A careful derivation presented in Sec.~\ref{errSec} generalizes the above result from the $t>T_a/2$ case discussed here [recall Eq.~\eqref{eq:Ep}] to $t>0$. The big-$O$ notation in Eq.~\eqref{eq:57new} means that there exist numbers $(C,\delta, c)$ such that for all $0<t < c\tau_{SB}$  and $\tau_B/\tau_{SB} < \delta$  the error is  $\|\rho_{\text{true}}(t) -\rho_{C}(t) \|_1 \leq C(c,\delta) \sqrt{\tau_B/\tau_{SB}}$. We do not know the function $C(c,\delta)$ explicitly, because we do not keep track of it in our analysis of the Born error [the details of which are given in Sec.~\ref{Born}]. The contributions from all the other errors are known explicitly. Our derivation also allows us to state a stronger result exhibiting the coefficients and the $t$-dependence of the first few leading terms of the error as a series in $\tau_B/\tau_{SB}$, since the unknown Born contributions start with $O(\tau_B^2/\tau_{SB}^2)$. The improved bound is:

\begin{equation}
      \|{\rho}_{\text{true}}(t) - \rho_{C}(t)  \|_1  \leq \frac{13e^{\frac{4t}{\tau_{SB}}}\sqrt{\tau_B}\left(1 +  \frac{29\tau_B e^{\frac{8t}{\tau_{SB}}}}{\tau_{SB}}\right)}{\sqrt{\tau_{SB}}}+ \frac{\left(e^{\frac{4t}{\tau_{SB}}} -1\right)e^{\frac{8t}{\tau_{SB}}} \tau_B\left(12+  O\left(e^{\frac{4t}{\tau_{SB}}}\frac{\tau_B}{\tau_{SB}}\right)\right)}{\tau_{SB}}   \ . \label{eq:58}
\end{equation}
The time $t$ can now be arbitrarily large, as long as the combination $e^{\frac{4t}{\tau_{SB}}}\frac{\tau_B}{\tau_{SB}}$ is small. Specifically, the big-$O$ notation  in Eq.~\eqref{eq:58} means that there exist constants $C,\epsilon$ such that for all $e^{\frac{4t}{\tau_{SB}}}\frac{\tau_B}{\tau_{SB}}\leq \epsilon$, $O\left(e^{\frac{4t}{\tau_{SB}}}\frac{\tau_B}{\tau_{SB}}\right)  \leq C e^{\frac{4t}{\tau_{SB}}}\frac{\tau_B}{\tau_{SB}}$.

The time-averaging window leading to Eq.~\eqref{eq:57new} was chosen sub-optimally, as the more careful analysis yielding Eq.~\eqref{eq:58} uses $T_a = \sqrt{\tau_B \tau_{SB}/5}$. Including higher order corrections in $\tau_B/\tau_{SB}$ leads to a minor improvement in the subleading terms in the bound, but does not affect the leading term (see Section \ref{meatOptim}).

Let us now discuss the time-dependence of the error. From the above expression it follows that for any $t$,
%\begin{align}
    $\|\rho_{\text{true}}(t) -\rho_{C}(t) \|_1 = O(\sqrt{\tau_B/\tau_{SB}}e^{6t/\tau_{SB}})$, which is Eq.~\eqref{eq:59b}.
%   \label{eq:59b}
%\end{align}
For the sake of completeness, we present the errors of the Redfield equation (for $\epsilon_T =0$ [recall Eq.~\eqref{eq:eps_T}]; for the general case, see Sec.~\ref{errLimits}):
%\beq
   $\|\rho_{\text{true}}(t) -\rho_R(t) \|_1 =   O\left(\frac{\tau_B}{\tau_{SB}}  \text{ln}\frac{\tau_{SB}}{\tau_B}e^{12t/\tau_{SB}} \right)$, which is Eq.~\eqref{err:Redfield}, 
%   \eeq
    and the Davies-Lindblad equation:
%   \beq
   $\|\rho_{\text{true}}(t) -\rho_D(t) \|_1 =  O\left(\left(\frac{\tau_B}{\tau_{SB}}  +\sqrt{\frac{1}{\tau_{SB}\delta E}}\right)e^{12t/\tau_{SB}}\right)$, which is Eq.~\eqref{eq:72}.
%   \label{eq:72}
%    \eeq
%where $\delta E = $min$_{i\ne j}|E_i-E_j|$ is the level spacing, with $E_i$ the eigenenergies of the system Hamiltonian $H$. 

We note that for Eq.~\eqref{pureRedfield} there is no transient before the introduction of $\mathcal{E}_l$:
\begin{equation}
   \|\rho_{\text{true}}(t) -\rho
   _{BM}(t) \|_1 = (e^{4t/\tau_{SB}}-1)O\left(\frac{\tau_B}{\tau_{SB}}e^{8t/\tau_{SB}}\right)\ ,
\end{equation}
thus the error grows from zero linearly at small times, and the slope of the upper bound is $O(\tau_B/\tau_{SB}^2)$. In contrast, for the Redfield, CGME and Davies-Lindblad equations, there is a nonzero transient error bound at $t\to 0$. The error itself starts at zero (our bounds are not tight), but the slope is $O(1/\tau_{SB})$ instead.

It is important to emphasize that the solutions of open system master equations do not actually grow exponentially. In fact, for time-independent system Hamiltonians all of the equations presented have steady states to which they converge in the operator and trace norm at some rate. If that rate is sufficiently fast, one can prove a much stronger result \cite{Cubitt:2015} than our error bounds: a stability of the open system dynamics to small perturbations. The stability can be expressed in terms of a time-independent error bound on the difference in solutions.

We remark that the coefficients we presented and the time-dependence of the error are explicit, while other work hides them in big-$O$ notation. Our results allow one to put error bars on the numerical solution of the above equations. 
%The rigorous inequalities on errors are presented in Sec.~\ref{errSec}, except for the Born approximation where there is a straightforward way to obtain them, outlined in Sec.~\ref{Born}.

\subsection{Discretization, spatial locality, and the Lieb-Robinson bound}
\label{posLoc}

Different forms of the CGME are preferred depending on the application. The form given in Eq.~\eqref{eq:CGME} with a discrete sum $\sum_{\omega,\omega'}$ is ready for numerical implementation but at exponential cost, since for a Hilbert space of dimension $2^n$ the number of terms in the sum is $2^{4n}$. Numerical solution of the equation already requires at least $2^{2n}$ numbers just to store the density matrix. It is therefore desirable to reduce the number of terms on the r.h.s. to a constant in $n$ (or polynomial in $n$ for the general case of multiple interaction terms; see Sec.~\ref{scalSec}). We first present a discretization that achieves this goal. It is also desirable to have each term locally supported, instead of acting on the whole system. This speeds up the numerics, and also allows one  to meaningfully use the equation with methods that use resources that are polynomial in $n$ to store the density matrix, e.g., matrix product states (MPS) \cite{Vidalsolo:03,VerstraeteGarcia:04}. For a recent implementation of these methods see \cite{Znidaric:10,Medvedyeva:16}. We present a way to write down a local equation at the end of this subsection. 

In Eq.~\eqref{eq:Lind}, the number of terms is infinite because of the integration over $\epsilon$. In Appendix~\ref{app:C} we explain how a discrete approximation can be derived. The result is:
\begin{equation}
\label{eq:discrete-approx}
   {\dot\rho}_C(t) =-i[H +H_{\mathrm{LS}},\rho_C] + \Delta \epsilon 
   %\sum_{k, ~ \epsilon = k \Delta \epsilon , ~ |k|<k^*} 
  \sum_{-k^*<k<k^*}\sum_{\epsilon = k \Delta \epsilon} (A_{\epsilon} \rho_C A_{\epsilon}^\dag - \frac{1}{2}\left\{ \rho_C , A_{\epsilon}^\dag A_{\epsilon}\right\} ) + \mathcal{E}_{k}
\end{equation}
where $A_\epsilon$ is given by Eq.~\eqref{eq:newLindops} and the values of $\Delta \epsilon$, $k^*$ and $\|\mathcal{E}_{k}\|_1$ are given in Appendix \ref{app:C}, and are system-size-independent as long as $[H,A]=O(1)$. We do not present a similar discrete approximation for the Lamb shift, leaving it to future work.

We now proceed to show that storing the Lindblad operators $A_\epsilon$ does not require resources exponential in the system size $n$. Recall that $A_\epsilon = 
    \sqrt{\frac{\gamma(\epsilon)}{2\pi T_a}}\int_{-T_a/2}^{T_a/2}e^{i\epsilon t}A(t)dt$.
This form is interesting because $A(t)$ is evaluated over the interval $[-T_a/2, T_a/2]$. An operator $A(t)$ at those times is local (with exponentially decaying tails) with a locality radius $v T_a/2$, where $v$ is the Lieb-Robinson velocity ($v \sim \|H_{\textrm{local}}\|$, the coupling strength in the Hamiltonian) \cite{Lieb:72}. More rigorously, we consider a system defined on a graph. Each vertex of the graph supports a finite-dimensional Hilbert space, and the system Hilbert space is a tensor product of vertex Hilbert spaces. The edges of the graph are present if there are nontrivial interactions between the corresponding vertices in the Hamiltonian. That is, for vertices $i,j$ we can define a trace over the degrees of freedom at other vertices, denoted $\Tr_{\overline{i,j}}H-\Tr_{\overline{i}}H -\Tr_{\overline{j}}H$. If this operator is nonzero (for traceless $H$), then there is a nontrivial interaction and the edge $(i,j)$ is present in the graph. The Lieb-Robinson bound can be used (see, e.g., Lemma 5 in Ref.~\cite{Haah:18}) to derive the following decomposition of the operator $A_\epsilon$ into its local and small nonlocal part:
\begin{equation}
    A_\epsilon  = A_\epsilon^{(vT_a/2 +\delta r)} + \delta A, \quad \|\delta A\| \leq \sqrt{\frac{\gamma(\epsilon) T_a }{2\pi}} \|A\| e^{-c \delta r}\ ,
    \label{eq:64}
\end{equation}
where $c$ is a universal (system-independent and $\epsilon$-independent) constant, and
\beq
A_\epsilon^{(x)} =\sqrt{\frac{\gamma(\epsilon)}{2\pi T_a}}\int_{-T_a/2}^{T_a/2}e^{i\epsilon t}e^{iH^{(x)}t}A e^{-iH^{(x)}t}dt  
\eeq 
is an operator strictly local within the graph distance $x$ from the original location of $A$. We define the graph distance as the length (number of edges) of the smallest connected path of edges between two vertices.
%The expression for $A_\epsilon^{vT_a/2 +\delta r}$ is very simple:
%\begin{equation}
%    A_\epsilon = \frac{1}{i}\sqrt{\frac{S(\epsilon)}{T_a}}\int_{-T_a/2}^{T_a/2}dt e^{i\epsilon t}A(t)^{vT_a/2 +\delta r}, \quad A(t)^{vT_a/2 +\delta r} = e^{iH^{vT_a/2 +\delta r}t}A e^{-iH^{vT_a/2 +\delta r}t} 
%\end{equation}
Here $H^{(x)} = \Tr_{\overline{x}}H$ is composed of those Hamiltonian terms which involve vertices within graph distance $x$ from the original location of $A$. If we drop $\delta A$ from Eq.~\eqref{eq:64}, we can control the resulting error similarly to what is described in Sec.~\ref{errSec} below:
\begin{equation}
   {\dot\rho}^{(x)}_C(t) =-i[H +H^{(x)}_{\mathrm{LS}},\rho^{(x)}_C] + \Delta \epsilon 
   %\sum_{k, ~ \epsilon = k \Delta \epsilon , ~ |k|<k^*} 
  \sum_{-k^*<k<k^*}\sum_{\epsilon = k \Delta \epsilon} \left(A^{(x)}_{\epsilon} \rho^{(x)}_C A_{\epsilon}^{(x)\dag} - \frac{1}{2}\left\{ \rho^{(x)}_C , A_{\epsilon}^{(x)\dag} A^{(x)}_{\epsilon}\right\} \right)\ . \label{localCGME}
\end{equation}
This master equation has local generators $A^{(x)}_{\epsilon}$ with $x =vT_a/2 +\delta r$. We also assumed that some truncation method is applied to the Lamb shift $H_{\mathrm{LS}} \to H^{(x)}_{\mathrm{LS}}$, although we do not present it in this work.
One can now use standard methods \cite{Pichler:10} to recast Eq.~\eqref{localCGME})  as a stochastic Schr\"odinger equation, with those operators corresponding to jumps:
    \begin{equation}
        \frac{d}{dt}|\psi(t,r)\rangle =  G(t,r) |\psi(t)\rangle, \quad \rho^{(x)}_C(t) =\text{Av}_r|\psi(t,r)\rangle  \langle \psi(t,r)|
    \end{equation}
here $r$ represents a random number used to generate trajectories. We will not present the explicit form of $G$, but note that:
\begin{equation}
    G = \sum_{i=1}^n G_i\ ,
\end{equation}
where the sum is over qubits and each $G_i$ is a local operator of radius $x =vT_a/2 +\delta r$ and can be computed in $O(1)$ time in the system size $n$. The time evolution is then given by a trotterization $U^{\text{Trotter}}(r)$ of
 \begin{equation}
        U(r) = \mathcal{T}e^{-i\int_0^tG(\tau,r)d\tau } \ .
\end{equation}
$U^{\text{Trotter}}(r)$ is a \emph{geometrically local} circuit of depth $O(t)$ where each gate takes $O(1)$ time to compute. Contraction of such a circuit $|\psi(t,r)\rangle = U^{\text{Trotter}}(r)|\psi(0)\rangle$ does not necessarily have a local structure, even if the initial condition $|\psi(0)\rangle$ is a product state. However, note that in 1d $|\psi(t,r)\rangle$ has the structure of a MPS~\cite{Vidalsolo:03}:
\begin{equation}
    |\psi(t,r)\rangle = \sum_{s_1\dots s_n}\prod_i M_i^{s_i}|s_1 \dots s_n\rangle
\end{equation}
Where $M_i^{s_i}$ are $\chi \times \chi$ matrices for each $i,s_i$, and $\chi$ is called a bond dimension. A portion of the time evolution of a MPS is illustrated in Fig.~\ref{tevMP} for 2-local $G_i$, for which $U^{\text{Trotter}}=\prod_t(\prod_{i=2k}U_{i,i+1}\prod_{i=2k-1}U_{i,i+1})$ and $U_{i,i+1} = e^{-iG_i dt}$. Iterating the time step naively grows the bond dimension as $\chi =e^{O(t)}$.
    \begin{figure}
\centering
\includegraphics[width=1\columnwidth]{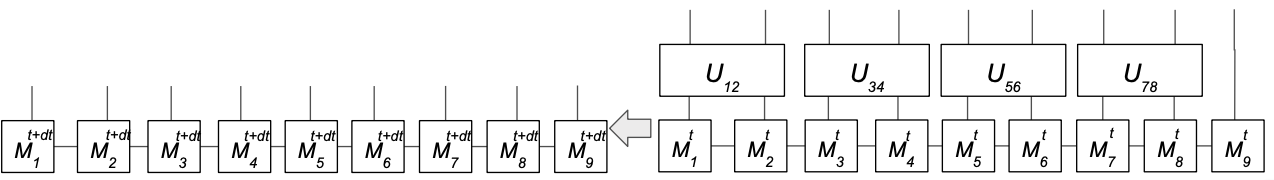}
\caption{Definition of a MPS $|\psi(t,r)\rangle = U^{\text{Trotter}}(r)|\psi(0)\rangle$. 
%\DL{I don't find this figure very helpful. It does not connect to our notation and the meaning of the arrow point left isn't clear. Also, the caption should explain better what is going on.}
}
\label{tevMP}
\end{figure}
The computation of any observable in $|\psi(t,r)\rangle$ requires a contraction of such a tensor network, which takes $O(n)e^{O(t)}$ computational time, and $e^{O(t)}$ memory.   In practice, MPS calculations employ a truncation scheme for the bond dimension after every time step, which fixes the bond dimension to a constant and leads to $O(n,t)$ computational time, $O(n)$ memory. Such a method is generally not a controlled approximation, but it may be possible to develop a special version with an error estimate. Averaging over randomness adds an extra computational cost to either method. We leave the development of further details to a future publication. 

The simultaneous locality and positivity of Eq.~\eqref{localCGME} is a new result: previously, the Davies generators were only local for commuting Hamiltonians, and the Redfield master equation leads to completely positive evolution only for flat spectral densities. Thus, this is the first method for \emph{ab initio} completely positive open system evolution of MPSs. 

%%%%%%%%%%%%%%%%%%%
\subsection{Scaling with size}
\label{nTerms}

%\DL{Note that I moved this subsection to here; it was at the end of section~\ref{errSec}.}

Here we discuss  the system-size dependence. So far we assumed that there is only one term $V = A\otimes B$ describing the interaction with the bath. In reality, there are $n$ terms like this, one for each qubit. If qubits are coupled to the same bath $V = \sum_i^n A_i \otimes B$, it will lead to correlated noise (collective decoherence~\cite{Duan:98}). If the qubits are coupled to different baths $V = \sum_i^n A_i \otimes B_i$, it can lead to uncorrelated noise if $\Tr[\rho_B B_i(t)B_j]=0$. We will consider the latter case since it contains the former case if one sets $B_i =B$ (though distinctly different phenomena such as decoherence-free subspaces appear in this special case~\cite{Zanardi:97c,Lidar:1998fk}). The form of the Davies-Lindblad and Redfield equations can be found elsewhere; here we present only the form of the CGME. The sum is carried through to the frequency form of the equation --- Eq.~\eqref{eq:CGME}:
\bes
\begin{align}
{\dot\rho_C}(t) &=-i[H +H_{\mathrm{LS}},\rho_C] + \sum_{\omega, \omega',i,j}\gamma^{ij}_{\omega\omega'} (A_{\omega}^j \rho_C A_{\omega'}^i - \frac{1}{2} \{\rho_C , A_{\omega'}^i A_{\omega}^j\}  ) \ ,\\
\gamma^{ij}_{\omega\omega'} &=  \frac{1}{T_a}\int_{-T_a/2}^{T_a/2}dt' \int_{-T_a/2}^{T_a/2} d\tau' \mathcal{C}_{ij} (\tau' -t')e^{-i(\omega t'+\omega' \tau')} \ , \\
H_{\mathrm{LS}} &=\frac{i}{2T_a}\sum_{ij}\int_{-T_a/2}^{T_a/2}dt_1\int_{-T_a/2}^{T_a/2}  dt_2\ \sgn(t_1-t_2)\mathcal{C}_{ij} (t_2-t_1)  A^i(t_2) A^j(t_1) \ , \\
\mathcal{C}_{ij} (t) &= \Tr[\rho_B B^i(t)B^j] \ .
\end{align}
\ees
The theoretically  optimal $T_a=\sqrt{\tau_{SB}\tau_B /5}$ is given by the same expression in terms of time scales $\tau_B$ and $\tau_{SB}$, which are now defined using $\int|\mathcal{C}| \mapsto \text{max}_{ij}\int |\mathcal{C}_{ij}|$. Such a definition keeps $\tau_{B}$ and $\tau_{SB}$ consistent with the single qubit quantities. The locality of the equation is now also dependent on the decay of $\mathcal{C}_{ij}$ with the distance between $i$ and $j$. If it can be approximated by zero outside some radius (``strong locality''), the equation remains locally generated in the sense that every term on the r.h.s. contains local terms around some qubit $i$ conjugating the density matrix, and there are polynomially many of them. If we use a weaker sense of locality, when the operators on the right and left of the density matrix are allowed to be local around different $i,j$, and neglect the Lamb shift, then no assumption on $\mathcal{C}_{ij}$ is needed. 

Let us present an explicitly Lindblad form. Let $\mu$ index the positive (see Appendix \ref{app:A}) eigenvalues of $\gamma_{ij}(\omega)= \int_{-\infty}^\infty e^{i\omega s}\mathcal{C}_{ij}(s)ds$:
\beq
\gamma_{ij}(\omega) = \sum_\mu U_{\mu i}^*(\omega) D_\mu (\omega)U_{\mu j}(\omega) \ .
 \eeq
This allows one to write:
\bes
\begin{align}
  \gamma^{ij}_{\omega\omega'} &= \int d\epsilon \sum_{\mu}   U_{\mu i}^*(\epsilon) {\tilde{\gamma}}_{\omega \omega'}(D_\mu(\epsilon), \epsilon) U_{\mu j}(\epsilon) \\
   {\tilde{\gamma}}_{\omega \omega'}(D_\mu(\epsilon), \epsilon) &=\frac{1}{2\pi T_a}\int_{-T_a/2}^{T_a/2}dt' \int_{-T_a/2}^{T_a/2} d\tau' D_{\mu}(\epsilon) e^{-i (\epsilon(\tau' -t') +\omega t'+\omega' \tau')} \ .
\end{align}
\ees
% we can form operators $A_{\mu, \epsilon}$. First 
Next, note that if we let $\tilde{A} = \sum_j U_{\mu j }(\epsilon) A_j$, the previous derivation goes through.
\begin{equation}
{\dot\rho}_C(t) =-i[H +H_{\mathrm{LS}},\rho_C(t)] + \int_{-\infty}^{\infty}d\epsilon \sum_{\mu=1}^n \left(A_{\epsilon,\mu} \rho_C(t) A_{\epsilon,\mu}^\dag - \frac{1}{2}\left\{ \rho_C(t) , A_{\epsilon,\mu}^\dag A_{\epsilon,\mu}\right\} \right)\ ,
\end{equation}
where the Lindblad operators are
\begin{subequations}
\begin{align}
    \label{Atime2}
    A_{\epsilon,\mu} &=  \sqrt{\frac{\gamma(\epsilon)}{2\pi T_a}}\int_{-T_a/2}^{T_a/2}\sum_j e^{i\epsilon t} U_{\mu j }(\epsilon) A^j(t)dt \\
    &= \sum_{\omega,j} f(\epsilon,\mu,j,\omega) A^j_\omega \ , \quad f(\epsilon,\mu,j,\omega) =  \sqrt{\frac{D_\mu(\epsilon)T_a}{2\pi}} \textrm{sinc}[T_a(\epsilon-\omega)/2]U_{\mu j }(\epsilon) \ .
\end{align}
\end{subequations}
These results generalize Eqs.~\eqref{eq:Lind}-\eqref{eq:newLindops} to the $n$-qubit setting. Now the locality is more directly defined: if the operators $A_{\epsilon,\mu}$ can be approximately truncated to a ball around some point for each $\mu$, then  the CGME for $n$ qubits is local in the strong sense described above; otherwise it is local in a weak sense. In any case, it has $n$ times more generator terms than the single-qubit CGME. The latter has a constant number of terms after the discretization described in Appendix \ref{app:C}. 

Let us now discuss the range of applicability of the $n$-qubit CGME. In the case of correlated noise, the norm of each $A$ effectively grows by a factor of $n$. Thus the bound on the relaxation time becomes $\Lambda \sim  n^2/\tau_{SB}$ with $\tau_{SB}$ taken from the single qubit case. For uncorrelated noise, the bound is $\Lambda \sim n/\tau_{SB}$. In both cases, the range of applicability can be derived by directly repeating the derivation given in Sec.~\ref{errSec}:
\begin{equation}
    \|\rho_C(t) - \rho_{\text{true}}(t) \|_1 \leq O(\sqrt{\Lambda \tau_B} e^{1.5\Lambda t })\ ,
    \label{eq:193}
\end{equation}
where $\tau_B$ is system-independent, and thus is the same as the single qubit quantity. The apparently shrinking range with $n$ should not discourage us, since it can be mitigated under the assumption of locality. Namely, one can prove that the error in local density matrices obeys a stronger bound consistent with the isolated qubit result. Let us note that in an $n$-qubit system there are in fact nonlocal modes that relax with the rate $n/\tau_{SB}$, or $n^2/\tau_{SB}$ for correlated noise (e.g., superradiance), as well as modes whose relaxation is completely suppressed (e.g., subradiance, or decoherence-free subspaces, as mentioned above). These phenomena are well known~\cite{LidarWhaley:03} and experimentally documented~\cite{Kwiat:00,Viola:2001sp,Kielpinski:01,PhysRevLett.91.217904}.

The benefit expressed by Eq.~\eqref{eq:193} is that the requirement on the single qubit $\tau_B/\tau_{SB}$ value 
%in the thermodynamic limit 
switches from being exponential in the system size for the Davies-Lindblad case to polynomial for the CGME and the Redfield master equation. In terms of the coupling strength, we found these ranges of applicability at the relevant times of $\Lambda^{-1}$ to be between $\sqrt{n}g\tau_B \ll1$ and $ng\tau_B \ll1$ depending on the correlations in the noise. For local observables the earlier range $g\tau_B\ll1$ for times $\sim \tau_{SB}$ is expected to hold. Here the inequality notation ($\ll$) guarantees that the error $\|\delta \rho\|_1$ or $\|\delta\rho_{\text{loc}}\|_1$ is small.

%%%%%%%%%%%%%%%%%%%
\section{The CGME with a time-dependent system Hamiltonian}
\label{TD}

We start again from the total Hamiltonian~\eqref{eq:Htot}, but this time we let the system Hamiltonian be time-dependent: $H(t)$.
%\begin{equation}
%H(t) + A\otimes B + H_B
%\end{equation}
We denote the interaction term $A\otimes B$ by $V$. To go to the interaction picture, we now need to specify both the start time $t_i$ and the final time $t \leq t_f$. The unitary propagator $U_{tt_i}$ satisfies
\begin{equation}
    \frac{dU_{tt_i}}{dt} =-i H(t)U_{tt_i}, \quad U_{t_it_i} =I\ .
\end{equation}
The formal solution is
\begin{equation}
    U_{t_ft_i} = \mathcal{T}\textrm{exp}\left(-i\int_{t_i}^{t_f}H(t)dt\right)\ ,
\end{equation}
where $\mathcal{T}$ represents  forward time-ordering.
%Before we proceed, let us prove a few properties of $U_{tt_i}$. 
Some basic additional  facts are:
%First of all, we note that for any (invertible) initial condition ${\tilde{U}}_{t_it_i} =A$ the solution will be ${\tilde{U}}_{t_ft_i} =U_{t_ft_i}A$, as can be seen by a change of variables that cancels the factor $A$. This allows one to prove that 
\beq
U_{t_3t_1} = U_{t_3t_2}U_{t_2t_1} \ , \quad 
%. Differentiating that with respect to $t_2$ we find that \begin{equation}
\frac{dU_{tt_i}}{dt_i} =i U_{tt_i}H(t)\ , \quad U_{t_1t_2}^\dag = U_{t_2t_1}\ .
\end{equation}
%Finally, we note that $U_{t_1t_2}^\dag = U_{t_2t_1}$. 
The relation between the Schr\"odinger picture and the interaction picture is:
\begin{equation}
    \rho(t) =U_{t0}\rho_I(t)U_{0t}, \quad V_I = V(t) =A(t,0)\otimes B(t)= U_{0t}AU_{t0} \otimes B(t) \ .
    \label{eq:48}
\end{equation}
Here $B(t)$ is given by the bath Hamiltonian in the same way as in the previous derivation.  We again assume the bath state to be stationary for simplicity, although the derivation will go through without it. At this point we repeat the steps involved in Eqs.~\eqref{eq:16}-\eqref{RedStart}, with the only change being that the $A(t)$ operators now acquire a second time variable. Namely, Eq.~\eqref{RedStart} --- the Redfield equation in the interaction picture --- is replaced by
%
%The Schr\"odinger equation for the total density matrix takes the form
%\begin{equation}
%{\dot\rho}_{\text{tot}}(t) =-i[V(t),\rho_{\text{tot}}(t)], \quad \rho_{\text{tot}} =\rho_{\text{tot}}(0) - i \int_0^t [V(\tau),\rho_{\text{tot}}(\tau)]d\tau
%\end{equation}
%plugging that back into the original equation:
%\begin{equation}
%{\dot\rho}_{\text{tot}}(t) =-i[V(t),\rho_{\text{tot}}(0)]-[V(t),\int_0^t [V(\tau),\rho_{\text{tot}}(\tau)]]d\tau
%\end{equation}
%Born approximation $\rho_{\text{tot}}(t)= \rho(t)\otimes \rho_B$ together with the shift of $B$ such that tr$\rho_B B =0$ allows one to write:
%\begin{equation}
%{\dot\rho}(t) =-\Tr_{\text{b}}[A(t,0)\otimes B(t),\int_0^t [A(\tau,0)\otimes B(\tau),\rho(\tau) \otimes\rho_B]]d\tau
%\end{equation}
%We need to introduce the bath correlation function $C(t) = \Tr\rho_B  B(t)B$ 
% Note that $C^*(t) = C(-t)$. After expanding the commutators, one arrives to
\begin{subequations}
\label{eq:Redfield-t-dep}
\begin{align}
{\dot\rho}_{B,I}(t) &=\int_0^t \mathcal{C}(\tau-t) (A(t,0) \rho_{B,I}(t) A(\tau,0) - \rho_{B,I}(t) A(\tau,0) A(t,0)) d\tau +\textrm{h.c.} + \mathcal{E}_M\\
&=\int_0^t  d\tau \left[\mathcal{C}(\tau-t) (A(t,0) \rho_{B,I}(t) A(\tau,0) - \rho_{B,I}(t)A(\tau,0) A(t,0)) \right. \notag \\
&\left. \qquad \qquad +\mathcal{C}(t-\tau) (A(\tau,0) \rho_{B,I}(t) A(t,0) - A(t,0)A(\tau,0)\rho_{B,I}(t) ) \right] + \mathcal{E}_M\ ,
\end{align}
\end{subequations}
where we again introduced the Markov approximation $\rho_{B,I}(\tau) \mapsto \rho_{B,I}(t)$.

Let us digress briefly to derive the time-dependent Redfield equation. To do so, we transform Eq.~\eqref{eq:Redfield-t-dep} out of the interaction picture:
\begin{equation}
{\dot\rho}_{BM}(t) = -i[H(t),\rho_{BM}] + \int_0^t \mathcal{C}(\tau-t) (A \rho_{B}(t) A(\tau,t) - \rho_{B}(t) A(\tau,t) A +\textrm{h.c.}) d\tau\ .
\end{equation}
Changing variables $t-\tau =t'$ and switching the $dt'$ integration limit to $t\to \infty$ we obtain:
\begin{equation}
   {\dot\rho}_{R}(t) = -i[H(t),\rho_{R}] +(A \rho_{B}(t) A_f(t) - \rho_{B}(t) A_f(t) A d\tau +\textrm{h.c.}), \quad A_f(t) = \int_{0}^\infty \mathcal{C} (-t')A(t-t',t)dt'\ ,
   \label{eq:t-dep-Redfield}
\end{equation}
which is the time-dependent Redfield equation.

We now proceed with essentially the same series of transformations we used in the derivation for the time-independent Hamiltonian in Sec.~\ref{sec:derivation}. The main difference is that there are now additional time variables.
%The transformations will involve many time variables, which obscures the notation. 
%Originally only two times were involved: the actual time $t$ and the integration variable $\tau<t$. 
Similarly as in the transition to Eq.~\eqref{eq:22}, we time-average the r.h.s. by shifting $t\mapsto t+t_1$ and applying  
%\begin{equation}
$\frac{1}{T_a} \int_{-T_a/2}^{T_a/2} dt_1$.
%\end{equation}
%which introduces another integration variable $t_1$. 
The new equation is:
\begin{align}
    {\dot\rho}_{BMT,I}(t) = \frac{1}{T_a}\int_{-T_a/2}^{T_a/2}dt_1\int_0^{t+t_1}  d\tau \left[\mathcal{C}(\tau-t-t_1) (A(t+t_1,0) \rho_{BMT,I}(t)A(\tau,0) - \rho_{BMT,I}(t)A(\tau,0) A(t+t_1,0)) \right. \notag \\
    \left. +\mathcal{C}(t+t_1-\tau) (A(\tau,0)\rho_{BMT,I}(t) A(t+t_1,0) - A(t+t_1,0)A(\tau,0)\rho_{BMT,I}(t))\right] \ ,
\end{align}
%for $\rho_{BMT,I}(t)$
%, and the left-hand side is just $\dot{\rho}_{BMT,I}(t)$, 
and as long as $T_a$ is small compared to the fastest timescale over which $\rho_I(t)$ changes, the associated error $\|\rho_{BM}(t) - \rho_{BMT}(t)\|_1$ is small (see Sec.~\ref{sec:t-ave-err2} for a detailed analysis of the time-independent case). 
We now rotate out of the interaction picture using Eq.~\eqref{eq:48}:
\begin{align}
    {\dot\rho}_{BMT}(t) &=-i[H(t),\rho_{BMT}(t)] \notag \\
    &\quad +  \frac{1}{T_a}\int_{-T_a/2}^{T_a/2}dt_1\int_0^{t+t_1}  d\tau \left[\mathcal{C}(\tau-t-t_1) (A(t+t_1,t) \rho_{BMT}(t) A(\tau,t) - \rho_{BMT}(t) A(\tau,t) A(t+t_1,t)) \right. \notag \\
    &\quad \left. +\mathcal{C}(t+t_1-\tau) (A(\tau,t) \rho_{BMT}(t) A(t+t_1,t) - A(t+t_1,t)A(\tau,t)\rho_{BMT}(t)  )\right]\ .
\end{align}
Next, we change variables to $t_2 = \tau -t$:
\begin{align}
    {\dot\rho}_{BMT}(t) &=-i[H(t),\rho(t)] \notag \\
    &\quad +  \frac{1}{T_a}\int_{-T_a/2}^{T_a/2}dt_1\int_{-t}^{t_1}  dt_2 \left[ \mathcal{C}(t_2-t_1) (A(t+t_1,t) \rho_{BMT}(t) A(t+t_2,t) - \rho_{BMT}(t) A(t+t_2,t) A(t+t_1,t)) \right. \notag \\
    &\quad \left. +\mathcal{C}(t_1-t_2) (A(t+t_2,t) \rho_{BMT}(t) A(t+t_1,t) - A(t+t_1,t)A(t+t_2,t)\rho_{BMT}(t)  )\right]\ .
\end{align}
The transformations we performed after coarse graining of the Redfield equation~\eqref{eq:Redfield-t-dep} were exact. We know what to do to recover complete positivity: as in the transition to Eq.~\eqref{eq:26} in the time-independent case, we need to neglect a part of the integral by changing the lower limit from $-t$ to $-T_a/2$:
\begin{align}
    {\dot\rho}_{BMT}(t) 
    %&=-i[H(t),\rho(t)] +  \frac{1}{T_a}\int_{-T_a/2}^{T_a/2}dt_1\int_{-T_a/2}^{t_1}  dt_2 \left[ C(t_2-t_1) (A(t+t_1,t) \rho(t) A(t+t_2,t) - \rho(t) A(t+t_2,t) A(t+t_1,t)) \right. \notag \\
   %& \left. +C(t_1-t_2) (A(t+t_2,t) \rho(t) A(t+t_1,t) - A(t+t_1,t)A(t+t_2,t)\rho(t)  )\right] \\
    =-i[H(t),\rho_{BMT}(t)] +  P_1 +P_2 -\rho_{BMT}(t)Q_1-Q_2\rho_{BMT}(t) + \mathcal{E}_p,
    \label{eq:rhodotPQ}
\end{align}
where we defined
\bes
\begin{align}
    P_1 &\equiv \frac{1}{T_a}\int_{-T_a/2}^{T_a/2}dt_1\int_{-T_a/2}^{t_1}  dt_2 \mathcal{C}(t_2-t_1) A(t+t_1,t) \rho_{BMT}(t) A(t+t_2,t)\\
    P_2 &\equiv\frac{1}{T_a}\int_{-T_a/2}^{T_a/2}dt_1\int_{-T_a/2}^{t_1}  dt_2 \mathcal{C}(t_1-t_2) A(t+t_2,t) \rho_{BMT}(t) A(t+t_1,t)\\
Q_1 &\equiv \frac{1}{T_a}\int_{-T_a/2}^{T_a/2}dt_1\int_{-T_a/2}^{t_1}  dt_2 \mathcal{C}(t_2-t_1)  A(t+t_2,t) A(t+t_1,t)\\
Q_2 &\equiv \frac{1}{T_a}\int_{-T_a/2}^{T_a/2}dt_1\int_{-T_a/2}^{t_1}  dt_2 \mathcal{C}(t_1-t_2) A(t+t_1,t)A(t+t_2,t)\ .
\end{align}
\ees
The same estimates on the error incurred by doing this apply, i.e., this introduces an error of order $ \|\mathcal{E}_p(\text{const}\cdot \tau_{SB})\|= O(\tau_B/T_a)$. Complete positivity yet remains to be demonstrated. 
To do so, we first note that we may interchange the order of integration since $\int_{-T_a/2}^{T_a/2}dt_1\int_{-T_a/2}^{t_1}  dt_2 = \int_{-T_a/2}^{T_a/2}dt_2\int_{t_2}^{T_a/2}  dt_1$. We then swap the integration variables $t_1$ and $t_2$ and obtain:
\begin{subequations}
\begin{align}
P_2 &=\frac{1}{T_a}\int_{-T_a/2}^{T_a/2}dt_1\int_{t_1}^{T_a/2}  dt_2 \mathcal{C}(t_2-t_1) A(t+t_1,t) \rho_{BMT}(t) A(t+t_2,t)\\
Q_2 &= \frac{1}{T_a}\int_{-T_a/2}^{T_a/2}dt_1\int_{t_1}^{T_a/2}  dt_2 \mathcal{C}(t_2-t_1) A(t+t_2,t) \rho_{BMT}(t) A(t+t_1,t)\ ,
\end{align}
\end{subequations}    
so that:
\begin{subequations}
\label{eq:55}
\begin{align}
P_1+P_2 &=\frac{1}{T_a}\int_{-T_a/2}^{T_a/2}dt_1\int_{-T_a/2}^{T_a/2}  dt_2 \mathcal{C}(t_2-t_1) A(t+t_1,t) \rho_{BMT}(t) A(t+t_2,t)\\
X &\equiv Q_1+Q_2 =\frac{1}{T_a}\int_{-T_a/2}^{T_a/2}dt_1\int_{-T_a/2}^{T_a/2}  dt_2 \mathcal{C}(t_2-t_1)  A(t+t_2,t) A(t+t_1,t)\\
Y &\equiv -i(Q_1-Q_2) = \frac{1}{iT_a}\int_{-T_a/2}^{T_a/2}dt_1\int_{-T_a/2}^{T_a/2}  dt_2 \sgn(t_1-t_2)\mathcal{C}(t_2-t_1)  A(t+t_2,t) A(t+t_1,t)\ ,
\end{align}
\end{subequations}    
%This is the beginnings of the positive form: recall that after expressing $C(t_2-t_1)$ through the Fourier transform the terms on both sides of $\rho$ are complex conjugates of each other. We will present it explicitly below, but first let us deal with the remaining terms. 
and note that $X$ and $Y$ are both Hermitian.
%\beq
%X= Q_1 + Q_2\ ,\qquad Y = -i(Q_1 - Q_2)\ .
%\eeq
Also, $Q_1 = \frac{1}{2} (X+iY)$ and $Q_2 =\frac{1}{2} (X-iY)$, so that
\begin{equation}
   \rho_{BMT} Q_1+Q_2\rho_{BMT} = -\frac{i}{2}[Y,\rho_{BMT}]+\frac{1}{2}\{X,\rho_{BMT}\}\ .
   \label{eq:57}
\end{equation}
%where $X$ and $Y$ are Hermitian operators defined as:
%\begin{align}
%    X = \frac{1}{2T_a}\int_{-T_a/2}^{T_a/2}dt_1\int_{-T_a/2}^{t_1}  dt_2 C(t_2-t_1)  A(t+t_2,t) A(t+t_1,t) + \\
%    +\frac{1}{2T_a}\int_{-T_a/2}^{T_a/2}dt_1\int_{-T_a/2}^{t_1}  dt_2 C(t_1-t_2) A(t+t_1,t)A(t+t_2,t) \\
%    Y =\frac{1}{2iT_a}\int_{-T_a/2}^{T_a/2}dt_1\int_{-T_a/2}^{t_1}  dt_2 C(t_2-t_1)  A(t+t_2,t) A(t+t_1,t) - \\
%    -\frac{1}{2iT_a}\int_{-T_a/2}^{T_a/2}dt_1\int_{-T_a/2}^{t_1}  dt_2 C(t_1-t_2) A(t+t_1,t)A(t+t_2,t)
%\end{align}
%
%We note that $X$ can be turned into sum of integrals over two complementary regions by the same trick of exchanging $t_1$ and $t_2$ in the second term:
%\begin{align}
%    X = \frac{1}{2T_a}\int_{-T_a/2}^{T_a/2}dt_1\int_{-T_a/2}^{t_1}  dt_2 C(t_2-t_1)  A(t+t_2,t) A(t+t_1,t) + \\
%    +\frac{1}{2T_a}\int_{-T_a/2}^{T_a/2}dt_1\int_{t_1}^{T_a/2}  dt_2 C(t_2-t_1) A(t+t_2,t)A(t+t_1,t)= \\=\frac{1}{2T_a}\int_{-T_a/2}^{T_a/2}dt_1\int_{-T_a/2}^{T_a/2}  dt_2 C(t_2-t_1)  A(t+t_2,t) A(t+t_1,t)
%\end{align}
%We will see that these terms will lead to the same Lindblad operators as $I, III$. As to $Y$, the same trick leads to an appearance of sgn$(t_1-t_2)$ function:
%\begin{align}
%    Y = \frac{1}{2iT_a}\int_{-T_a/2}^{T_a/2}dt_1\int_{-T_a/2}^{t_1}  dt_2 C(t_2-t_1)  A(t+t_2,t) A(t+t_1,t) - \\
%    -\frac{1}{2iT_a}\int_{-T_a/2}^{T_a/2}dt_1\int_{t_1}^{T_a/2}  dt_2 C(t_2-t_1) A(t+t_2,t)A(t+t_1,t)= \\=\frac{1}{2iT_a}\int_{-T_a/2}^{T_a/2}dt_1\int_{-T_a/2}^{T_a/2}  dt_2 \sgn(t_1-t_2)C(t_2-t_1)  A(t+t_2,t) A(t+t_1,t)
%\end{align}
Combining Eq.~\eqref{eq:rhodotPQ} with the expressions in Eqs.~\eqref{eq:55}-\eqref{eq:57}, we thus find
%The resulting form of the equation is:
\begin{align}
    {\dot\rho}_C(t) =-i[H(t)+H_{\mathrm{LS}}(t),\rho_C(t)] +  \frac{1}{T_a}\int_{-T_a/2}^{T_a/2}dt_1\int_{-T_a/2}^{T_a/2}  dt_2 C(t_2-t_1) [A(t+t_1,t) \rho_C(t) A(t+t_2,t) \notag \\
    -\frac{1}{2}\{ A(t+t_2,t) A(t+t_1,t),\rho_C(t) \} ] \ ,
\label{eq:57-2}
\end{align}
where 
\beq
H_{\mathrm{LS}}(t) \equiv -\frac{1}{2}Y = \frac{i}{2T_a}\int_{-T_a/2}^{T_a/2}dt_1\int_{-T_a/2}^{T_a/2}  dt_2 \sgn(t_1-t_2)\mathcal{C}(t_2-t_1)  A(t+t_2,t) A(t+t_1,t)\ .
\label{eq:HLS(t)}
\eeq
Finally, we establish complete positivity by again introducing the Fourier transform $C(t) = \frac{1}{2\pi}\int_{-\infty}^{\infty} e^{-i\epsilon t}\gamma(\epsilon)d\epsilon$ [Eq.~\eqref{eq:C(t)}], where 
%for environments in question 
$\gamma(\epsilon) >0$ (Appendix~\ref{app:A}). In analogy with Eq.~\eqref{eq:newLindops} for the time-independent case, let us also define
\begin{equation}
    A_\epsilon(t) \equiv \sqrt{\frac{\gamma(\epsilon)}{2\pi T_a}}\int_{-T_a/2}^{T_a/2} e^{i\epsilon t_1}A(t+t_1,t)dt_1  \ .
    %=\sqrt{\frac{\g(\epsilon)}{2\pi T_a}}\int_{-T_a/2}^{T_a/2} e^{-i\epsilon t_1}U_{t,t+t_1}A U_{t+t_1,t}dt_1\ .
\end{equation}
We then obtain from Eq.~\eqref{eq:57-2} a master equation in exactly the same form as the time-independent CGME [Eq.~\eqref{eq:Lind}], except that all the operators now depend on time: 
\begin{equation}
{\dot\rho}_C(t) =-i[H(t) +H_{\mathrm{LS}}(t),\rho_C] + \int_{-\infty}^{\infty}d\epsilon \left(A_{\epsilon}(t) \rho_C A_{\epsilon}^\dag(t) - \frac{1}{2}\left\{ \rho_C , A_{\epsilon}^\dag(t) A_{\epsilon}(t)\right\} \right)\ .
\label{eq:new-CGME}
\end{equation}
%with $H_{\mathrm{LS}}=-Y$. 
%and
%\begin{equation}
%    U_{t_f,t_i} = \mathcal{T}\textrm{exp}\left(-i\int_{t_i}^{t_f}H(t)dt\right), \quad U_{t_1,t_2} = U_{t_2,t_1}^\dag
%\end{equation}
%Time ordering is such that the initial time $t_i$ is on the right.
This result, which is in a manifestly completely positive form, establishes that the CGME is valid also for arbitrary time-dependent system Hamiltonians. The time-dependent CGME, Eq.~\eqref{eq:new-CGME} [also stated earlier as Eq.~\eqref{tdCoarse}], is one of the main new results of this work.

%%%%%%%%%%%%%%%%%%%%%%

%%%%%%%%%%%%%%%%%
\section{Applications and Examples}
\label{scalSec}

\subsection{Application: dynamical decoupling}
\label{ddSec}

One interesting application of our time-dependent CGME is to study methods to reduce decoherence. One such method is dynamical decoupling (DD) \cite{Viola:98,Zanardi:1999fk} --- the use of pulse sequences to cancel out the system-bath interaction (for reviews see~\cite{Lidar-Brun:book,Suter:2016aa}). Even though the master equation has a Markovian appearance, the underlying bath can have a nonzero correlation time, and can therefore describe the effects of DD. At one level this might appear to be a counterintuitive result since it is widely accepted that in the limit $\mathcal{C}(t) = \delta(t)$ DD is useless,  and hence one might naively expect that one needs to use non-Markovian methods (e.g., the Magnus expansion to second order or higher~\cite{Ng:2011dn}) to describe the benefits of DD. However, it has recently been recognized that DD can in fact work in the Markovian setting as well, specifically the Davies-Lindblad master equation~\cite{Szczygielski:2015aa}, and a more abstract semigroup framework~\cite{Gough:2017aa} (see also Refs.~\cite{Addis:2015aa,Arenz:2018aa}). Our contribution here is to establish this in the framework of the Redfield master equation and, more importantly, the CGME. This ensures a wider range of applicability than the Davies-Lindblad master equation result. Since our result is obtained in a strictly Markovian setting, we differ with the conclusion reported in Ref.~\cite{Li:2018aa}, that ``success, however limited, of DD is a meaningful concept of non-Markovianity".

We consider two examples, both for a qubit with $H=0$ coupled to a purely dephasing environment: 
\begin{equation}
    H_{\textrm{tot}} = Z \otimes B + I\otimes H_\text{b}\ ,
    \label{eq:HZ}
\end{equation}
where $Z \equiv \sigma^z$ is the Pauli $z$-matrix.
The noise model~\eqref{eq:HZ} can be decoupled with a sequence of $X$ pulses spaced by $\Delta t$. Here by a pulse we mean an instantaneous unitary being applied to the qubit, through the action of an externally controlled, time-dependent qubit Hamiltonian: instead of $H=0$ we have a time-dependent system Hamiltonian $H(t) =\frac{\pi}{2} \sum_{j}\delta (t- j\Delta t)X$, where the prefactor of $\pi/2$ is chosen so that $X$-pulses are generated.

\subsubsection{Example 1: time-dependent Redfield master equation with a rectangle bath correlation function}
Consider the rectangle bath correlation function
\begin{equation}
   \mathcal{C}(t) = g^2 \theta (\tau_c-|t|) \ , 
   \label{tau_c}
\end{equation}
where we have explicitly introduced the coupling strength $g$.
To analyze this problem it will suffice to use the time-dependent version of the Redfield equation, Eq.~\eqref{eq:t-dep-Redfield}. 
We thus need to compute
\begin{equation}
    A_f(t) =\int_{0}^\infty \mathcal{C} (t')A(t-t',t)dt'\ ,
    \label{eq:201}
 \end{equation}
and note that the time-evolution of the operator $A=Z$ under $X$ pulses results in $A(t-t',t) = \pm Z$ with the sign alternating every $\Delta t$:
\begin{equation}
  A(t-t',t) =(-1)^{\lceil t/\Delta t\rceil - \lceil (t-t')/\Delta t\rceil}Z\ ,
\end{equation}
Taking the integral, we find:
 \begin{equation}
     A_f(t) =g^2 Z \int_0^{\tau_c} (-1)^{\lceil t/\Delta t\rceil - \lceil (t-t')/\Delta t\rceil} dt'= g^2 c(t) Z, \quad |c(t)| \leq \Delta t\ ,
 \end{equation}
 where $c(t)$ is the remainder after the cancellation of different signs. To show why $|c(t)| \leq \Delta t$, note that there is no way to accumulate $>\Delta t$ by integrating any interval of the function $ (-1)^{\lceil x/\Delta t\rceil}$
 \begin{equation}
    \forall ~ a,b: ~ \left|\int_a^b (-1)^{\lceil x/\Delta t\rceil} dx \right|\leq \Delta t
 \end{equation}
To see this, we note that a translation $a,b \to a+\Delta t, b+\Delta t$ changes the sign of the integral. We then split the integration interval $[a,b]$ into three subsets: $
 \{[a+2n\Delta t ,a+(2n+1)\Delta t ]\},~\{[a+(2n+1)\Delta t ,a +2(n+1)\Delta t ]\},~ [a +2\Delta t (n_{\text{max}}+1), b] $ where $n_{\text{max}}$ has been chosen closest to $b$: $n_{\text{max}} = \lceil(b-a)/2\Delta t \rceil-2$. The integrals over the first two subsets cancel due to the translation property, and the remaining subset $[a +2\Delta t (n_{\text{max}}+1), b]$ has length $L<2\Delta t$. Note that for $L=0$ and $L=2\Delta t$ the integral vanishes. Also note that the derivative of the integral with respect to $b$ is $(-1)^{\lceil b/\Delta t\rceil}$, and its absolute value is $1$. Any function with such a derivative, up to two inflection points and zeros at the end of the interval $[0,2L]$ stays within $[-\Delta t, +\Delta t]$, and the bound is saturated when $b =m\Delta t$.
 
 We compare this with the case without pulses:
 \begin{equation}
      A_f(t)= g^2\tau_c Z\ ,
 \end{equation}
Here $\tau_c$ enters the correlation function [see Eq.~\eqref{tau_c}].  
Thus the decoherence rate is reduced by a factor of at least $\Delta t/\tau_c$. For $(g \tau_c)^2 \ll 1$ our analysis is accurate with the error bounds discussed above. 
%This analysis can be extended to a general noise, other pulse sequences, and a nonzero qubit Hamiltonian. 

However, we note that the bath correlation function $\mathcal{C}(t) = g^2 \theta (\tau_c-|t|) $ used here for simplicity is not physically permissible since its Fourier transform $\gamma(\omega)<0$ for some $\omega$. This leads to possibly nonpositive $c(t)$ for some $t$, and then the equation does not describe a CP map. To address this we next perform a more careful calculation with a realistic bath. 
%Then the Redfield equation in this special case will be completely positive. The positivity is lost for a physical noise when the static qubit Hamiltonian $H\ne 0$. In that scenario, the CGME of this paper may be preferred for the study of dynamical decoupling. 
 
\subsubsection{Example 2: time-dependent CGME with an Ohmic spectral density}
Now consider a physically permissible environment with a spectral density that satisfies the KMS condition~\eqref{eq:KMS}.
We specify $\gamma(\omega)$ below.
We wish to use the time-dependent CGME [Eq.~\eqref{tdCoarse}] to study the same dynamical decoupling protocol as in the previous example. We first need to compute $A(t+t_1,t) = U^\dagger (t+t_1,t)Z U(t+t_1,t)$, where $U(t+t_1,t) =  \exp[-i\int_t^{t+t_1} H(s) ds]$. This unitary is a product of as many $X$-pulses as fit in the interval $[t,t+t_1]$, which we can express as
\beq
U(t+t_1,t) = X^{m_f-m_i} \ , \quad m_i = \lceil t/\Delta t\rceil\ , \quad m_f = \lceil (t+t_1)/\Delta t\rceil \ ,
\eeq
where $m_i$ and $m_f-1$ respectively count the number of pulses in the intervals $[0,t]$ and $[0,t+t_1]$. Conjugating $Z$ by these pulses results in alternating signs, depending on the parity of $m_f-m_i$:
\beq
A(t+t_1,t) = (-1)^{\Delta m} Z\ , \quad \Delta m = m_f-m_i\ .
\eeq
This yields the time-dependent Lindblad operators $A_\epsilon(t) $ via Eq.~\eqref{eq:14}:
\beq
A^{\text{DD}}_\epsilon(t) = f^{\text{DD}}_\epsilon(t) Z \ , \quad f^{\text{DD}}_\epsilon = \sqrt{\frac{\gamma(\epsilon)T_a}{2\pi}}\int_{-1/2}^{1/2} e^{i\epsilon \zeta T_a}(-1)^{\lceil (t+\zeta T_a)/\Delta t\rceil -\lceil t/\Delta t\rceil} d\zeta \ .
    \eeq
Note that the Lamb shift [Eq.~\eqref{eq:15}] is proportional to $Z^2=\openone$, so it drops out. The Lindblad operators without DD are simply
\beq
A^{\text{noDD}}_\epsilon(t)  =  f^{\text{noDD}}_\epsilon(t) Z \ , \quad f^{\text{noDD}}_\epsilon(t) =  \sqrt{\frac{\gamma(\epsilon)}{2\pi T_a}}\int_{-T_a/2}^{T_a/2} e^{i\epsilon t_1}dt_1 = \sqrt{\frac{\gamma(\epsilon)T_a}{2\pi}} \sinc(\epsilon T_a/2) \ .
\eeq
We may thus write the time-dependent CGME as:
\begin{equation}
{\dot\rho}_C(t) =-i[H(t),\rho_C] + r^x(t)\left( Z\rho_C Z -  \rho_C \right)\ ,
\end{equation}
where
\beq
r^{x}(t) =  \int_{-\infty}^{\infty}d\epsilon \left| f^{x}_\epsilon \right|^2\ ,
\eeq
with $x=\text{DD}$ or $x=\text{noDD}$. The decoherence rate is thus suppressed by the factor 
\beq
\xi(t)\equiv \frac{r^{\text{DD}}(t)}{r^{\text{noDD}}(t)} = \frac{ \int_{-\infty}^{\infty}d\epsilon {\gamma(\epsilon)} \left|\int_{-1/2}^{1/2} e^{i\epsilon \zeta T_a}(-1)^{\lceil (t+\zeta T_a)/\Delta t\rceil -\lceil t/\Delta t\rceil} d\zeta \right|^2}{ \int_{-\infty}^{\infty}d\epsilon {\gamma(\epsilon)}\left|\sinc\left(\frac{\epsilon T_a}{2}\right) \right|^2}\ .
\eeq
The inner integral can be evaluated under certain simplifying assumptions. For example, let us assume that $T_a$ is an even integer multiple of $\Delta t$: $T_a = 2k \Delta t$, and let us consider only times $t$ which are integer multiples of $\Delta t$: $t=\ell \Delta t$. Then $\lceil (t+\zeta T_a)/\Delta t\rceil -\lceil t/\Delta t\rceil = \lceil 2k \zeta \rceil$, and the dependence on $t$ cancels out, i.e., the suppression factor $\xi(t)$ is time-independent. We can then rewrite the inner integral as:
\bes
\begin{align}
\left|\int_{-1/2}^{1/2} e^{i\epsilon \zeta T_a}(-1)^{\lceil (t+\zeta T_a)/\Delta t\rceil -\lceil t/\Delta t\rceil} d\zeta \right|^2
&= \left|\sum_{j=0}^{k-1} (-1)^j \int_{-\frac{1}{2}+\frac{j}{k}}^{-\frac{1}{2}+\frac{j+1}{k}} e^{i\epsilon\zeta T_a}d\zeta\right|^2  \quad (T_a = 2k \Delta t,\  t=\ell \Delta t)\\
&= \left|\sum_{j=0}^{k-1} (-1)^j \frac{e^{i\epsilon T_a[2(j-k)+1]/(2k)}}{k}\sinc\left(\frac{\epsilon T_a}{2k}\right)\right|^2 \\
&= \left|\frac{2}{\epsilon T_a}\sin\left(\frac{k\pi}{2}+\frac{\epsilon T_a}{2}\right)\tan\left(\frac{\epsilon T_a}{2k}\right)\right|^2 \\
& = \left|\sinc\left(\frac{\epsilon T_a}{2}\right)\tan\left(\frac{\epsilon T_a}{2k}\right)\right|^2  \quad (k \text{ even}) \ ,
\end{align}
\ees
where in the last line we assumed for further simplicity assume that $k$ is even ($k=2k'$), so that $\sin\left(\frac{k\pi}{2}+\frac{\epsilon T_a}{2}\right) = \pm \sin\left(\frac{\epsilon T_a}{2}\right)$. Writing $t=\ell \Delta t$ and $T_a = {4k'} \Delta t$, we thus have:
\beq
\xi(\ell \Delta t) = \frac{ \int_{-\infty}^{\infty}d\o {\gamma(\o)} \left|\sinc\left({2k'}\o \Delta t\right)\tan\left(\o \Delta t\right) \right|^2}{ \int_{-\infty}^{\infty}d\o {\gamma(\o)}\left|\sinc\left({2k'}\o \Delta t\right) \right|^2}\quad ({k'}\in \mathbb{N})\ .
\label{eq:203}
\eeq
%\DL{complete to contour integral?}
To guarantee that DD will cause suppression [$\xi(\ell \Delta t)<1$] it is sufficient for the spectral density to have a high-frequency cutoff $\o_c$ such that $|\tan\left(\o \Delta t\right)|<1$, i.e., 
\beq
\o_c \Delta t <\frac{\pi}{4}\ ,
\label{eq:DD-cond}
\eeq 
in agreement with well established results~\cite{Viola:98}.

An important physical example is a bath with an Ohmic spectral density
\beq
\gamma(\o) = 2\pi \kappa \frac{\o e^{-|\o|/\o_c}}{1-e^{-\b \o}}\ ,\label{Ohmm}
\eeq
where $\o_c$ is a high-frequency cutoff and $\kappa$ is a positive dimensionless constant. This spectral density satisfies the KMS condition. The associated bath correlation function can be expressed in terms of the Polygamma function (see Appendix I of Ref.~\cite{ABLZ:12-SI}), but unfortunately evaluating $\tau_{SB}$ and $\tau_B$ analytically using Eq.~\eqref{eq:T1tauB} is not possible in this case,
%\DL{Jenia has some asymptotic results in the ``physics" writeup}, 
so we simply pick a convenient value of $T_a$ ($=4k'\Delta t$) rather than its optimal value $\sqrt{\tau_B \tau_{SB}/5}$. The suppression factor is plotted in Fig.~\ref{fig:DD} for a variety of parameter settings, including for $\Delta t>{\pi}/(4\o_c)$. It can be seen that the CGME correctly and consistently predicts that DD will result in the suppression of dephasing when Eq.~\eqref{eq:DD-cond} is satisfied. Thus, the CGME can be used to study DD protocols despite its Markovian appearance.

\begin{figure}[t]
         \subfigure[\ $\beta=0.2$]{\includegraphics[width=0.40\textwidth]{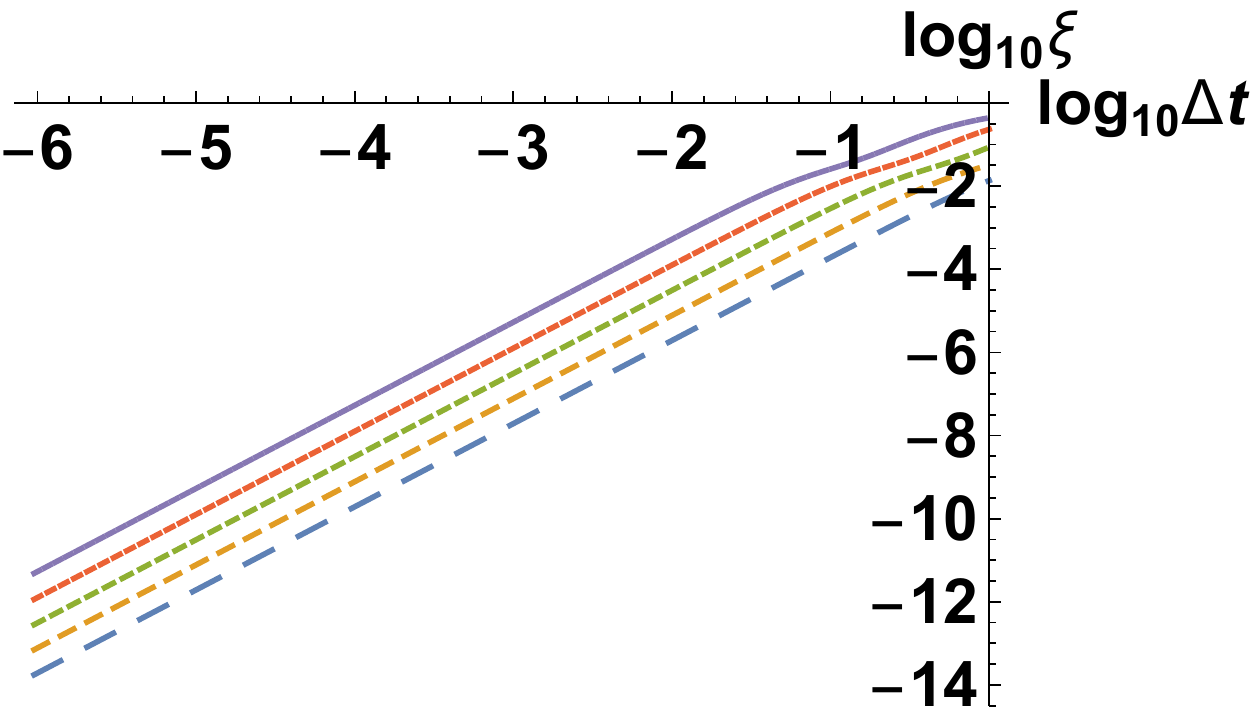}\label{DD-suppression1}}
         \subfigure[\ $\beta=5$]{\includegraphics[width=0.40\textwidth]{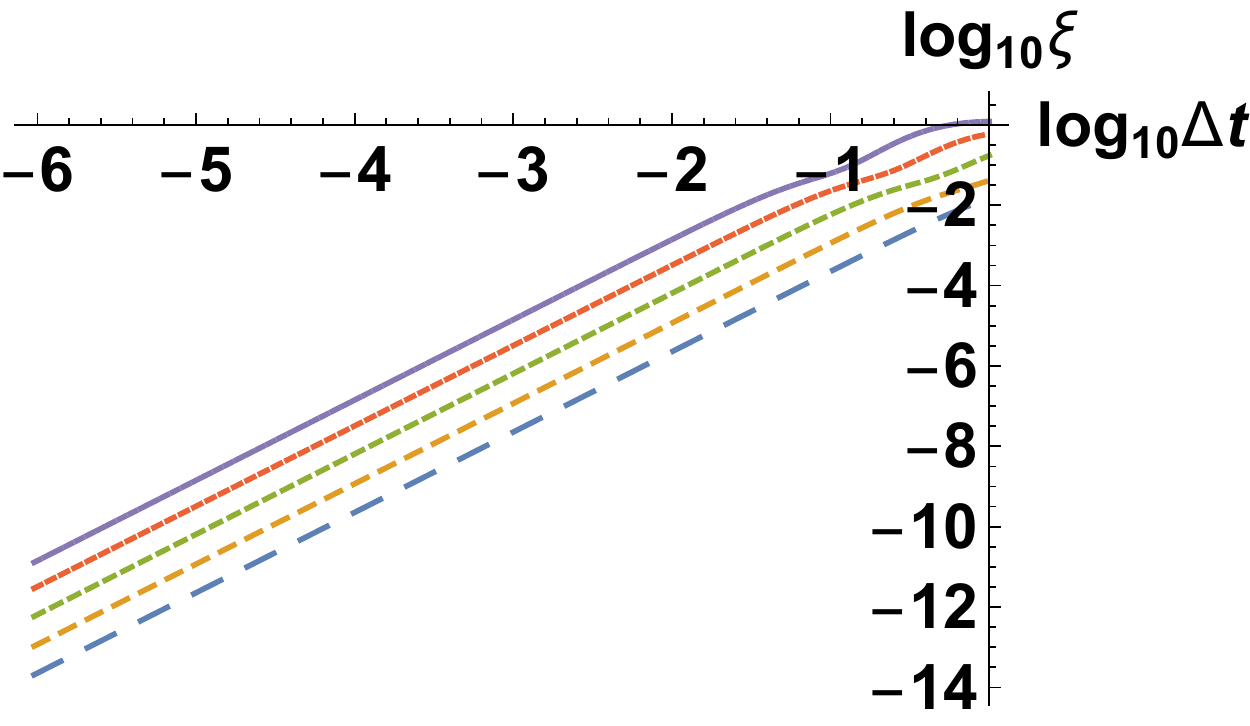}\label{DD-suppression2}}
         \subfigure{\includegraphics[width=0.15\textwidth]{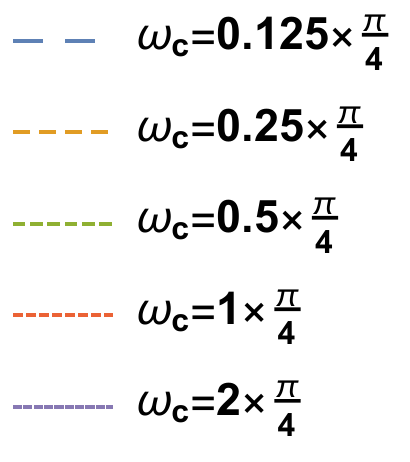}}
\caption{The DD suppression factor $\xi$ [Eq.~\eqref{eq:203}] as a function of the pulse interval $\Delta t$ for different high-frequency cutoffs $\o_c$, at two different inverse temperatures (a) $\beta=0.2$ and (b) $\beta=5$. We have set $T_a={4\Delta t}$ and $\kappa=1$. The sufficient condition~\eqref{eq:DD-cond} is satisfied everywhere, except for the case $\o_c=2\times\pi/4$ (purple line) for $\Delta t\lesssim 1$, where indeed for the $\beta=5$ case the suppression factor is slightly $>1$.}
\label{fig:DD}
\end{figure}

\subsection{Lambless master equations}
\label{sec:Lambless}

It turns out that the Lamb shift terms containing $\sgn(t)$, such as Eq.~\eqref{eq:LS-final}, are more demanding to compute numerically, but at the same time the interesting dynamics is often given by the other, relaxation terms. Therefore it is convenient to have a version of the equations in the ``Lambless" regime. Even though these equations do not have any correctness guarantees, one can still use them as toy models, and their numerical solution is often significantly faster than their complete counterparts. Let us list the resulting equations:
\begin{itemize}
    \item{Lambless Davies-Lindblad [replacing Eq.~\eqref{eq:Davies-Lind}]:}
\begin{equation}
{\dot\rho}_{LD}(t) =-i[H  ,\rho] +\frac{1}{2}\sum_{\omega} \gamma(\omega)  (A_{\omega} \rho_{LD} A_{-\omega} - \rho_{LD} A_{-\omega} A_{\omega}) +\textrm{h.c.}
\end{equation}

\item{Lambless Redfield [replacing Eq.~\eqref{eq:Red}]:}
\begin{equation}
{\dot\rho}_{LR}(t) =-i[H,\rho_{LR}(t)] +(A \rho_{LR} A_f - \rho_{LR}  A_f A +\textrm{h.c.})\ ,
\end{equation}
where
\begin{equation}
    A_f = \frac{1}{2}\sum_\omega \gamma(-\omega)A_\omega \ .
    %=  \frac{1}{2}\sum_{nm} \gamma(E_{nm})A_{nm}\ .
\end{equation}
This simplified form of $A_f$ follows from Eqs.~\eqref{eq:6c} and~\eqref{eq:Af} by substituting $S(\o)=0$ (vanishing Lamb shift). When the KMS condition holds we may write $A_f = \frac{1}{2}\sum_\omega e^{-\b\o} \gamma(\omega)A_\omega$.

\item{Lambless CGME [replacing Eq.~\eqref{eq:Lind}]:}

\begin{equation}
{\dot\rho}_{LC}(t) =-i[H ,\rho_{LC}] + \Delta \epsilon \sum_{k, \epsilon = \Delta \epsilon k, |k|< k^*} (A_{\epsilon} \rho_{LC} A_{\epsilon}^\dag - \frac{1}{2}\left\{ \rho_{LC} , A_{\epsilon}^\dag A_{\epsilon}\right\} ) 
\label{eq:CGME-L1}
\end{equation}
where
\begin{equation}
    A_\epsilon= \sum_\omega A_\omega \sqrt{\frac{\gamma(\epsilon)T_a}{2\pi}} \sinc\left[T_a(\epsilon-\omega)/2\right] \ , 
    \label{eq:CGME-L2}
\end{equation}
and we choose
$T_{a,\text{opt}} =\sqrt{\tau_B \tau_{SB}/5}$, $\Delta \epsilon$, and $k^*$ as prescribed in  Appendix~\ref{app:C}. We study this case numerically in Sec.~\ref{sec:CGME-L}.
\end{itemize}
The importance of the Lamb shift in open system dynamics has been analyzed before, e.g., in Refs.~\cite{ABLZ:12-SI,Vega:2010fk}

\subsection{Toy example of a slow bath at the boundary of the range of applicability, with explicit correlation function and spectral density}
\label{numeric}

Another way to speed up numerical simulations is to have a spectral density that leads to an explicit, analytically computable form of the bath correlation function $\mathcal{C}(t)$. Note that in the ``Lambless" CGME case discussed in Sec.~\ref{sec:Lambless} 
%only depend on $\gamma(\omega)$, so any analytic spectral density will work. The 
the only direct role played by $\mathcal{C}(t)$ is in the determination of $\tau_B$ and $\tau_{SB}$ [via Eq.~\eqref{eq:T1tauB}], and apart from this it suffices to specify only $\gamma(\omega)$.
The complete equations are more complicated: the Lamb terms contain $\mathcal{C}(t)$, or a Cauchy principal value integral of $\gamma(\omega)$ [Eq.~\eqref{eq:S-gamma}]. 

We present such a toy example with the property of having a bath almost at the boundary of the slowest correlation functions allowed within the range of applicability of the CGME. Baths with the same $\tau_B$ can have a different time decay of $|\mathcal{C}(t)|$ at large times $t\gg \tau_B$. A bath with $|\mathcal{C}(t)| \sim 1/t^2$ will have a logarithmically divergent $\tau_B$, which is undesirable for a toy example.%
\footnote{$|\mathcal{C}(t)| \sim 1/t^2$ is actually the borderline case realized by an Ohmic bath, for which the divergence has a cutoff depending on the total time of the experiment $T$ in Eq.~\eqref{eq:T1tauB-b}.
%The equations are still applicable after the cutoff. We will present the analysis of that in a future work.
} 
Consider a bath with the following spectral density, which satisfies the KMS condition, such that $|\mathcal{C}(t)| \sim 1/t^4$---almost as slow as we are allowed: 
\begin{equation}
    \gamma(\omega) =\mathcal{N}\frac{e^{\beta \omega/2}}{\tau_{SB}}( e^{ - b  \beta |\omega| } -  a^{-1}e^{ - ab  \beta |\omega|})\ ,
    \label{eq:222}
\end{equation}
where $a>1$, $b>1/2$. Since $\frac{1}{\tau_{SB}}= \int_0^{\infty}|\frac{1}{2\pi}\int_{-\infty}^{\infty} e^{-i\o t} \gamma(\o)d\o|dt$ [Eqs.~\eqref{eq:gamma31},~\eqref{eq:T1tauB-a}], the normalization factor is
\begin{equation}
   \mathcal{N}^{-1} =\frac{1}{2\pi}\int_0^\infty \left| \int_{-\infty}^{\infty}e^{-i\omega t + \beta \omega/2}( e^{ - b | \beta \omega| } -  a^{-1}e^{ - ab | \beta \omega|})d\o\right| dt  \ .
\end{equation}
%The factor $e^{\beta \omega/2}$ multiplies a function even in $\omega$, so the KMS condition is satisfied. 
The difference between the two exponentials in Eq.~\eqref{eq:222} is such that for small $\omega$ the term linear in $|\omega|$ cancels, and there is no inflection in $\gamma(\omega)$. Note that for a single $e^{-\beta |\omega|}$ we would have $|\mathcal{C}(t)|\sim 1/t^2$, but after this cancellation we get $|\mathcal{C}(t)|\sim 1/t^4$. 

Our goal is now to use this example to compare the actual numerical errors to the theoretical bounds derived in this work. These bounds involve many relaxations and inequalities, so we do not expect them to be particularly tight. Also of interest is to compare the predicted optimal averaging time $T_{a,\text{theory opt}}$ to the numerically optimal $T_a$.
To be specific we set $a=1.01$, $b=0.6$, $\beta=4$, and find the normalization factor to be $\mathcal{N}\approx 21.0$. This choice of parameters 
%$a,b$ for this $\beta$ is such that the 
gives rise to a maximum of $\gamma(\omega)$ 
%is 
at $\omega^* \approx 2.0$, and the peak is wide: for $0.15\leq\omega\leq 6.08, ~ \gamma(\omega) \geq 0.5 \gamma(\omega^*)$. This frequency range is where we will choose system transitions to lie; see Fig.~\ref{fig:specDen}. 

For this choice of parameters we obtain:
\begin{equation}
   \mathcal{C}(t) = \frac{325620 \mathcal{N}}{\tau_{SB}\pi(22i + 5t)(553i + 125t)(-106 - 515it + 625t^2)}\ ,
\end{equation}
so that indeed $\mathcal{C}(t)\sim 1/t^4$.
Integrals of $\mathcal{C}(t)e^{i\omega t}$ over various intervals can be expressed via the Meijer G-function. Its evaluation is a part of the numerical simulation of master equations discussed in this paper, so their computational cost is sensitive to any shortcuts we can find, and the analytic form above is helpful. We will comment on the specific runtimes below.

\begin{figure}[h]
     \centering
\subfigure[]{\includegraphics[width=0.46\textwidth]{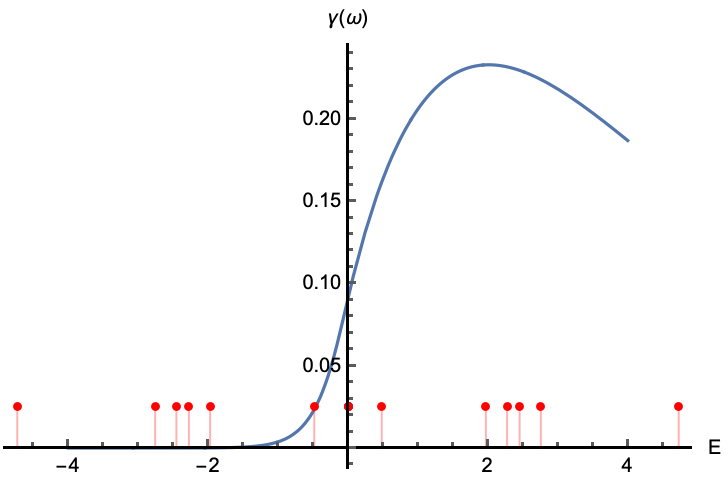}\label{fig:specDen}}
        \subfigure[]{\includegraphics[width=0.5\textwidth]{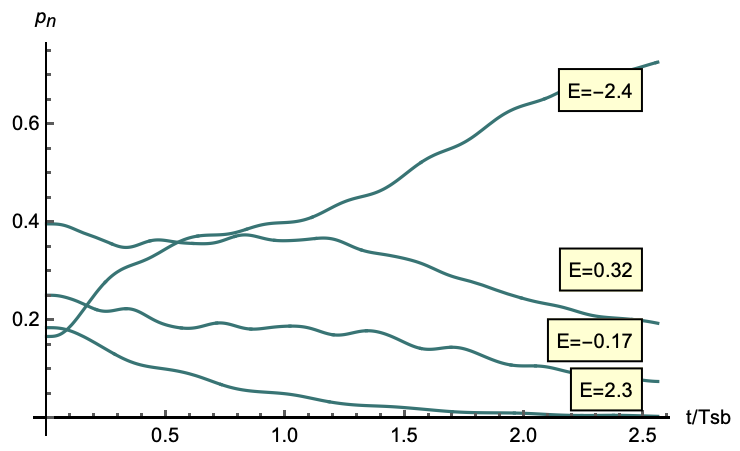}\label{RedsRel}}              
     \caption{(a) The spectral density $\gamma(\omega)$ for 
     %2 choices of parameters: \DL{I no longer see thin and thick:} (thin) $a=3/2$, $b=1$  (thick) 
     $a=1.01$, $b=0.6$,
     %. Both have 
     $\beta=4$ and $\tau_{SB} =10.0$. The position of the transition frequencies of the Hamiltonian defined by Eq.~\eqref{eq:225} is shown by the red tickmarks. (b) Populations, i.e., the probabilities of eigenstates (labeled by energy in the figure) $p_n(t)$ for the solution $\rho_{OR}(t)$ in the region of its positivity: $0\leq t\leq 2.56\tau_{SB}$. }
\end{figure}

The bath parameter $\tau_{SB}$ can be chosen freely, while $\tau_B \approx 0.69 $ is determined by the choice of $a,b,\beta$. Note that it has dimensions of $\beta$. 

We choose $\tau_{SB} =10$ so that $\tau_B/\tau_{SB}=0.1 \ll 1$, a necessary condition for our error bounds to be small. We find $T_{a,\text{theory opt}} =\sqrt{\tau_B \tau_{SB}/5}  = 0.97$. To compare different equations, we pick an arbitrary two-qubit Hamiltonian:
% (to demonstrate the multi-qubit capabilities of the equations):
\begin{equation}
    H= 0.5\sigma^z_1 -0.7 \sigma^z_2 + 0.3\sigma^z_1 \sigma^z_2 + \sigma^x_1 +\sigma^x_2\ ,
\label{eq:225}
\end{equation}
and choose $|11\rangle$ as the initial state. We let the coupling to the bath be given by
$V =\sigma_1^z \otimes B$. 

For all of our results, there will be no way to know what the true state $\rho_{\text{true}}(t)$ is. The differences found will be within the error bounds (growing at least linearly with time) of all equations. However we can simulate Eq.~\eqref{pureRedfield}, which is presumably closest to the true solution since only the Born and Markov approximations were made:
\begin{equation}
\label{origRedfield}
{\dot\rho}_{OR}(t) =-i[H,\rho_{OR}(t)] +\int_0^t \mathcal{C} (\tau-t) [A \rho_{OR}(t) A(\tau-t) - \rho_{OR}(t)  A(\tau-t) A] d\tau +\textrm{h.c.} 
\end{equation}
%%%
\begin{figure}[h]
     \centering
%     \begin{subfigure}[b]{0.18\textwidth}
\subfigure[]{\includegraphics[width=0.40\textwidth]{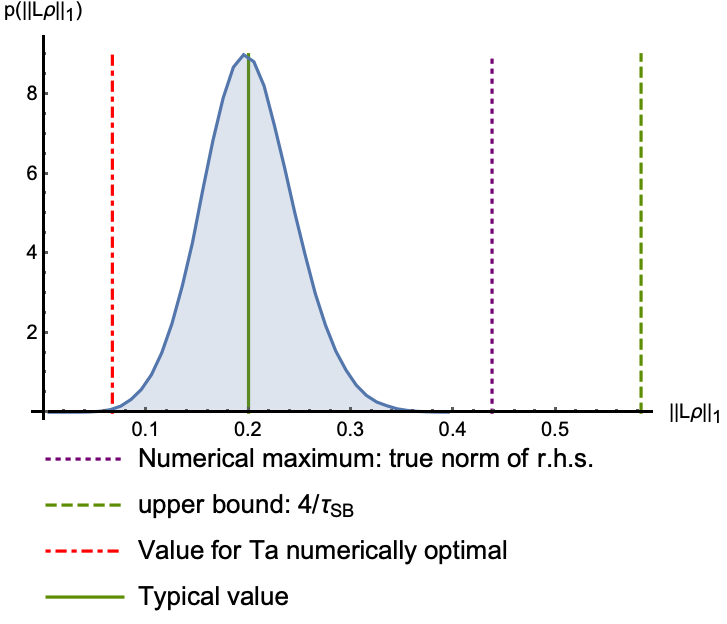}\label{rhno}}
%         \caption{}
%         
%     \end{subfigure}
%     \hfill
%     \begin{subfigure}[b]{0.4\textwidth}
%         \centering
        \subfigure[]{\includegraphics[width=0.56\textwidth]{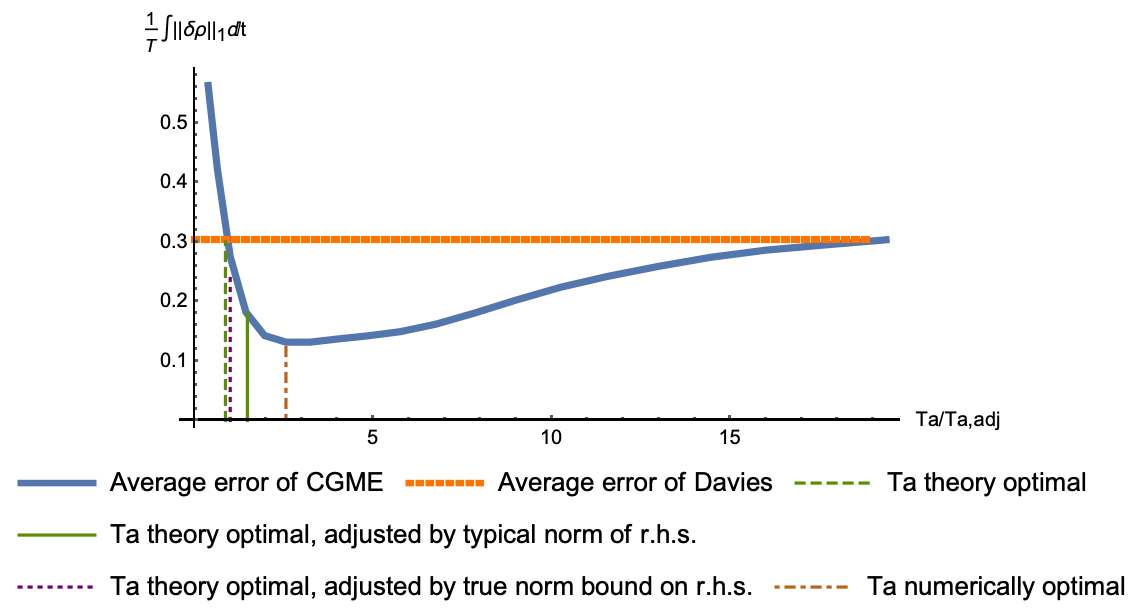}\label{pose}}      %         \caption{}
%     \end{subfigure}
%          \hfill
%     \begin{subfigure}[b]{0.4\textwidth}
%         \centering
         %\subfigure[]{\includegraphics[width=0.32\textwidth]{FullEq.png}\label{secondSim}}
%         \caption{}         
%     \end{subfigure}
     \caption{(a) Histogram of the norm of the right-hand side $\|\mathcal{L}^{BM,I}_t(\rho_{\text{test}})\|_1$ sampled uniformly over time and over normalized matrices from the Gaussian unitary ensemble. The upper bound $4/\tau_{SB}$ is shown as a dashed green line. Dotted: the numerically computed maximum of the norm. Solid vertical line: typical value of the norm. Dash-dotted: a value of the norm that would have explained the optimum, for comparison.   (b) Thick: the average error $ \frac{1}{T}\int_0^T \|\rho_C(t) - \rho_{OR }(t)\|_1 dt$. Thick dotted: same for Davies $ \frac{1}{T}\int_0^T \|\rho_D(t) - \rho_{OR }(t)\|_1 dt$. Dashed: position of $T_{a,\text{theory opt}}$, which does not lead to an advantage over Davies, in contrast to the following two. Dotted: $T_{a,\text{adj}}$. Dash-Dotted: $T_{a,\text{num. opt}}$.  Solid vertical line: $T_a$ adjusted by the typical value of the norm.}
\end{figure}
%%%%
We call this the ``Original Redfield" equation (ORE). It is in fact possible that the CGME returns solutions closer to the true solution than this non-CP equation, but we note that in our derivation all the equations appeared as subsequent approximations to the ORE. Therefore we may characterize how good those approximations are by studying the difference $\rho_C -\rho_{OR}$. We first investigate  the solution $\rho_{OR}(t)$, and find that it becomes negative (and hence unphysical) at $t\approx 2.57\tau_{SB}$. Thus, we choose to compare the solutions of the CGME and ORE in the interval $0\leq t\leq 2.56\tau_{SB}$, where $\rho_{OR}\geq 0$ [see Fig.~\ref{RedsRel}, and hence we can assume that $c_{BM}=1$ [recall that this is always true for a CP master equation; see the discussion following Eq.~\eqref{eq:31c}]. We note that the computational cost of the ORE is the largest (by at least on order of magnitude) among our equations.%
\footnote{Precomputing the analytic form of the r.h.s. takes about $14$ minutes on a contemporary desktop computer, and the solution of the differential equation itself ~--- about a minute. In contrast, the CGME takes $20$ seconds total, and the Davies and Redfield equations ~--- even less. These numbers are for the implementation in Mathematica, where precomputing the symbolic form of the r.h.s. is possible. Faster implementations in other programming languages are certainly possible.
%expected to be order of magnitudes faster at solving these equations. 
The Mathematica code for obtaining the plots of this section are available~\cite{urlCode}.
}

The second observation we make is about the norm of the r.h.s. of Eq.~\eqref{origRedfield}. In Fig.~\ref{rhno} we present a probability distribution of the norms of the right hand side $\|\mathcal{L}^{BM,I}_t(\rho_{\text{test}})\|_1$, where $\rho_{\text{test}}$ is taken from the Gaussian unitary ensemble (Hermitian matrices $X$ sampled from a probability distribution $\sim e^{-2^{n-1}\Tr X^2}$, where $n=2$ is the number of qubits), and subsequently normalized such that $\|\rho_{\text{test}}\|_1=1$. We also combine the data for different times, as if the time was sampled uniformly in $[0,2.56\tau_{SB}]$. We find numerically that the maximum value of the norm that can be achieved is $0.44\pm0.01$. It would suggest a higher effective $\tau_{SB}$ than our definition:
\begin{equation}
    \max_{t,\rho_{\text{test}}, \|\rho_{\text{test}}\|_1=1}\|\mathcal{L}^{BM,I}_t(\rho_{\text{test}})\|_1 = \frac{4}{\tau_{SB}^{\text{eff}}}\approx 0.44 < 0.58 \approx \frac{4}{\tau_{SB}}\ .
\end{equation}
Such an adjusted $\tau_{SB}^{\text{eff}} \approx 9.14 = 1.33 \tau_{SB}$ can be used to choose an adjusted $T_{a,\text{adj}}  =\sqrt{\tau_{SB}^{\text{eff}}/\tau_{SB}}T_{a,\text{theory opt}}  =1.12 =1.15 T_{a,\text{theory opt}}$. We could non-rigorously use the typical value of the norm instead, leading to $T_{a,\text{typ.}}\approx 1.7T_{a,\text{theory opt}}$. To see if this adjustment is justified,
we vary $T_a$ in Fig.~\ref{pose} and plot the average of the trace norm of the difference in the solutions over $t\in[0,2.56\tau_{SB}]$, i.e., $\frac{1}{T}\int_0^T \|\rho_C(t) - \rho_{OR }(t)\|_1 dt$.
We see that the true optimum is at even higher $T_{a,\text{num. opt}} \approx 2.87 \approx 3T_{a,\text{theory opt}} $, and the equation matches Davies in the limit $T_a \to \infty$ [Fig.~\ref{pose}], as formally expected (see Appendix~\ref{app:D}). Even though our adjustments of $T_a$ are not very large, the error is very sensitive to them and improves significantly.

%%%%%
\begin{figure}[h]
     \centering
%     \begin{subfigure}[b]{0.18\textwidth}
%         \centering
         \subfigure[]{\includegraphics[width=0.56\textwidth]{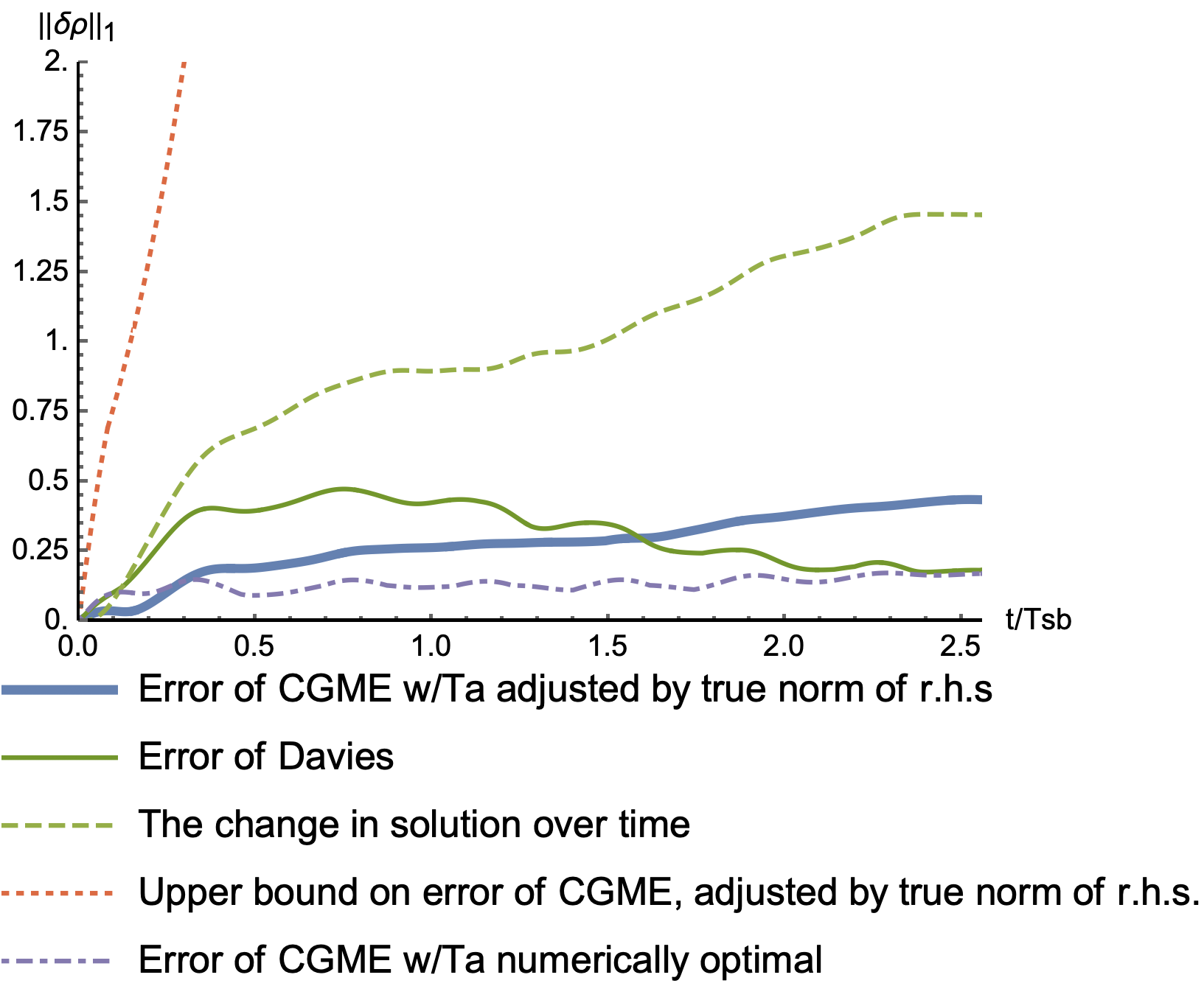}\label{earear}}
%         \caption{}
%         
%     \end{subfigure}
%     \hfill
%     \begin{subfigure}[b]{0.4\textwidth}
%         \centering
          \subfigure[]{\includegraphics[width=0.4\textwidth]{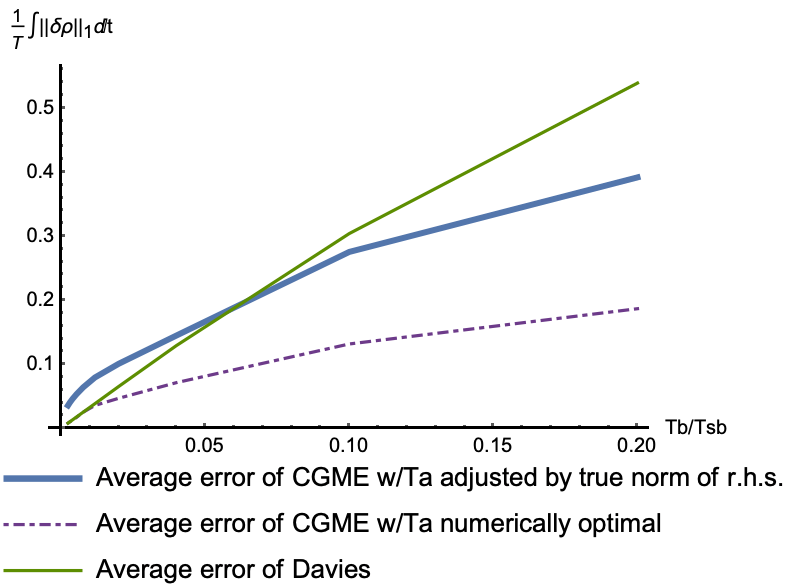}\label{firstSim}}
%         \caption{}
%     \end{subfigure}
%          \hfill
%     \begin{subfigure}[b]{0.4\textwidth}
%         \centering
         %\subfigure[]{\includegraphics[width=0.32\textwidth]{FullEq.png}\label{secondSim}}
%         \caption{}         
%     \end{subfigure}
     \caption{(a) Thick: the error $  \|\rho_C(t) - \rho_{OR }(t)\|_1$ for $T_{a,\text{adj}}$. Thin: same for Davies $\|\rho_D(t) - \rho_{OR }(t)\|_1$. Dashed: The change in the solution $\|\rho_{OR,I }(t) -\rho_{OR,I }(0)\|_1$ induced by relaxation (our error should be small w.r.t. it). Dotted: the tightest upper bound derived in this paper [Eq.~(\ref{strongestBound})], bounding the thick line. Dash-Dotted: the error $  \|\rho_C(t) - \rho_{OR }(t)\|_1$ for $T_{a,\text{num. opt}}$. (b) The average error $ \frac{1}{T}\int_0^T \|\rho_C(t) - \rho_{OR }(t)\|_1 dt$ as a function of  varying $\tau_{SB}$, holding $\tau_B$ fixed, for $T_{a,\text{adj}}$ (thick) and $T_{a,\text{num. opt}}$ (dash-dotted). The thin line shows the average error $ \frac{1}{T}\int_0^T \|\rho_D(t) - \rho_{OR }(t)\|_1 dt$ for Davies.}
\end{figure}
%%%%%
We illustrate the error in more detail in Fig.~\ref{earear} by plotting the differences
\begin{equation}
    \|\rho_{C,\text{adj}}(t) - \rho_{OR }(t)\|_1, \quad\|\rho_{C,\text{num. opt}}(t) - \rho_{OR }(t)\|_1, \quad \|\rho_D(t) - \rho_{OR }(t)\|_1, \quad \|\rho_{OR,I }(t) -\rho_{OR,I }(0)\|_1 \ ,
\end{equation}
as a function of $t$, where the CGME is taken for $T_{a,\text{adj}}$ and $T_{a,\text{num. opt}}$. Also in Fig.~\ref{earear} we plot our strongest upper bound on $\|\rho_{C,\text{adj}}(t) - \rho_{OR }(t)\|_1$ [Eq.~(\ref{strongestBound}), derived in Sec. \ref{errSec} below]. We use $T_{a} \to T_{a,\text{adj}}$ and  $\Lambda \to 4/\tau_{SB}^{\text{eff}}$   Comparing this bound to the observed error in Fig.~\ref{earear}, we see that even after this adjustment, the bound is still not tight. Nonetheless, the adjustment helps to match the optimum in Fig.~\ref{pose} better, and we believe the bounds can be improved by future work in this direction.

In Fig.~\ref{firstSim} we vary $\tau_{SB}$, and evaluate 
\begin{equation}
    \frac{1}{T}\int_0^T \|\rho_{C,\text{adj}}(t) - \rho_{OR }(t)\|_1 dt, \quad    \frac{1}{T}\int_0^T \|\rho_{C,\text{num. opt}}(t) - \rho_{OR }(t)\|_1 dt, \quad \frac{1}{T}\int_0^T \|\rho_D(t) - \rho_{OR }(t)\|_1 dt \ ,\label{avErToplot}
\end{equation}
for $T =2.56\tau_{SB}$. To find $T_{a,\text{num. opt}}$,  we vary $T_a$ for each value of $\tau_{SB}$ and  choose  the one that gives the smallest values for the corresponding expression in Eq.~\eqref{avErToplot}. We again add the similar average error of Davies to the plot. Since our family of CGMEs with different $T_a$ includes Davies as a $T_a \to \infty$ limit, numerical optimization will always give an improvement over Davies. A simpler adjusted $T_a$ that does not require optimization still outperforms Davies at moderate $\tau_B/\tau_{SB}$. We also note that the $\sqrt{\tau_B/\tau_{SB}}$ dependence of its error at moderate $\tau_B/\tau_{SB}$ is consistent with the intuition from our bounds, while the linear portion suggests that our bounds are not tight in some parameter regime.

%%%%%%%%%%%%%%%%%%%%%%%%
\subsection{Role of the Lamb shift in the CGME}
\label{sec:CGME-L}

We perform another test, where we drop the Lamb shift $H_{LS}$ of the CGME and obtain a corresponding solution $\rho_{CL}(t)$, as prescribed by Eqs.~\eqref{eq:CGME-L1} and~\eqref{eq:CGME-L2}. We then plot $\|\rho_{C,\text{adj}}(t) - \rho_{CL ,\text{adj}}(t)\|_1$ as a function of time, as shown in Fig.~\ref{lambBad}.
We again pick $T_{a,\text{adj}}= 1.15 T_{a,\text{opt}}$. We observe in Fig.~\ref{lambBad} that the Lamb shift has the same role as in the Davies equation: it affects the phases of the off-diagonal matrix elements of $\rho_C(t)$ in the eigenbasis. Since off-diagonal elements decay to zero for the Davies case, the effect of their phase on the norm difference first grows proportional to the phase, then disappears with the magnitude of those elements.

%%%%
\begin{figure}[h]
     \centering
\includegraphics[width=0.4\textwidth]{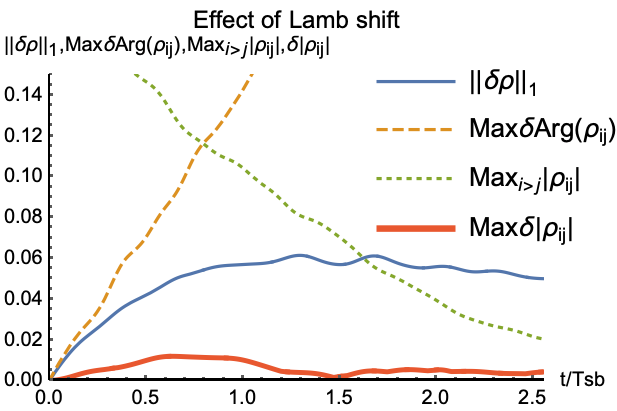}
     \caption{Thin: the error $\|\rho_C(t) - \rho_{CL }(t)\|_1$ of dropping the Lamb shift $H_{LS}$ for $T_{a,\text{adj}}$. Thick: the maximum difference in the magnitude of the matrix elements $\max_{ij}\delta|\rho_{C,ij}|$. Dashed: the maximum difference in the phase of the matrix elements. Dotted: the decay of the off-diagonal matrix elements.}
     \label{lambBad}
\end{figure}
%%%%

%%%%%%%%%%%%%%%%%%%%%%%

\section{Error bounds and estimates}
\label{errSec}

In this section we give the details of the various error bounds and estimates mentioned without proof in our derivation of the CGME in Sec.~\ref{deriv1}. Our strategy is to derive the error terms successively, and then add them up using the triangle inequality. We work our way up through the various approximations, starting from Born in Sec.~\ref{Born}, followed by Markov in Sec.~\ref{sec:Markov-error}, then time-averaging (or coarse-graining) in Sec.~\ref{sec:t-ave-err2}. In the course of the latter, we bound the error due to dropping part of the integration domain to achieve complete positivity, and analyze the optimization of $T_a$. But we start with a proof of Lemma~\ref{factDif}, which plays a central role in our analysis.

\subsection{Bound on the difference of solutions of differential equations differing by a bounded operator}
\label{sec:t-ave-err}

We first present a proof of Lemma~\ref{factDif}.
%,~ x(0) =y(0) \quad\Rightarrow \quad \forall t\leq \frac{c}{ \Lambda }: ~ \|x(t) -y(t) \| \leq \frac{f(c)\|\mathcal{E}\|}{\Lambda }
%\end{equation}
%where $f(c) =e^c-1$ and $\Lambda$ is a constant such that $\|\mathcal{L}\|\leq \Lambda$.
\begin{proof}
The first part of the Lemma is concerned with the pair of differential equations $\dot{x} =\mathcal{L}_t x +\mathcal{E}$ and $\dot{y} = \mathcal{L}_t y$.
Their formal solution is:
\begin{equation}
    x(t) = x(0)+ \int_0^t \mathcal{L}_\tau (x(\tau)) d\tau + \int_0^t \mathcal{E}d\tau, \quad y(t) =y(0)+ \int_0^t \mathcal{L}_\tau (y(\tau)) d\tau\ .
    \eeq 
%    so that 
%    \beq
%    x(t) -y(t) = \int_0^t \mathcal{L}_\tau (x(\tau) -y(\tau)) d\tau + \int_0^t \mathcal{E}d\tau .
%\end{equation}
Let $\Lambda>0$ be a constant defined as follows: $\sup_{\tau,x}\|\mathcal{L}_\tau(x)\|_1\leq \Lambda$ for all $x$ such that $\|x\|_1 = 1$. Since $\mc{L}$ is linear, and writing $x=\delta/\|\delta\|_1$, it follows that $\forall\delta$: $\|\mc{L}_{\tau}(\delta)\|_1 = \|\delta\|_1 \|\mc{L}_{\tau}(x)\|_1 \leq \|\delta\|_1 \Lambda$. Then,
assuming $x(0) =y(0)$, and taking the operator norm of the difference yields, for $\delta = x(t)-y(t)$:
\begin{equation}
    \|x(t) -y(t)\|_1 \leq \Lambda \int_0^t  \|x(\tau) -y(\tau)\|_1 d\tau + t \|\mathcal{E}\|_1\ ,
\end{equation}
Many different functions $f(t)\equiv \|x(t) -y(t)\|_1$ satisfy this inequality for all $t$, but they are all upper-bounded by the function $b_1(t) = \max\|x(t) -y(t)\|_1$ that saturates the inequality:
\begin{equation}
    \dot{b}_1 = \Lambda b_1 +\|\mathcal{E}\|_1, \quad b_1(0) =0\ .
    \label{eq:107}
\end{equation}
The solution is
\bes
\begin{align}
    \label{eq:108a}    
    b_1(t) &= (e^{\Lambda t} -1)\frac{\|\mathcal{E}\|_1}{\Lambda }  \\
    &\leq (e^c-1)\frac{\|\mathcal{E}\|_1}{\Lambda} \quad \forall t\leq \frac{c}{ \Lambda} \ .
    \label{eq:108}
\end{align}
\ees
This proves the desired bound on $\|x(t) -y(t)\|_1 \leq b_1(t)$.

The second part of the Lemma is concerned with the more general, non-Markovian pair of differential equations $\dot{x}(t) =\int_0^tK_{t-\tau}(x(\tau))d\tau +\mathcal{E}$ and  $\dot{y}(t) = \int_0^tK_{t-\tau}(y(\tau))d\tau$. 
The first part of the Lemma in fact follows from this case as a corollary, but we presented it for clarity. 
We again subtract the formal solutions:
\begin{equation}
    x(t) - y(t) =\int_0^t\int_0^\tau K_{\tau-\theta}( x(\theta) - y(\theta))d\tau d\theta + \int_0^t\mathcal{E}d\tau \ .
\end{equation}
Define $\Lambda_t =\sup_{x}\|K_t(x)\|_1 $ for all $x$ such that $\|x\|_1 = 1$. By linearity of $K_t$ it follows as above that $\forall\delta$: $\|K_t(\delta)\|_1 \leq \Lambda_t \|\delta\|_1$. Therefore:
\begin{equation}
    \|x(t) - y(t)\|_1 \leq\int_0^t\int_0^\tau \Lambda_{\tau-\theta}\|x(\theta) - y(\theta)\|_1 d\tau d\theta + \int_0^t\|\mathcal{E}\|_1 d\tau \ .
\end{equation}
The upper bound $b_K(t)$ on $ \|x(t) - y(t)\|_1$ satisfies the equation:
\begin{equation}
    b_K(t) = \int_0^t\int_0^\tau \Lambda_{\tau-\theta}b_K(\theta)d\tau d\theta + \|\mathcal{E}\|_1 t\ .
\end{equation}
Taking the derivative once, we obtain:
\begin{equation}
    \dot{b}_K(t) = \int_0^t \Lambda_{t -\tau}b_K(\tau)d\tau  +  \|\mathcal{E}\|_1\ .
\end{equation}
Since the r.h.s. is positive, $b_K(t)$ is a monotonic function. So $b_K(\tau) \leq b_K(t)$, and it follows that:
\begin{equation}
        \dot{b}_K(t)\leq \int_0^t \Lambda_{t -\tau}b_K(t)d\tau  +  \|\mathcal{E}\|_1\ .
\end{equation}
In other words, we can again upper-bound $b_K(t)\leq b_K'(t)$, such that
\begin{equation}
    \dot{b}_K'(t)= \int_0^t \Lambda_{t -\tau}d\tau b_K'(t)  +  \|\mathcal{E}\|_1\ .
\end{equation}
Now let $\Lambda$ be a positive constant such that $\int_0^t \Lambda_\tau d\tau \leq \Lambda$ for all $t$. 
%Note that the $\Lambda$ given to us in the conditions of the Lemma satisfies:
%\begin{equation}
%    \int_0^t \Lambda_t dt \leq \Lambda
%\end{equation}
We thus arrive at 
\begin{equation}
    \dot{b_K}''(t)= \Lambda b_K''(t)  +  \|\mathcal{E}\|_1\ ,
\end{equation}
for a new upper-bound $b_K''(t)\geq b_K'(t)$. This is exactly the same as Eq.~\eqref{eq:107}, and hence the Lemma still holds for non-Markovian equations.
\end{proof}

Note that for our purposes $\Lambda = 4/\tau_{SB}$ is usually taken, thanks to Eq.~(\ref{eq:|L|}, \ref{newLambda}), and that the non-Markovian bound will be generalized in Sec.~\ref{Born}.
The exponential on the r.h.s. of  Eq.~\eqref{eq:108} may be troublesome. Under additional assumptions we can prove a much tighter linear in $t$ bound presented in Appendix \ref{LemmaTwo}, yet we did not find a way to recast all of our results in that tighter form.

%%%%%%%%
\subsection{Born approximation error}
\label{Born}

We start with an estimate of the error associated with making the Born approximation, which we presented as the estimate $\|\mathcal{E}_B\|_1 = O\left(\frac{\tau_B}{\tau_{SB}^2}\right)$ in Eq.~\eqref{eq:Born21}. I.e., we would like to bound the error $\mathcal{E}_B = \dot\rho_{\text{true},I}(t) - \int_0^t K_{t-\tau}^{2,B}(\rho_{\text{true},I}(\tau)) d\tau$. It is difficult to do so directly since $\rho_{\text{true},I}(t)$ is unknown, so instead we will first use a proxy for $\rho_{\text{true},I}(t)$, which we call $\rho_{B4,I}(t)$. The latter is the solution of the master equation obtained by iterating the substitution of $\rho_{\text{tot}} =\rho_{\text{tot}}(0) - i \int_0^t [V(\tau),\rho_{\text{tot}}(\tau)]d\tau$ [Eq.~\eqref{eq:16b}]
into 
${\dot\rho}_{\text{tot}}(t) =-i[V(t),\rho_{\text{tot}}(t)]$ [Eq.~\eqref{eq:16a}] to 4th order, and only then applying the Born approximation. After developing an understanding of the 4th order we proceed to the infinite series. We make certain assumptions about its convergence, and comment below on their compatibility with the known convergence radius of the Dyson series. 
%\DL{This can be addressed using the observations in our BC discussion}
%which can be checked in a future work. \DL{to be finalized} \EM{I've written one way to complete the sentence, but if you want to add anything about why the Dyson series is generally converging, please do} 
Let us first define the type of bath for which we can carry out this procedure in order to arrive at a bound.

We refer to a bath as exponential if its correlation function $\mathcal{C}(t)$ decays exponentially. This type of assumption is both common and convenient. For example, as we will see in more detail below, for a Gaussian bath (one which satisfies Wick's theorem) the times $t_1<t_2<t_3<t_4$ appear in the estimate of the Born error [they enter terms such as $\mathcal{C}(t_1-t_3)\mathcal{C}(t_2-t_4)$]. If $\mathcal{C}(t)$ is exponentially decaying with a characteristic time $\tau_B$, that immediately lets us conclude that only $t_1,t_2,t_3,t_4$ that are all within $\sim \tau_B$ of each other contribute to the estimate, and the bound on the Born error acquires a small factor of  $\tau_B/\tau_{SB}$. However, an exponentially decaying $\mathcal{C}(t)$ may be an unrealistically strong assumption (e.g., it is hard to satisfy at low temperatures). 
Therefore it is desirable to introduce a condition weaker than exponential decay, which we refer to as a non-exponential bath. Specifically, we consider the convergence of integrals such as
$\int_{0}^{\infty}|\mathcal{C}(t)| t^n dt$
for some integer $n\ge 0$. Such a condition was used in Ref.~\cite{ABLZ:12-SI} to control the error of the subsequent Markov approximation. Here we show how these integrals  for $n=0,1$ control our estimate of the Born error for a Gaussian bath. The physical meaning of these integrals was defined in Eq.~\eqref{eq:T1tauB}. A more general requirement on a non-Gaussian bath can be derived in the same way. However, that requirement will involve a higher-order correlation function that does not have a straightforward and concise interpretation.

Recall the steps of the derivation where we make the Born approximation, leading to Eq.~\eqref{eq:rho_Born}. 
For brevity, we drop the interaction picture subscript, replace the time argument by a subscript, and place the approximation level $B$ in the superscript: $\rho_{B,I}(t) = \rho_t^B$. Written out explicitly, Eq.~\eqref{eq:rho_Born} in terms of $\rho_t^B$ is:
\begin{equation}
{\dot\rho}^B_t =\int_0^t K_{t-\tau}^{2,B}(\rho_\tau^{B}) d\tau=\int_0^t \mathcal{C}(\tau-t) (A_t \rho^B_\tau A_\tau - \rho^B_\tau  A_\tau A_t) +\mathcal{C}(t-\tau) (A_\tau \rho^B_\tau  A_t -  A_t A_\tau\rho^B_\tau )d\tau \ .
\label{eq:Born84}
\end{equation}
%where we have introduced the notation for the kernel $K_{t-\tau}^{2,B}$ of the superoperator. 
The formal solution is:
\begin{equation}
{\rho}^B_\tau =\rho_0 +\int_0^{\tau}d\theta \int_0^\theta dm \mathcal{C}(m-\theta) (A_\theta \rho^B_m A_m - \rho^B_m  A_m A_\theta) +\mathcal{C}(\theta-m) (A_m \rho^B_m  A_\theta -  A_\theta A_m\rho^B_m ) .
\end{equation}
We substitute this solution back into Eq.~\eqref{eq:Born84}, a necessary step because our estimate of the Born error will come from the 4th order terms, so we need to subtract the original solution. The full expression to be subtracted is:
\bes
\label{eq:86}
\begin{align}
{\dot\rho}^B_t &=\int_0^t K_{t-\tau}^{2,B}\left(\int_0^\tau \int_0^\theta K_{\theta-m}^{2,B}(\rho_m^{B}) dm d\theta \right) d\tau=\int_0^t K_{t-m}^{4,B}(\rho_m^{B}) d\tau \\
&=\int_0^t \mathcal{C}(\tau-t) (A_t \rho_0 A_\tau - \rho_0  A_\tau A_t) +\mathcal{C}(t-\tau) (A_\tau \rho_0  A_t -  A_t A_\tau\rho_0 )d\tau
\label{firstLine}\\
&+\int_0^td\tau\int_0^{\tau}d\theta \int_0^\theta dm  \\
[ &\mathcal{C}(\tau-t)\mathcal{C}(m-\theta) A_t (A_\theta \rho^B_m A_m - \rho^B_m  A_m A_\theta)  A_\tau + \mathcal{C}(\tau-t)\mathcal{C}(\theta-m) A_t (A_m \rho^B_m  A_\theta -  A_\theta A_m\rho^B_m )  A_\tau \\
-&\mathcal{C}(\tau-t)\mathcal{C}(m-\theta) (A_\theta \rho^B_m A_m - \rho^B_m  A_m A_\theta)   A_\tau A_t - \mathcal{C}(\tau-t)\mathcal{C}(\theta-m) (A_m \rho^B_m  A_\theta -  A_\theta A_m\rho^B_m )  A_\tau A_t \\
+&\mathcal{C}(t-\tau) \mathcal{C}(m-\theta) A_\tau (A_\theta \rho^B_m A_m - \rho^B_m  A_m A_\theta) A_t+\mathcal{C}(t-\tau)\mathcal{C}(\theta-m)  A_\tau (A_m \rho^B_m  A_\theta -  A_\theta A_m\rho^B_m )  A_t \\
-& \mathcal{C}(t-\tau)\mathcal{C}(m-\theta) A_t A_\tau(A_\theta \rho^B_m A_m - \rho^B_m  A_m A_\theta)  - \mathcal{C}(t-\tau)\mathcal{C}(\theta-m) A_t A_\tau(A_m \rho^B_m  A_\theta -  A_\theta A_m\rho^B_m )]\ . \label{lastLine}
\end{align}
\ees

Now, these terms constitute only a part of the expression for the derivative of the true density matrix ${\dot\rho}^{\text{true}}_t$ to the $4$th order in the system-bath interaction. The full $4$th order expression can be obtained, as mentioned above, by substituting the solution of the total evolution, 
$ \rho_{\text{tot}} =\rho_{\text{tot}}(0) - i \int_0^t [V(\tau),\rho_{\text{tot}}(\tau)]d\tau$ 
into 
${\dot\rho}_{\text{tot}}(t) =-i[V(t),\rho_{\text{tot}}(t)]$ 
three consecutive times (to generate a 4th order commutator), and only then applying the Born approximation. We call this $\rho_t^{B4}$ since it will differ from $\rho_t^{B}$ in the 4th order in the interaction $V(t)$. Doing so, we obtain:
\bes
\begin{align}
{\dot\rho}^{B4}_t &=\int_0^t \mathcal{C}(\tau-t) (A_t \rho_0 A_\tau - \rho_0  A_\tau A_t) +\mathcal{C}(t-\tau) (A_\tau \rho_0  A_t -  A_t A_\tau\rho_0 )d\tau \\
&+\Tr_{\text{b}}[A_t\otimes B_t,\int_0^t d\tau [A_\tau\otimes B_\tau, \int_0^\tau d\theta [A_\theta \otimes B_\theta, \int_0^\theta dm [A_m \otimes B_m, \rho^{B4}_m\otimes \rho_{\text{b}}]]]] \ .
\end{align}
\ees
Defined in this way, ${\rho}^{B4}_t$ is close to the $4$th order of the true state ${\rho}^{\text{true}}_t$ up to terms of $6$th order and higher, so we use $\rho^{B4}_t - \rho^B_t$ to estimate the actual leading order error $\rho^{\text{true}}_t - \rho^B_t$. We proceed to expand all the commutators:
\bes
 \begin{align}
 \label{eq:B4}
&{\dot\rho}^{B4}_t =\int_0^t K_{t-\tau}^{4,B4}(\rho_\tau^{B4}) d\tau=\int_0^t K_{t-\tau}^{2,B}(\rho_0) d\tau \\
&+\int_0^t d\tau \int_0^\tau d\theta \int_0^\theta dm \Tr_{\text{b}} B_t B_\tau B_\theta B_m \rho_{\text{b}}  A_t A_\tau A_\theta A_m  \rho^{B4}_m
 - \Tr_{\text{b}} B_t B_\tau B_\theta \rho_{\text{b}}  B_m A_t A_\tau A_\theta   \rho^{B4}_m A_m - \dots 
 \end{align}
 \ees
 There are 16 terms total in the last line, and the order of $B_t,B_\tau, B_\theta, B_m$ and $\rho_{\text{b}}$ is exactly the same as the order of $A_t,A_\tau, A_\theta, A_m$ and $\rho^B_m$. 
 
 Let us now assume that the bath is Gaussian. By definition, a Gaussian bath obeys Wick's theorem (or Isserlis' theorem~\cite{Isserlis:1916}), which states that at all orders the higher order correlation functions decouple into products of two-point correlation functions. For example, for the 4-point correlation function:
  \begin{equation}
 \Tr_{\text{b}} B_t B_\tau B_\theta B_m \rho_{\text{b}}  =   \Tr_{\text{b}} B_t B_\tau  \rho_{\text{b}}    \Tr_{\text{b}} B_\theta B_m \rho_{\text{b}}   +  \Tr_{\text{b}} B_t B_\theta  \rho_{\text{b}}    \Tr_{\text{b}} B_\tau B_m \rho_{\text{b}} +\Tr_{\text{b}} B_t B_m \rho_{\text{b}}    \Tr_{\text{b}} B_\tau B_\theta \rho_{\text{b}}\ . \label{full}
 \end{equation}
 By definition, $\mathcal{C}(t-\tau ) =  \Tr_{\text{b}} B_t B_\tau  \rho_{\text{b}} $. Thus:
   \begin{equation}
 \Tr_{\text{b}} B_t B_\tau B_\theta B_m \rho_{\text{b}}  =  \mathcal{C}(t-\tau)   \mathcal{C}(\theta-m) 
   + \mathcal{C}(t-\theta)   \mathcal{C}(\tau-m) +\mathcal{C}(t-m)   \mathcal{C}(\tau-\theta)
 \end{equation}
 We see that in Eq.~\eqref{eq:86} only the first term was present. One can check that the order of $A$ is exactly the same, so upon subtracting Eq.~\eqref{eq:86} from Eq.~\eqref{full} only the two terms $  \mathcal{C}(t-\theta)  \mathcal{C}(\tau-m) +\mathcal{C}(t-m)   \mathcal{C}(\tau-\theta)$ remain. 
 %That is not surprising since both expressions came from the same derivation.
These extra terms can be interpreted in terms of the corresponding diagrammatic technique. Indeed, note that $t>\tau>\theta>m$. If in Wick's theorem we have a pairing $t\leftrightarrow\tau,~  \theta\leftrightarrow m$ and the arrows are drawn above the time axis, they do not relate to each other, and each arc can be anywhere along the time axis. But if the pairing is $t\leftrightarrow\theta,~  \tau\leftrightarrow m$ or $t\leftrightarrow m,~  \tau\leftrightarrow \theta$, then the two arcs are stuck together due to the condition $t>\tau>\theta>m$; see Fig.~\ref{pairIllustrated}.
%%%%
 \begin{figure}[h]
\centering\includegraphics[width=0.5\linewidth]{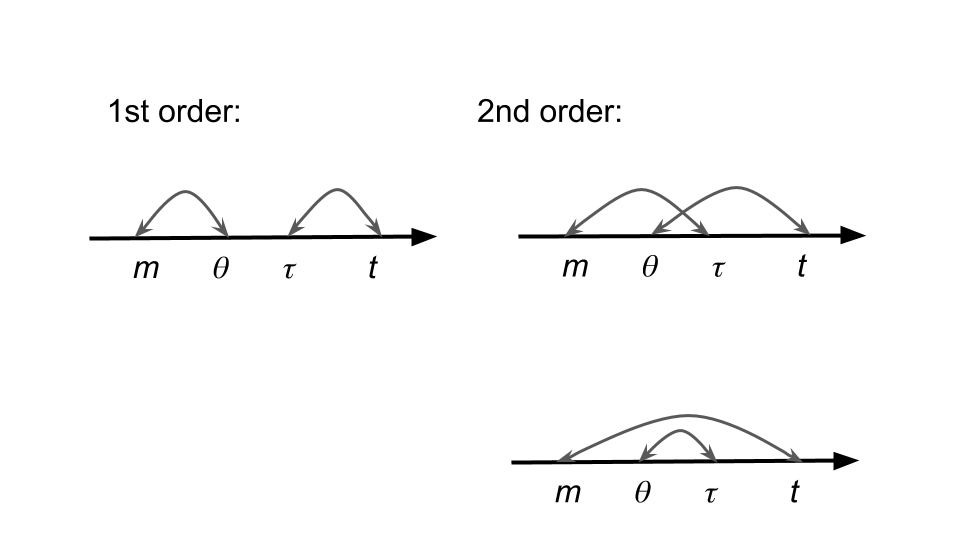}
\caption{Different possible pairings of $ \Tr_{\text{b}} B_t B_\tau B_\theta B_m \rho_{\text{b}}$ in Wick's theorem. Note that the first and second order are in $\tau_B/\tau_{SB}$, not in the coupling strength itself, in which both diagrams are 4th order.}
\label{pairIllustrated}
\end{figure}
%%%%%

At the moment we have two equations: the Born equation in two forms Eq.~\eqref{eq:Born84} and Eq.~\eqref{eq:86}, as well as the 4th order equation Eq.~\eqref{eq:B4}:
\begin{equation}
    {\dot\rho}^{B}_t = \int_0^t K_{t-\tau}^{2,B}(\rho_\tau^{B}) d\tau= \int_0^t K_{t-\tau}^{4,B}(\rho_\tau^{B}) d\tau, \quad {\dot\rho}^{B4}_t = \int_0^t K_{t-\tau}^{4,B4}(\rho_\tau^{B4}) d\tau,
    \label{eq:B+B4}
\end{equation}
where $K(\rho)$ are corresponding linear superoperators. $K^{2,B}$ is the 2nd order (in the bath interaction) operator from Eq.~\eqref{eq:Born84}, $K^{4,B}$ and $K^{4,B4}$ have terms up to the 4th order defined in Eq.~\eqref{eq:86} and Eq.~\eqref{eq:B4}. Define
%\bes
\begin{align}
    \mathcal{E}_{B4}(\rho^{\text{any}}) &= \int_0^t (K_{t-\tau}^{4,B4}(\rho_\tau^{\text{any}}) - K_{t-\tau}^{4,B}(\rho_\tau^{\text{any}})) d\tau\ , 
\end{align}
for instance:
\beq
     \mathcal{E}_{B4}(\rho^{\text{B4}}) =  {\dot\rho}^{B4}_t - \int_0^t  K_{t-\tau}^{4,B}(\rho_\tau^{B4}) d\tau\ . 
     \eeq
Define
 \begin{align}
   \mathcal{E}_{B} &= {\dot\rho}^{\text{true}}_t - \int_0^t  K_{t-\tau}^{2,B}(\rho_\tau^{\text{true}}) d\tau \ . \label{eq:EbDef}
\end{align}
The quantity that appeared in the derivation is $\mathcal{E}_{B}$. We will now bound $\|\mathcal{E}_{B4}(\rho^{\text{any}})\|_1$ and discuss its generalization $\|\mathcal{E}_{Bk}(\rho^{\text{any}})\|_1$. We postpone the estimate for $\|\mathcal{E}_{B}\|_1$ till we develop more advanced tools. 

The difference $K_{t-\tau}^{4,B4}(\rho_\tau^{4,B4}) - K_{t-\tau}^{4,B}(\rho_\tau^{B4})$ is given by the two second order pairings in Fig.~\ref{pairIllustrated}, times the $16$ possible orders of $A_t,A_\tau, A_\theta, A_m$ and $\rho_m$:
\begin{equation}
\mathcal{E}_{B4} =\int_0^t d\tau \int_0^\tau d\theta \int_0^\theta dm (\mathcal{C}(t-\theta)   \mathcal{C}(\tau-m) +\mathcal{C}(t-m)   \mathcal{C}(\tau-\theta)) A_t A_\tau A_\theta A_m  \rho^{\text{any}}_m
  - \dots 
  \label{eq:92}
 \end{equation}
We take the trace norm of both sides:

 \begin{equation}
\|\mathcal{E}_{B4}(\rho^{\text{any}})\|_1 \leq \int_0^t d\tau \int_0^\tau d\theta \int_0^\theta dm \|(\mathcal{C}(t-\theta)   \mathcal{C}(\tau-m) +\mathcal{C}(t-m)   \mathcal{C}(\tau-\theta)) A_t A_\tau A_\theta A_m  \rho^{\text{any}}_m\|_1
  + \dots 
 \end{equation}
where we used the triangle inequality 
$\|A+B\|_1 \leq \|A\|_1 + \|B\|_1$.
We now use submultiplicativity [Eq.~\eqref{eq:submult}]. 
Noting that the first norm in the submultiplicativity inequality is the operator norm, we obtain:
 \begin{equation}
\|\mathcal{E}_{B4}(\rho^{\text{any}})\|_1 \leq \int_0^t d\tau \int_0^\tau d\theta \int_0^\theta dm \|(\mathcal{C}(t-\theta)   \mathcal{C}(\tau-m) +\mathcal{C}(t-m)   \mathcal{C}(\tau-\theta)) A_t A_\tau A_\theta A_m \| \|\rho^{\text{any}}_m\|_1
  + \dots 
 \end{equation}
The operator norm $\|A\|=1$ and we will set $c_a = \text{max}_t \|\rho^{\text{any}}_t\|_1$ such that $\|\rho^{\text{any}}\|_1 \leq c_a$:
 \begin{equation}
\|\mathcal{E}_{B4}(\rho^{\text{any}})\|_1 \leq c_a\int_0^t d\tau \int_0^\tau d\theta \int_0^\theta dm |(\mathcal{C}(t-\theta)   \mathcal{C}(\tau-m) +\mathcal{C}(t-m)   \mathcal{C}(\tau-\theta)) | 
  + \dots 
 \end{equation}
In particular, $\mathcal{E}_{B4}(\rho^B)$ is bounded with $c_B$ and $\mathcal{E}_{B4}(\rho^{B4})$ with $c_{B4}$. There are $16$ terms after the expansion of commutators:
   \begin{equation}
\|\mathcal{E}_{B4}(\rho^{\text{any}})\| \leq 16c_{a}\int_0^t d\tau \int_0^\tau d\theta \int_0^\theta dm |\mathcal{C}(t-\theta)|   |\mathcal{C}(\tau-m)| +|\mathcal{C}(t-m)|   |\mathcal{C}(\tau-\theta)| \ ,\label{ineq:Eb4}
 \end{equation}
where we have also used that $|\mathcal{C}(t-m)| = |\mathcal{C}(m-t)^*| = |\mathcal{C}(m-t)|$. What is left is to bound the integrals in Eq.~\eqref{eq:92}. It turns out that the following condition suffices:
\begin{equation}
\int_{0}^{\infty}|\mathcal{C}(t)| t^n dt \leq \frac{\tau_B^{n}}{\tau_{SB}} \ , \quad \text{for $n=0,1$}\ ,
\label{integrobound}
\end{equation}
which is automatically satisfied for $\tau_{SB}$ and $\tau_B$ as defined in Eq.~\eqref{eq:T1tauB}. We also change the order of integration variables using the condition $0\leq m\leq\theta\leq\tau\leq t$, e.g.:
\begin{equation}
    \int_0^t d\tau \int_0^\tau d\theta = \int_0^t d\theta \int_\theta^t d\tau .
\end{equation}
The first term in Eq.~\eqref{eq:92} is bounded as:
\bes
\begin{align}
\int_0^t d\tau \int_0^\tau d\theta \left[\int_0^\theta dm |\mathcal{C}(\tau-m)|\right] |\mathcal{C}(t-\theta)|  &\leq \frac{1}{\tau_{SB}} \int_0^t d\tau \int_0^\tau d\theta|\mathcal{C}(t-\theta)| 
= \frac{1}{\tau_{SB}} \int_0^t d\theta \int_\theta^t d\tau|\mathcal{C}(t-\theta)| \\
&= \frac{1}{\tau_{SB}} \int_0^t (t-\theta)d\theta |\mathcal{C}(t-\theta)| \leq \frac{\tau_B}{\tau_{SB}^2} ,
\end{align}
\ees
while the second term in Eq.~\eqref{eq:92} is bounded as:
\bes
\begin{align}
   \int_0^t d\tau \int_0^\tau d\theta \int_0^\theta dm |\mathcal{C}(t-m)|   |\mathcal{C}(\tau-\theta)| &= \int_0^t dm \int_m^t d\theta \left[\int_\theta^t d\tau |\mathcal{C}(\tau-\theta)|\right] |\mathcal{C}(t-m)|  \\ 
   &\leq  \frac{1}{\tau_{SB}} \int_0^t dm \int_m^t d\theta|\mathcal{C}(t-m)|  = \frac{1}{\tau_{SB}} \int_0^t (t-m)dm |\mathcal{C}(t-m)| \\
   &\leq \frac{\tau_B}{\tau_{SB}^2} \ .
\end{align}
\ees
Together, this yields:
\begin{equation}
\|\mathcal{E}_{B4}(\rho^{\text{any}})\|_1 \leq 32c_a\frac{\tau_B}{\tau_{SB}^2}\ ,
\end{equation}
which is the first hint for the result given in Eq.~\eqref{eq:Born21}.
%%DL: Commenting this out since I don't understand its relevance:
%We note that in Eq.~\eqref{eq:|L|} we had $\|{\dot\rho}^B_t\| \leq 4/\tau_{SB}$, so the proxy $\tau_{SB}\|{\dot\rho}_{4th}(t) - {\dot\rho}_{2nd}(t)\|$ for the relative error is $O(\tau_B/\tau_{SB})$. 
The actual error term appearing in Eq.~\eqref{eq:Born20} is $ \mathcal{E}_B$ as defined in Eq.~\eqref{eq:EbDef}. We can consider repeating the construction of $\rho^{B4}$ presented here for $\rho^{B6}, \rho^{B8} \dots$ as successive approximations to $\rho^{\text{true}}$. The corresponding errors can be defined:
\begin{equation}
     \mathcal{E}_{Bk}(\rho^{\text{any}}) = \int_0^t (K_{t-\tau}^{k,Bk}(\rho_\tau^{\text{any}}) - K_{t-\tau}^{k,B}(\rho_\tau^{\text{any}})) d\tau\ , 
\end{equation}
By employing the diagrammatic technique as in Fig.~\ref{pairIllustrated}, the higher orders of perturbation theory in the system-bath coupling can be shown to give higher orders in $\tau_B/\tau_{SB}$. There are also extra factors of $4t/\tau_{SB}$ that appear in the higher orders. For instance, $\mathcal{E}_{B6}$ has the form:
\begin{equation}
\|\mathcal{E}_{B6}(\rho^{\text{any}})\|_1 \leq 32c_a\frac{\tau_B}{\tau_{SB}^2}\left(1+ \frac{8t}{\tau_{SB}}+ O\left(\frac{\tau_B}{\tau_{SB}}\right)\right) ,
\end{equation}
and the diagrams for the second term are shown in Fig.~\ref{EBkfig}(a). We do not prove this form of $\mathcal{E}_{B6}$ here. We will bound all $t$-dependent terms of the first order in $\tau_B/\tau_{SB}$, by summing the diagrams in Fig.~\ref{EBkfig}(b):
%%%
\begin{figure}[t]
\includegraphics[width=0.8\linewidth]{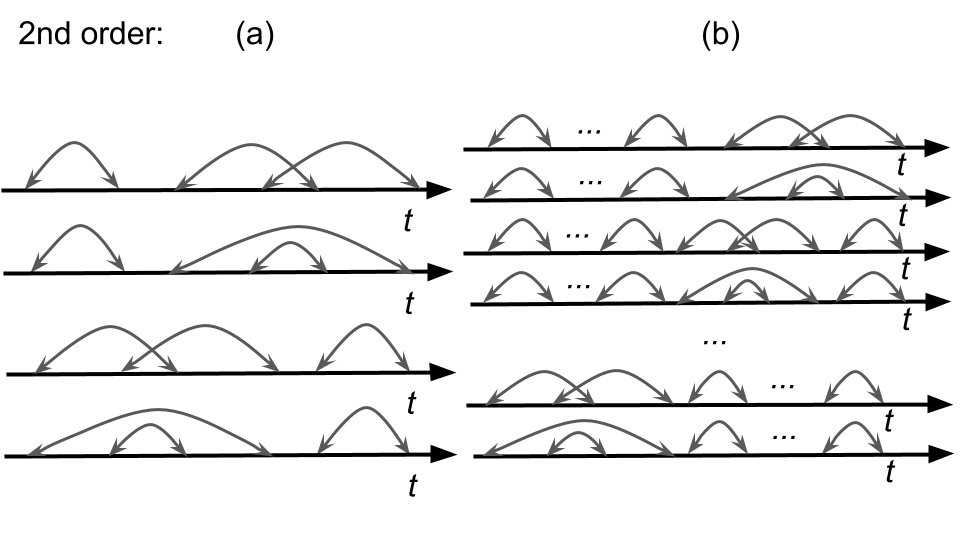}
\caption{ Diagrams contributing to the second order of (a) $\mathcal{E}_{B6}$, (b) $\mathcal{E}_{Bk}$ for arbitrary $k$.}
\label{EBkfig}
\end{figure}
%%%
\bes
\begin{align}
    \|\mathcal{E}_{Bk}(\rho^{\text{any}})\|_1 &\leq 32c_a\frac{\tau_B}{\tau_{SB}^2}\left(\sum_{m=1}^{k/2-1} \frac{m}{(m-1)!}\left(\frac{4t}{\tau_{SB}}\right)^{m-1}+ O\left(\frac{\tau_B}{\tau_{SB}}\right)\right) \\
    &\leq 32c_a\frac{\tau_B}{\tau_{SB}^2}\left((1+4 t/\tau_{SB})e^{4 t/\tau_{SB}}+ O\left(\frac{\tau_B}{\tau_{SB}}\right)\right)\ ,
\end{align}
\ees
for all $k$. Here  we fix time to be a constant: $\exists \epsilon(t), C(t)$ s.t. $O\left(\frac{\tau_B}{\tau_{SB}}\right) \leq C(t)\frac{\tau_B}{\tau_{SB}}$ for all $\frac{\tau_B}{\tau_{SB}}\leq \epsilon(t)$. This is a weak statement.  A stronger statement would include the time-dependence inside big-$O$, and have time-independent constants $\epsilon, C$: 
\begin{equation}
    \|\mathcal{E}_{Bk}(\rho^{\text{any}})\|_1 \leq 32c_a e^{4t/\tau_{SB}}\frac{\tau_B}{\tau_{SB}^2}  (1+4 t/\tau_{SB}+O(e^{4 t/\tau_{SB}}\tau_B/\tau_{SB}))
\end{equation}
We do not know how to prove such a statement as of yet. 
%\EM{I actually did prove the above with diverging in $k$ coefficient, but it requires even more infinite summations and combinatorics.} 
The proof would rely on the diagrammatic calculation of subleading terms. It is straightforward, but cumbersome, and we do not attempt it here. Instead we make two \emph{assumptions}:
\begin{enumerate}
    \item In the $k\to \infty$ limit, there is a finite radius of convergence in 
    %\EM{I did not see the convergence in my back of the envelope estimate with loose combinatoric bounds} 
    $e^{4t/\tau_{SB}}\tau_B/\tau_{SB}$:
    \begin{equation}
         \|\mathcal{E}_{B\infty}(\rho^{\text{any}})\|_1 \leq 32c_a e^{4t/\tau_{SB}}\frac{\tau_B}{\tau_{SB}^2}  (1+4t/\tau_{SB}+O(e^{4t/\tau_{SB}}\tau_B/\tau_{SB}))\ .
         \label{eq:128}
    \end{equation}
    \item The true solution is the limit of the perturbative series:
    \begin{equation}
{\dot\rho}^{\text{true}}_t = \int_0^t  K_{t-\tau}^{\infty,B\infty}(\rho_\tau^{\text{true}}) d\tau\ .
    \end{equation}
\end{enumerate}

To prove convergence, one needs to control the factorial number of terms appearing in  Wick's theorem. Such a proof was already presented by Davies~\cite{Davies:74}. The slight adjustments needed to include an integral condition such as Eq.~\eqref{integrobound} are a somewhat tedious problem left for a future study. 

Our definition of $\mathcal{E}_{B}$ under this assumption contains $\mathcal{E}_{B\infty}$ discussed above, as well as an extra term:
\bes
\begin{align}
     \mathcal{E}_{B} &= {\dot\rho}^{\text{true}}_t - \int_0^t  K_{t-\tau}^{2,B}(\rho_\tau^{\text{true}}) d\tau  = \int_0^t  (K_{t-\tau}^{\infty,B\infty}(\rho_\tau^{\text{true}}) - K_{t-\tau}^{2,B}(\rho_\tau^{\text{true}}))  d\tau \\
    \| \mathcal{E}_{B}\|_1 &\leq\left\|\int_0^t  (K_{t-\tau}^{\infty,B\infty}(\rho_\tau^{\text{true}}) - K_{t-\tau}^{\infty,B}(\rho_\tau^{\text{true}}))  d\tau \right\|_1 + \left\|\int_0^t  (K_{t-\tau}^{\infty,B}(\rho_\tau^{\text{true}}) - K_{t-\tau}^{2,B}(\rho_\tau^{\text{true}}))  d\tau \right\|_1 \\ 
    &\leq 
    32\frac{\tau_B}{\tau_{SB}^2}  (1+O(\tau_B/\tau_{SB})) + \left\|\int_0^t  (K_{t-\tau}^{\infty,B}(\rho_\tau^{\text{true}}) - K_{t-\tau}^{2,B}(\rho_\tau^{\text{true}}))  d\tau \right\|_1\ ,
\end{align}
\ees
where $c_a=1$ since the true evolution is CP. The second term will turn out to be time-dependent, and quite tricky to estimate. We will return to this problem after we develop some extra machinery.

%Now we generalize the methods developed above to finally bound the Born approximation error, picking up the thread from where we left off at the end of Sec.~\ref{Born}.

Let us first note the form of $K^{k,B}$, starting with $k=4$:

\begin{equation}
   \int_0^t K^{4,B}_{t-\tau}(\rho^{\text{any}}_\tau) d\tau =\int_0^t K^{2,B}_{t-\tau}\left(\rho_0 + \int_0^\tau\int_0^m K^{2,B}_{m-\theta}(\rho^{\text{any}}_m)dm d\theta\right)d\tau\ .
\end{equation}
The solution of
\begin{equation}
    \dot{\rho}^{B4} =\int_0^t K^{4,B4}_{t-\tau}(\rho_\tau^{B4}) d\tau= \int_0^t K^{4,B}_{t-\tau}(\rho^{B4}_\tau) d\tau + \mathcal{E}_{B4}
\end{equation}
is equivalent to solving a system of $4$th order equations that are our proxy for the true solution (which is $\infty$ order):
\begin{equation}
    \left\{ \begin{array}{ll}
         \dot{\rho}^{B4} = \int_0^t K^{2,B}_{t-\tau}(\rho^{(1)}_\tau) d\tau  + \mathcal{E}_{B4},& \rho^{B4}(0) = \rho_0  \\
        \dot{\rho}^{(1)} = \int_0^t K^{2,B}_{t-\tau}(\rho^{B4}_\tau) d\tau, & \rho^{(1)}(0) = \rho_0 \ .\end{array} \right.
\end{equation}
We will prove that the solution of the above system remains close to $\rho^{B}$, the solution of the equation after the Born approximation:
\begin{equation}
    \dot{\rho}^{B} = \int_0^t K^{2,B}_{t-\tau}(\rho^{B}_\tau) d\tau\ .
\end{equation}
We will also present a straightforward generalization to all $k$ which will allow one to bound $\|\mathcal{E}_B\|_1$. 

First write the equation in the compact form:
\begin{equation}
    \dot{\bf{p}} = \bf{Lp} + \bf{E}, \quad \bf{p} = \left(~^{\rho^{B4}}_{\rho^{(1)}}\right), \quad \bf{L} =  \left(~^{0~ \quad\quad ~ \int K_{t-\tau}d\tau}_{\int K_{t-\tau}d\tau ~ \quad\quad ~0}\right), \quad \bf{E} =  \left(~^{ \mathcal{E}_{B4}(\rho^{B4})}_{0 }\right)\ .
\end{equation}
Define a custom norm that is an infinity norm on the vector of operators, but for each operator the trace norm is taken:
\begin{equation}
    \|{\bf{v}}\|_c =\left\| ~^{v^1}_{v^2}\right\|_c = \text{max}(\|v^1\|_1, \|v^2\|_1)
\end{equation}
The triangle inequality is obeyed as required:
\bes
\begin{align}
    \|{\bf{v} + \bf{w}}\|_c  &= \text{max}(\|v^1 + w^1\|_1, \|v^2 +w^2\|_1) \leq \text{max}(\|v^1\|_1 + \|w^1\|_1, \|v^2\|_1 +\|w^2\|_1)  \\ 
    &\leq \text{max}(\|v^1\|_1, \|v^2\|_1) +\text{max}( \|w^1\|_1, \|w^2\|_1) = \|{\bf{v}}\|_c + \|{\bf{w}}\|_c\ .
\end{align}
\ees
The maximum of the custom norm of ${\bf{L}}$ over inputs of norm 1 is the same as before:
\begin{equation}
    \text{max}_{{\bf{p}}: \|{\bf{p}}\|_c =1}\|{\bf{Lp}}\|_c = \Lambda = \frac{4}{\tau_{SB}} \ ,
\end{equation}
so all the properties needed to prove Lemma~\ref{factDif} hold. We establish:
\begin{equation}
    \|\delta{\bf{p}}_t\|_c \leq \frac{e^{\Lambda t}-1}{\Lambda}\|{\bf{E}}\|_c
\end{equation}
in the original variables:
\begin{equation}
    \text{max}(\|\delta\rho^{(1)}_t\|_1,  \|\delta\rho^{B4}_t\|_1) \leq \frac{e^{\Lambda t}-1}{\Lambda} \|\mathcal{E}_{B4}  \|_1\ .
\end{equation}
Note that the error we are trying to bound is
\begin{equation}
   \| \mathcal{E}_B\|_1 =\left\|\dot{\rho}^{\text{true}} - \int_0^t K^{2,B}_{t-\tau}(\rho^{\text{true}}_\tau) d\tau \right\|_1 = \left\| \int_0^t \left(K^{\infty,B\infty}_{t-\tau}(\rho^{\text{true}}_\tau)-K^{2,B}_{t-\tau}(\rho^{\text{true}}_\tau\right) d\tau \right\|_1\ .
\end{equation}
Our approximation to it is bounded as:
\begin{equation}
    \left\|\dot{\rho}^{B4} - \int_0^t K^{2,B}_{t-\tau}(\rho^{B4}_\tau) d\tau \right\|_1 \leq  \left\| \int_0^t K^{2,B}_{t-\tau}(\rho^{(1)}_\tau -\rho^{B4}_\tau  ) d\tau \right\|_1 + \|\mathcal{E}_{B4}\|_1 \leq (2e^{\Lambda t}-1) \|\mathcal{E}_{B4}  \|_1 \ ,
\end{equation}
where we have used the triangle inequality: $\|\rho^{(1)}_\tau -\rho^{B4}_\tau   \|_1 \leq \|\delta\rho^{(1)}_\tau\|_1 + \|\delta\rho^{B4}_\tau   \|_1$. Now we can repeat the above proof for arbitrary $k$:
\begin{equation}
    \left\{ \begin{array}{ll}
         \dot{\rho}^{Bk} = \int_0^t K^{2,B}_{t-\tau}(\rho^{(1)}_\tau) d\tau  + \mathcal{E}_{Bk}(\rho^{Bk})& \rho^{Bk}(0) = \rho_0 \\
        \dot{\rho}^{(1)} = \int_0^t K^{2,B}_{t-\tau}(\rho^{(2)}_\tau) d\tau & \rho^{(1)}(0) = \rho_0\\ \dots & \dots\\
        \dot{\rho}^{(k/2-1)} = \int_0^t K^{2,B}_{t-\tau}(\rho^{B4}_\tau) d\tau & \rho^{(k/2-1)}(0) = \rho_0\end{array} \right.
\end{equation}
For appropriately defined ${\bf{L}}$, $\| \cdot \|_c$ the r.h.s. is still bounded by the same constant:
\begin{equation}
    \text{max}_{{\bf{p}}: \|{\bf{p}}\|_c =1}\|{\bf{Lp}}\|_c = \Lambda = \frac{4}{\tau_{SB}}\ ,
\end{equation}
which leads to
\begin{equation}
    \text{max}(\|\delta\rho^{(1)}_t\|_1, \dots, \|\delta\rho^{Bk}_t\|_1) \leq \frac{e^{\Lambda t}-1}{\Lambda} \|\mathcal{E}_{Bk}  \|_1\ ,
\end{equation}
and the error in the r.h.s. is bounded without change:
\begin{equation}
    \left\|\dot{\rho}^{Bk} - \int_0^t K^{2,B}_{t-\tau}(\rho^{Bk}_\tau) d\tau \right\|_1 \leq  \left\| \int_0^t K^{2,B}_{t-\tau}(\rho^{(1)}_\tau -\rho^{Bk}_\tau  ) d\tau \right\|_1 + \|\mathcal{E}_{Bk}\|_1 \leq (2e^{\Lambda t}-1) \|\mathcal{E}_{Bk}  \|_1 \ .
\end{equation}
Taking the $k\to \infty$ limit, we obtain:
\begin{equation}
     \|\rho^{\text{true}}_t-\rho^{B}_t\|_1 \leq \frac{e^{\Lambda t}-1}{\Lambda} \|\mathcal{E}_{B\infty} (\rho^{\text{true}}) \|_1 \leq  8(e^{4t/\tau_{SB}}-1) e^{4t/\tau_{SB}}\frac{\tau_B}{\tau_{SB}}  (1+4t/\tau_{SB}+O(e^{4t/\tau_{SB}}\tau_B/\tau_{SB})) \ ,
     \label{BorErrFirst}
\end{equation}
and, finally: 
\bes
\begin{align}
    \|\mathcal{E}_B\|_1 &\leq (2e^{\Lambda t}-1) \|\mathcal{E}_{B\infty} (\rho^{\text{true}})\|_1  \\
    &\leq 32(2e^{4t/\tau_{SB}}-1) e^{4t/\tau_{SB}}\frac{\tau_B}{\tau_{SB}^2}  (1+4t/\tau_{SB}+O(e^{4t/\tau_{SB}}\tau_B/\tau_{SB}))\ ,
    \label{eq:193b}
\end{align}
\ees
where we have used that $c_a =1$ for $\rho^{\text{true}}$. This result is the sought-after justification of the estimate $\|\mathcal{E}_B\|_1 = O\left(\frac{\tau_B}{\tau_{SB}^2}\right)$ given in Eq.~\eqref{eq:Born21}.

An immediate consequence of Eq.~\eqref{BorErrFirst} is that we can use it to obtain a bound on $c_B$, which we defined after Eq.~\eqref{eq:|L|B} as an upper bound on $\|\rho_{B,I}\|_1$. Since
\begin{equation}
   \|\rho^{B}_t\|_1 
   %= \|\rho^{B}_t-\rho^{\text{true}}_t+\rho^{\text{true}}_t\|_1
   \leq \|\rho^{\text{true}}_t\|_1 + \|\rho^{B}_t-\rho^{\text{true}}_t\|_1  = 1 +\|\rho^{\text{true}}_t-\rho^{B}_t\|_1 \ ,
\end{equation}
we obtain:
\begin{equation}
    c_B \leq 1 + 8(e^{4t/\tau_{SB}}-1) e^{4t/\tau_{SB}}\frac{\tau_B}{\tau_{SB}}  (1+4t/\tau_{SB}+O(e^{4t/\tau_{SB}}\tau_B/\tau_{SB})) \ .
    \label{cBbound}
\end{equation}

%%%%%%%%%%%

\subsection{Markov approximation error}
\label{sec:Markov-error}

This subsection follows Ref.~\cite{ALMZ:12} initially. Before the Markov approximation but after the Born approximation, the master equation had the form given in Eq.~\eqref{eq:rho_Born}.
%\begin{equation}
%    {\dot\rho}_{B,I}(t) =\mathcal{L}\rho_{B,I}  = \int_0^t \mathcal{C}(\tau-t) [A(t) \rho_I(\tau) A(\tau) - \rho_I(\tau) A(\tau) A(t)] d\tau +\textrm{h.c.} 
%\end{equation}
%where $\|\mathcal{L}\| \leq 4/\tau_{SB} $. Then $\|{\dot\rho}_{B,I}\| \leq \frac{4}{\tau_{SB}}$. 
The error due the Markov approximation $\rho_{B,I}(\tau) \mapsto \rho_{B,I}(t)$, in transitioning to Eq.~\eqref{RedStart}, is:
\begin{equation}
  \mathcal{E}_M =   \int_0^t \mathcal{C}(\tau-t) [A(t) (\rho_{B,I}(\tau) - \rho_{B,I}(t)) A(\tau) - (\rho_{B,I}(\tau) - \rho_{B,I}(t)) A(\tau) A(t)] d\tau +\textrm{h.c.}
\end{equation}
Since $\rho_{B,I}(t) - \rho_{B,I}(\tau) = \int_\tau^t {\dot\rho}_{B,I}(t') dt'$, we have
\beq 
\|\rho_{B,I}(t) - \rho_{B,I}(\tau)\|_1 \leq \|{\dot\rho}_{B,I}\|_1(t-\tau) \leq\frac{4(t-\tau)c_B}{\tau_{SB}}\  ,
\end{equation}
where we used Eq.~\eqref{eq:|L|B} in the second inequality.
Now we can estimate the norm of the error:
\begin{equation}
   \| \mathcal{E}_M\|_1 \le 4\int_0^t |\mathcal{C}(\tau-t)| \frac{4(t-\tau)}{\tau_{SB}}   d\tau  \leq 16 \frac{\tau_B c_B}{\tau_{SB}^2}\ .
   \label{eq:E_M}
\end{equation}
We found $c_B$ and saw that it is $O(1)$ in Eq.~\eqref{cBbound}.

Let us now use the Markovian Lemma  \ref{factDif} with Eq.~\eqref{eq:E_M}:
\begin{equation}
    \|\rho^{BM}_t-\rho^{B}_t\|_1  \leq 4 \frac{\tau_B  c_{B} }{\tau_{SB}} \left(e^{4t/\tau_{SB}}-1\right)\ . \label{cBMarkov}
\end{equation} 
%%%%%%%
A total expression for the Born and Markov error is [adding Eq.~\eqref{BorErrFirst}]: 
\begin{equation}
     \|\rho^{\text{true}}_t-\rho^{BM}_t\|_1  \leq 4 \frac{\tau_B  c_{B} }{\tau_{SB}} \left(e^{4t/\tau_{SB}}-1\right)
 + 8(e^{4t/\tau_{SB}}-1) e^{4t/\tau_{SB}}\frac{\tau_B}{\tau_{SB}}  (1+4t/\tau_{SB}+O(e^{4t/\tau_{SB}}\tau_B/\tau_{SB}))\ ,
\end{equation}
which simplifies to:
\begin{equation}
     \|\rho^{\text{true}}_t-\rho^{BM}_t\|_1  \leq 4(e^{4t/\tau_{SB}}-1)\frac{\tau_B}{\tau_{SB}} \left( c_{B}+ 2 e^{4t/\tau_{SB}} (1+4t/\tau_{SB}+O(e^{4t/\tau_{SB}}\tau_B/\tau_{SB}))\right)\ , \label{BMerrorFirst}
\end{equation}
or substituting in $c_B$:
\begin{equation}
     \|\rho^{\text{true}}_t-\rho^{BM}_t\|_1  \leq \frac{\left(e^{\frac{4t}{\tau_{SB}}} -1\right)\tau_B \left(4 + 8\left(1+\frac{4\left(e^{\frac{4t}{\tau_{SB}}} -1\right)\tau_B}{\tau_{SB}}\right) e^{\frac{4t}{\tau_{SB}}} \left(1+\frac{4t}{\tau_{SB}}+O\left(e^{\frac{4t}{\tau_{SB}}}\frac{\tau_B}{\tau_{SB}}\right)\right)\right)}{\tau_{SB}}  \ .
\end{equation}
The extra factor from $c_{B}$ can be absorbed into the big-$O$ notation:
\begin{equation}
     \|\rho^{\text{true}}_t-\rho^{BM}_t\|_1  \leq \frac{\left(e^{\frac{4t}{\tau_{SB}}} -1\right)\tau_B \left(4 + 8 e^{\frac{4t}{\tau_{SB}}} \left(1+\frac{4t}{\tau_{SB}}\right)\left(1+O\left(e^{\frac{4t}{\tau_{SB}}}\frac{\tau_B}{\tau_{SB}}\right)\right)\right)}{\tau_{SB}}  \ .
\end{equation}
A loose bound follows from the above:
\begin{equation}
     \|\rho^{\text{true}}_t-\rho^{BM}_t\|_1  \leq \frac{12\left(e^{\frac{4t}{\tau_{SB}}} -1\right)e^{\frac{8t}{\tau_{SB}}}\tau_B \left(1+O\left(e^{\frac{4t}{\tau_{SB}}}\frac{\tau_B}{\tau_{SB}}\right)\right)}{\tau_{SB}} \ . \label{BMloose}
\end{equation}
Sometimes we wish to report the bound briefly as:
\begin{equation}
     \|\rho^{\text{true}}_t-\rho^{BM}_t\|_1  = \left(e^{\frac{4t}{\tau_{SB}}} -1\right)O\left(e^{\frac{8t}{\tau_{SB}}}\frac{\tau_B}{\tau_{SB}}\right) \label{bigOexplain}
\end{equation}
\begin{proof}
The original big-O notation states that for $e^{\frac{4t}{\tau_{SB}}}\frac{\tau_B}{\tau_{SB}} \leq \epsilon$ the following inequality holds:
\begin{equation}
     \|\rho^{\text{true}}_t-\rho^{BM}_t\|_1  \leq \frac{12\left(e^{\frac{4t}{\tau_{SB}}} -1\right)e^{\frac{8t}{\tau_{SB}}}\tau_B \left(1+Ce^{\frac{4t}{\tau_{SB}}}\frac{\tau_B}{\tau_{SB}}\right)}{\tau_{SB}} \leq  \frac{12\left(e^{\frac{4t}{\tau_{SB}}} -1\right)e^{\frac{8t}{\tau_{SB}}}\tau_B \left(1+C\epsilon\right)}{\tau_{SB}}\ . \label{BornObtained}
\end{equation}
We first note that $e^{\frac{4t}{\tau_{SB}}} \leq e^{\frac{8t}{\tau_{SB}}}$, so if $e^{\frac{8t}{\tau_{SB}}}\frac{\tau_B}{\tau_{SB}}  \leq \epsilon$, then  $e^{\frac{4t}{\tau_{SB}}}\frac{\tau_B}{\tau_{SB}}  \leq \epsilon$ as well, and we are within the range of the original big-O notation. By taking $12(1+C\epsilon)$ as the new constant, we prove Eq.~\eqref{bigOexplain}.
\end{proof}

\subsection{Time averaging error and its optimization}
\label{sec:t-ave-err2}

\subsubsection{General analysis of time-averaging}

We consider the time-averaging operation, which leads to a more complicated bound on the difference of the solutions. We first prove a general result:
\begin{mylemma}
\label{LemTave}
For any first order linear differential equation $\dot{\rho}(t) = \mathcal{L}_t\rho(t)$ there is a time-averaged version
\begin{equation}
      \dot{\pi}(t) = \overline{\mathcal{L}_t}\pi(t)  ,\quad \overline{\mathcal{L}_t} = \frac{1}{T_a}\int_{-T_a/2}^{T_a/2}\mathcal{L}_{t+\tau}d\tau\ .
\end{equation}
Note that we use a different symbol $\pi(t)$ for the solution of the time-averaged equation. Assume the same initial condition: $\rho(0) = \pi(0)$.
Then the deviation between the two, $\delta\rho(t)\equiv \pi(t) - \rho(t)$, is bounded as:
\begin{align}
 \|\delta\rho(t)\|_1 \leq     [b_2(\min(t,T_a/2)) + c_\rho\Lambda  T_a/4]\max(e^{\Lambda (t-T_a/2)},1) - c_\rho\Lambda  T_a/4\ ,
 \label{eq:125}
\end{align}
where $\Lambda,c_\rho$ are constants such that $\forall t,\rho_0, \|\rho_0\|_1\leq1: ~ \|\mathcal{L}_t(\rho_0)\|_1\leq \Lambda$ and the solution $\rho(t)$ with the initial condition $\rho(0)=\rho_0$ is bounded as $\|\rho(t)\|_1\leq c_\rho$. The function $b_2(t)$ is:
\beq
 b_2(t) \equiv 4c_\rho\frac{t}{T_a}+ 
c_\rho\frac{4-2\Lambda T_a -(\Lambda T_a/2)^2}{\Lambda T_a} (1 -e^{\Lambda t})   \ .
\eeq
An immediate corollary is: 
\beq
\|\delta \rho (t)\|_1 = O(\Lambda T_a)  \text{    for    } t = O(\Lambda^{-1})\ .
\eeq
\end{mylemma}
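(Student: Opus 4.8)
The plan is to reduce \eqref{eq:125} to a Grönwall estimate, the essential point being that the instantaneous forcing is \emph{not} small but its time-integral is. Subtracting the two equations and writing $\delta\rho=\pi-\rho$ gives $\dot{\delta\rho}(t)=\overline{\mathcal{L}_t}\,\delta\rho(t)+\mathcal{E}(t)$ with $\mathcal{E}(t)\equiv(\overline{\mathcal{L}_t}-\mathcal{L}_t)\rho(t)$ and $\delta\rho(0)=0$. Since $\overline{\mathcal{L}_t}$ is a convex average of the $\mathcal{L}_{t+\tau}$ it inherits $\|\overline{\mathcal{L}_t}(x)\|_1\le\Lambda\|x\|_1$, but $\|\mathcal{E}(t)\|_1$ is generically $O(\Lambda c_\rho)$ and never becomes small, so a direct application of Lemma~\ref{factDif} would only give the useless $O(1)$ bound $(e^{\Lambda t}-1)\|\mathcal{E}\|_1/\Lambda$. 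First I would integrate to the exact form $\delta\rho(t)=\int_0^t\overline{\mathcal{L}_s}\,\delta\rho(s)\,ds+G(t)$ with $G(t)\equiv\int_0^t\mathcal{E}(s)\,ds$, so that taking the trace norm and using the bound on $\overline{\mathcal{L}_s}$ yields $\|\delta\rho(t)\|_1\le\Lambda\int_0^t\|\delta\rho(s)\|_1\,ds+\|G(t)\|_1$. The whole problem is thereby transferred onto bounding the \emph{accumulated} forcing $G$, which is small thanks to cancellation.

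The second step extracts this cancellation by a telescoping that uses $\dot\rho=\mathcal{L}_t\rho$. Defining the running average $\bar\rho(s)\equiv\frac1{T_a}\int_{-T_a/2}^{T_a/2}\rho(s+\tau)\,d\tau$, I note that $\frac1{T_a}\int_{-T_a/2}^{T_a/2}\mathcal{L}_{s+\tau}\rho(s+\tau)\,d\tau=\dot{\bar\rho}(s)$, whence
\begin{equation}
\mathcal{E}(s)=\dot{\bar\rho}(s)-\dot\rho(s)-\dot R(s),\qquad
\dot R(s)\equiv\frac1{T_a}\int_{-T_a/2}^{T_a/2}\mathcal{L}_{s+\tau}\big[\rho(s+\tau)-\rho(s)\big]\,d\tau .
\end{equation}
Integrating gives the key identity $G(t)=\big[\bar\rho(t)-\rho(t)\big]-\big[\bar\rho(0)-\rho(0)\big]-R(t)$. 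Because $\dot\rho=\mathcal{L}_t\rho$ implies the Lipschitz bound $\|\rho(u)-\rho(v)\|_1\le\Lambda c_\rho|u-v|$, each boundary term obeys $\|\bar\rho(t)-\rho(t)\|_1\le\Lambda c_\rho T_a/4$, and $\|\dot R\|_1\le\Lambda^2 c_\rho T_a/4$ gives $\|R(t)\|_1\le(\Lambda^2 c_\rho T_a/4)\,t$. For the difference $P(t)\equiv[\bar\rho(t)-\rho(t)]-[\bar\rho(0)-\rho(0)]$ I would use two complementary bounds: the uniform one $\|P(t)\|_1\le\Lambda c_\rho T_a/2$, and, since $\dot P(s)=\frac1{T_a}[\rho(s+T_a/2)-\rho(s-T_a/2)]-\mathcal{L}_s\rho(s)$ has norm $\le2\Lambda c_\rho$, the linear one $\|P(t)\|_1\le2\Lambda c_\rho t$ that vanishes at $t=0$. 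Together these give $\|G(t)\|_1\le\min(2\Lambda c_\rho t,\,\Lambda c_\rho T_a/2)+(\Lambda^2 c_\rho T_a/4)\,t$.

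The third step feeds this into the integral Grönwall lemma: from $\|\delta\rho(t)\|_1\le\Lambda\int_0^t\|\delta\rho(s)\|_1ds+g(t)$ with $g(t)\ge\|G(t)\|_1$ one gets $\|\delta\rho(t)\|_1\le g(t)+\Lambda\int_0^t e^{\Lambda(t-s)}g(s)\,ds$. The piecewise structure of $g$ produces exactly the two-regime bound \eqref{eq:125}: in the transient $t\le T_a/2$, where $\|G\|_1$ grows essentially linearly, the integral yields the (non-exponential) function $b_2(t)$; in the saturated regime $t>T_a/2$ the surviving linear term with slope $D=\Lambda^2 c_\rho T_a/4$ acts as a constant forcing and integrates to $(b_2(T_a/2)+c_\rho\Lambda T_a/4)\,e^{\Lambda(t-T_a/2)}-c_\rho\Lambda T_a/4$, which is precisely the shifted exponential in \eqref{eq:125}. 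The corollary then follows by Taylor expanding in the small parameter $\Lambda T_a$: for $t=O(1/\Lambda)$ one has $e^{\Lambda(t-T_a/2)}=O(1)$ and $b_2(T_a/2)=c_\rho\Lambda T_a/2+O((\Lambda T_a)^2)=O(\Lambda T_a)$, so the entire right-hand side is $O(\Lambda T_a)$.

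The main obstacle is the telescoping identity for $G$ and the attendant bookkeeping, not the Grönwall step: one must verify that the non-small operator $\mathcal{E}$ integrates to an $O(\Lambda c_\rho T_a)$ quantity, and then handle the initial-time transient carefully enough to recover the \emph{exact} coefficients of $b_2$ and the precise crossover at $t=T_a/2$. In particular, tracking the early regime requires the solution $\rho$ (and the bound $\|\rho\|_1\le c_\rho$) on the window $[-T_a/2,T_a/2]$ reached by the averaging kernel, and the sharp decreasing forcing $2c_\rho\Lambda+D-4c_\rho\Lambda t/T_a$ that characterizes $b_2$ via $b_2'=\Lambda b_2+\,(\text{that forcing})$ is what separates the exact statement from the cruder same-order estimate that the $\min$/linear bounds above already deliver.
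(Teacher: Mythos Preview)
Your plan is correct and coincides with the paper's: write $\delta\rho(t)=\int_0^t\overline{\mathcal{L}_s}\,\delta\rho(s)\,ds+G(t)$, split $G$ into a bulk piece bounded by $c_\rho\Lambda^2 T_a\,t/4$ plus a boundary piece of size $O(c_\rho\Lambda T_a)$, then Grönwall. The paper reaches this split by the change of variables $\theta'=\theta+\tau$ in the double integral $\frac{1}{T_a}\int_0^t\!\int\mathcal{L}_{\theta+\tau}\rho(\theta)$ rather than via your running-average identity, but the two decompositions are equivalent: your $R$ and the paper's bulk term obey the identical bound, and your $P(t)$, once you substitute $\rho(t+\tau)-\rho(t)=\int_t^{t+\tau}\mathcal{L}_u\rho(u)\,du$, has exactly the structure of their transient (cf.\ Fig.~\ref{timeAvtrick}). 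The only place your sketch stops short of the exact $b_2$ is the bound on $\|P(t)\|_1$: instead of the crude $\min(2c_\rho\Lambda t,\,c_\rho\Lambda T_a/2)$, observe that for $|\tau|>t$ the intervals $[t,t+\tau]$ and $[0,\tau]$ overlap, so $\bigl\|\int_t^{t+\tau}-\int_0^\tau\bigr\|_1\le 2t\,\Lambda c_\rho$ rather than $2|\tau|\,\Lambda c_\rho$. Carrying out the $\tau$-integral then gives $\|P(t)\|_1\le\tfrac{c_\rho\Lambda}{2}(4t-4t^2/T_a)$ for $t\le T_a/2$, which is precisely the paper's quadratic transient and yields the decreasing forcing $2c_\rho\Lambda-4c_\rho\Lambda t/T_a$ you correctly identified as missing. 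With that single refinement your argument reproduces the stated $b_2$ exactly.
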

The proof can be found in Appendix~\ref{LemProof}. The $O(\Lambda T_a)$ corollary enters the estimate for $\|\rho_{BM,I}(t) -\rho_{BMT,I}(t)\|_1$ we presented in Eq.~\eqref{solErr} in the derivation (using $\Lambda = 4/\tau_{SB}$). 
We will also need a compact bound on:
\begin{equation}
     \frac{b_2(T_a/2)}{c_\rho} =   e^{\Lambda T_a/2}\frac{(\Lambda T_a/2)^2 +4(e^{-\Lambda T_a/2}   -1 +\Lambda T_a/2)}{\Lambda T_a} -\frac{ \Lambda T_a}{4} \ .
\end{equation}
One can prove that the expression in the brackets is upper-bounded by
\begin{equation}
    e^{-\Lambda T_a/2}   -1 +\Lambda T_a/2 \leq (\Lambda T_a)^2/8\ .
\end{equation}
This allows us to bound
\begin{equation}
    b_2(T_a/2) \leq c_\rho\left(  e^{\Lambda T_a/2} \frac{3\Lambda T_a}{4} -\frac{ \Lambda T_a}{4}\right)\ .
    \label{eq:143}
\end{equation}
Note that the function $b_2(t)$ is monotonic: $b_2(t) \leq b_2(T_a/2)$. One can then loosen the bound in Lemma \ref{LemTave}:
\begin{equation}
  \|\delta\rho(t)\|_1 \leq   \left\{ \begin{array}{ll} ~ c_\rho\left( e^{\Lambda T_a/2} \frac{3\Lambda T_a}{4} -\frac{ \Lambda T_a}{4} \right), &t\leq T_a/2  
   \\ ~ c_\rho\left( e^{\Lambda t} \frac{3\Lambda T_a}{4} -\frac{ \Lambda T_a}{4} \right), &t> T_a/2 
\end{array} \right. = c_\rho\left(  e^{\Lambda \max (t,T_a/2)} \frac{3\Lambda T_a}{4} -\frac{ \Lambda T_a}{4}\right) \ .
\label{Lambded}
\end{equation}

\subsubsection{Time averaging of the Redfield ME}

We now focus on the specific equations of interest to us. 
We already established in Eq.~\eqref{newLambda} that we can choose $\Lambda= 4/\tau_{SB}$ (recall that $\Lambda$ needs to upper bound the r.h.s.: $\max_{t,\rho_{\text{test}}, \|\rho_{\text{test}}\|_1=1}\|\mathcal{L}^{BM,I}_t(\rho_{\text{test}})\|_1\leq \Lambda$).
The total time-averaging error~\eqref{Lambded} is now bounded as:
\begin{align}
    \|\rho_{BM}-\rho_{BMT}\|_1 =\|\delta\rho(t)\|_1\leq c_{BM}\left(  e^{\Lambda \max (t,T_a/2)} \frac{3 T_a}{\tau_{SB}} -\frac{  T_a}{\tau_{SB}}\right) \ , 
    \label{TAforAdj}
\end{align}
We will use both the compact bound of Eq.~(\ref{TAforAdj}), and the tightest bound available which is a repetition of Lemma \ref{LemTave}:
\begin{align}
    \|\delta\rho(t)\|_1 \leq b_2(t)=     [b_2(\min(t,T_a/2)) + c_{BM}\Lambda  T_a/4]\max(e^{\Lambda (t-T_a/2)},1) - c_{BM}\Lambda  T_a/4\ , \label{repeatTAer}\\  b_2(t) = 4c_{BM}\frac{t}{T_a} + 
c_{BM}\frac{4-2\Lambda T_a -(\Lambda T_a/2)^2}{\Lambda T_a} (1 -e^{\Lambda t}), \quad t< T_a/2 \ .
\end{align}

\subsubsection{Bound on the error due to enforcing complete positivity}

Recall that after the time-averaged equation is obtained, we need to drop an extra portion of the integral to obtain complete positivity [the transition to Eq.~\eqref{eq:26}]. Before that, Eq.~\eqref{eq:11} was:
\begin{equation}
{\dot\rho}_{BMT}(t) =-i[H,\rho_{BMT}] + \sum_{\omega, \omega'}\frac{1}{T_a}\int_{-T_a/2}^{T_a/2} dt' \int_{-t}^{t'} d\tau'\mathcal{C} (\tau' -t')e^{-i(\omega t'+\omega' \tau')}   (A_{\omega} \rho_{BMT} A_{\omega'} - \rho_{BMT} A_{\omega'} A_{\omega}) +\textrm{h.c.}\ ,
\end{equation}
which changes into Eq.~\eqref{eq:26} (same terms, but the portion of the integral is subtracted):
\bes
\begin{align}
{\dot\rho}_C(t) &=-i[H,\rho_C] + \sum_{\omega, \omega'}
\frac{1}{T_a}\int_{-T_a/2}^{T_a/2} dt' \int_{-t}^{t'} d\tau'\ \mathcal{C} (\tau'-t')e^{-i(\omega t'+\omega' \tau')}(A_{\omega} \rho_C A_{\omega'} - \rho_C A_{\omega'} A_{\omega}) +\textrm{h.c.} - \mathcal{E}_p \\
\mathcal{E}_p &= \sum_{\omega, \omega'}
\frac{1}{T_a}\int_{-T_a/2}^{T_a/2} dt' \int_{-t}^{-T_a/2} d\tau'\ \mathcal{C} (\tau'-t')e^{-i(\omega t'+\omega' \tau')}(A_{\omega} \rho_C A_{\omega'} - \rho_C A_{\omega'} A_{\omega}) +\textrm{h.c.}
\end{align}
\ees
Collecting $\sum_{\omega}$ into $A(t)$ we obtain:
\begin{equation}
    \mathcal{E}_p =
\frac{1}{T_a}\int_{-T_a/2}^{T_a/2} dt' \int_{-t}^{-T_a/2} d\tau'\ \mathcal{C} (\tau'-t')(A(t') \rho_C A(\tau') - \rho_C A(\tau') A(t')) +\textrm{h.c.}\ ,
\end{equation}
so that:
\begin{equation}
    \|\mathcal{E}_p\|_1 \leq\frac{4}{T_a}\int_{-T_a/2}^{T_a/2} dt' \left|\int_{-t}^{-T_a/2} d\tau'|\mathcal{C}(\tau' -t')|\right| \ .
\end{equation}
\begin{figure}[t]
\includegraphics[width=0.8\linewidth]{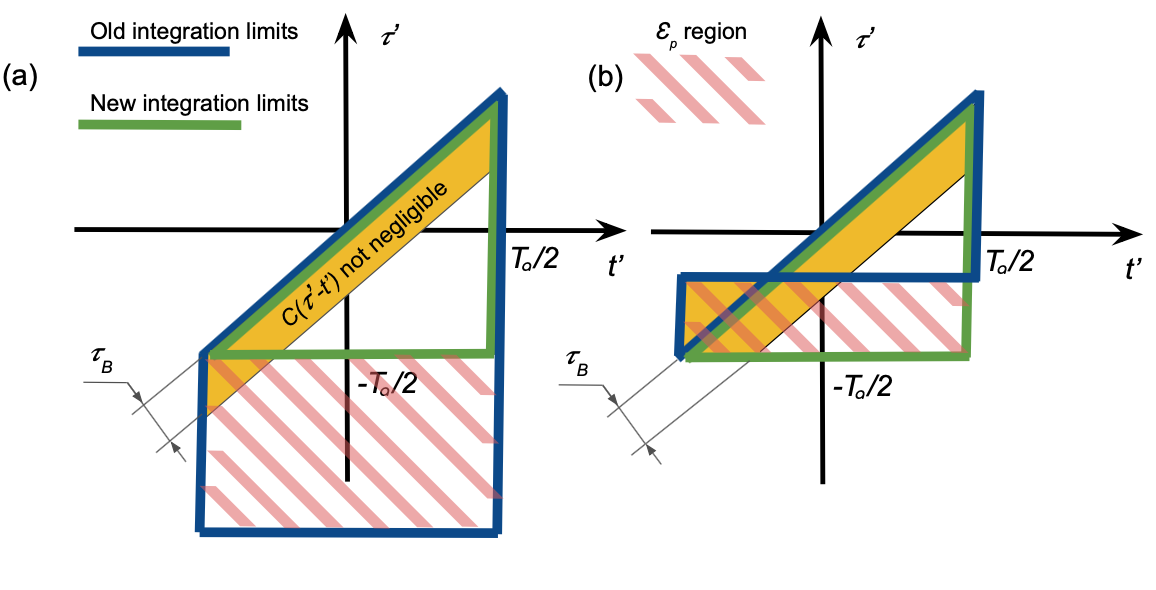}
\caption{Times contributing to $\mathcal{E}_p$ for (a) $t>T_a/2$ (b) $t<T_a/2$}
\label{regionTa}
\end{figure}
\noindent Since the resulting equation is CP, we used $\|\rho_C\|_1 =1$. We would like to bound this norm by splitting the integration domain into different times. The corresponding domains are illustrated in Fig.~\ref{regionTa}. 
For $t<T_a/2$ we note that the argument of $\mathcal{C}(\tau' - t')$ can be both positive and negative. Thus a factor of $2$ arises when we bound $\int dt'|\mathcal{C}(\tau' - t')| \leq 2/\tau_{SB}$: 
\begin{equation}
    \|\mathcal{E}_p\|_1 \leq\frac{8}{\tau_{SB}} \frac{T_a/2 -t}{T_a} \leq \frac{4}{\tau_{SB}}\ .
\end{equation}

For $t>T_a/2$, we perform a change of variables: $\tau' -t' = -\tau, ~ t' + T_a/2 = \theta$:
\begin{equation}
    \|\mathcal{E}_p\|_1 \leq\frac{4}{T_a}\int_{-T_a/2}^{-T_a/2} dt' \int_{-t}^{-T_a/2} d\tau'|\mathcal{C}(\tau' -t')| = \frac{4}{T_a}\int_{0}^{T_a} d\theta \int_{\theta}^{t + \theta -T_a/2 } d\tau|\mathcal{C}(\tau)|\ .
\end{equation}
We now change the order of integration, noting that $0<\theta<\tau$, thus obtaining the following upper bound:
\begin{equation}
    \|\mathcal{E}_p\|_1 \leq \frac{4}{T_a}\int_{0}^{T_a} d\theta \int_{\theta}^{t + \theta -T_a/2 } d\tau|\mathcal{C}(\tau)| \leq\frac{4}{T_a} \int_{0}^{t  +T_a/2 } \tau|\mathcal{C}(\tau)|d\tau\ .
\end{equation}
Recall that the total evolution time  $T$ entered the definition $\frac{\tau_B}{\tau_{SB}} = \int_0^T t |\mathcal{C}(t)| dt$ [Eq.~\eqref{eq:T1tauB-b}]. We can usually set $T=\infty$, except for an Ohmic bath [see Eq.~\eqref{Ohmm}], which is a borderline applicability case, with a logarithmic in $T$ divergence of the integral.  We define an evolution time with a collar of $T_a/2$: $T =\max_t(t+T_a/2)$:
\begin{equation}
    \|\mathcal{E}_p\|_1  \leq\frac{4}{T_a} \int_{0}^{T} \tau|\mathcal{C}(\tau)|d\tau = \frac{4\tau_B}{T_a\tau_{SB}}\ .
\end{equation}
Collecting everything, we obtain:
\begin{equation}
    \|\mathcal{E}_p\|_1(t) \leq \frac{4}{\tau_{SB}}\left(\theta(t-T_a/2)\frac{\tau_B}{T_a} +\theta(T_a/2 -t)\right) \ ,
    \label{eq:155}
\end{equation}
where $\theta(x) = 1, ~ x>0$ and zero otherwise. 

Recall that $\mathcal{E}_p$ is defined in Eq.~\eqref{eq:26} as the extra term appearing if the CGME is written with the same expression as Eq.~\eqref{eq:11} in the r.h.s.
We can now use the same idea as in the proof of Lemma~\ref{factDif} (Sec.~\ref{sec:t-ave-err}). Let us introduce $\delta \rho_2 = \rho_{BMT} - \rho_{C}$. A function $b_p$ that upper-bounds $\|\delta\rho_2\|_1$ is the solution of the differential equation
\beq
    \dot{b}_p = \Lambda b_p +\|\mathcal{E}_p\|_1(t), \quad b_p(0) =0 \ ,
    \label{eq:156}
\eeq
where we have used $\Lambda\geq\max_{t,\rho_{\text{test}}, \|\rho_{\text{test}}\|_1=1}\|\mathcal{L}^{BM,I}_t(\rho_{\text{test}})\|_1$ from Eq.~(\ref{newLambda}) as the upper bound on the norm of the superoperator on the r.h.s. of Eq.~\eqref{eq:11}. Indeed, by applying the triangle inequality to the definition of the r.h.s. of Eq.~\eqref{eq:11} we get:
\begin{equation}
    \mathcal{L}^{BMT,I}_t(\rho_{\text{test}}) = \frac{1}{T_a}\int_{-T_a/2}^{T_a/2} \mathcal{L}^{BM,I}_{t+\tau}(\rho_{\text{test}})d\tau, \quad \|\mathcal{L}^{BMT,I}_t(\rho_{\text{test}})\|_1 \leq \|\mathcal{L}^{BM,I}_t(\rho_{\text{test}})\|_1 \leq \Lambda \ .
\end{equation}
The solution to Eq.~(\ref{eq:156}) is:
\beq
    b_p(t) = e^{\Lambda t} \int_0^t e^{-\Lambda \tau}\|\mathcal{E}_p\|_1(\tau) d\tau \ .
\eeq
Substituting Eq.~\eqref{eq:155}, we obtain two cases:
\begin{align}
b_p(t) &= \frac{4}{\Lambda\tau_{SB}}e^{\Lambda t} \left(1-e^{-\Lambda t } \right),\quad  t< T_a/2\\
b_p(t) &= \frac{4}{\Lambda\tau_{SB}}e^{\Lambda t} \left((1-e^{-\Lambda T_a/2 } )  +\frac{\tau_B}{T_a}(e^{-\Lambda T_a/2 }-e^{-\Lambda t} ) \right),\quad  t> T_a/2
\ ,
\end{align}
which we can summarize as
\beq
b_p(t) \leq \frac{4}{\Lambda\tau_{SB}}[(e^{\Lambda t}  - \max(1,e^{\Lambda (t-T_a/2)})) + \frac{\tau_B}{T_a}\max(e^{\Lambda (t-T_a/2 )}-1,0 ) ] \ . \label{zeroStart}
\eeq
For a concise form we change the first term using  $1-e^{-x}\leq x$:
\beq
b_p(t) \leq e^{\Lambda t}\frac{2T_a}{\tau_{SB}} + \frac{4\tau_B}{\Lambda T_a\tau_{SB}}\max(e^{\Lambda (t-T_a/2 )}-1,0 )  \ . \label{LambdaStay}
\eeq

Let us now focus on the difference with $\rho_{BM,I}(t)$ which is the solution of Eq.~\eqref{RedStart}. 
We will keep $\Lambda\geq\max_{t,\rho_{\text{test}}, \|\rho_{\text{test}}\|_1=1}\|\mathcal{L}^{BM,I}_t(\rho_{\text{test}})\|_1$ in the bound in case it is $<4/\tau_{SB}$. Using Eqs.~\eqref{TAforAdj} and~\eqref{LambdaStay}, we have:
\begin{align}
   \|\rho_{BM,I}(t) - {\rho}_{C,I}(t)\|_1\leq c_{BM}\left(  e^{\Lambda \max (t,T_a/2)} \frac{3 T_a}{\tau_{SB}} -\frac{  T_a}{\tau_{SB}}\right) +  e^{\Lambda t}\frac{2T_a}{\tau_{SB}} + \frac{4\tau_B}{\Lambda T_a\tau_{SB}}\max(e^{\Lambda (t-T_a/2 )}-1,0 )\ . \label{fullWLam}
\end{align}
The tightest bound available that starts at zero at $t=0$ is the combination of Eqs.~\eqref{repeatTAer} and~\eqref{zeroStart}: 
\bes
\label{strongestBound}
\begin{align}
   \|\rho_{BM,I}(t) - {\rho}_{C,I}(t)\|_1\leq [b_2(\min(t,T_a/2)) + c_{BM}\Lambda  T_a/4]\max(e^{\Lambda (t-T_a/2)},1) - c_{BM}\Lambda  T_a/4+\\  +\frac{4}{\Lambda\tau_{SB}}[(e^{\Lambda t}  - \max(1,e^{\Lambda (t-T_a/2)})) + \frac{\tau_B}{T_a}\max(e^{\Lambda (t-T_a/2 )}-1,0 ) ]\ . \end{align}
\ees
where for $  t< T_a/2$
\begin{equation}
 b_2(t) = 4c_{BM}\frac{t}{T_a} + 
c_{BM}\frac{4-2\Lambda T_a -(\Lambda T_a/2)^2}{\Lambda T_a} (1 -e^{\Lambda t})
\end{equation}
We may optimize $T_a$ to minimize the bound Eq.~(\ref{fullWLam}), which we proceed to do next.

\subsubsection{Optimization of $T_a$}
\label{meatOptim}
We have used $\Lambda \leq 4/\tau_{SB}$ to obtain the form of the bound in Eq.~(\ref{fullWLam}), but we kept both $\Lambda$ and $\tau_{SB}$ to demonstrate the possibility of making the bound tighter by using different values for $\Lambda$ and $4/\tau_{SB}$ evaluated independently. To that end, we denote $c_{\Lambda} = 4/\Lambda \tau_{SB}\geq 1$ and  treat it as a constant below. At large $t \gg \Lambda^{-1}, ~ t>T_a/2$ the leading terms of Eq.~(\ref{fullWLam})  are:
\begin{equation}
   \|\rho_{BM,I}(t) - {\rho}_{C,I}(t)\|_1\leq c_{BM}e^{\Lambda t}\frac{3T_a}{\tau_{SB}} + e^{\Lambda t}  \frac{2T_a}{\tau_{SB}}   +c_{\Lambda}e^{\Lambda(t-T_a/2 )}\frac{\tau_B}{T_a} +o(e^{\Lambda t}) \ ,
   \label{weakBound}
\end{equation}
We can loosen the bound again via $e^{-\Lambda T_a/2} \leq 1$:
\begin{equation}
   \|\rho_{BM,I}(t) - {\rho}_{C,I}(t)\|_1\leq e^{\Lambda t} \left(\frac{(3c_{BM}+2)T_a}{\tau_{SB}}    +c_{\Lambda}\frac{\tau_B}{T_a}\right) +o(e^{\Lambda t}) \ .
\end{equation}
Minimizing the expression in the brackets, we find that $T_a = \sqrt{c_{\Lambda}\tau_{SB}\tau_B/(3c_{BM}+2)}$ [$=\sqrt{\tau_{SB}\tau_B/5}$ if $c_{BM}=c_{\Lambda}=1$, which is Eq.~\eqref{eq:T_a-opt}] for any $\tau_B/\tau_{SB}$ leads to an optimal bound at large enough times. The asymptotic behavior of the bound is:
\begin{equation}
   \|\rho_{BM,I}(t) - {\rho}_{C,I}(t)\|_1\leq 2e^{\Lambda t}\sqrt{\frac{c_\Lambda(3c_{BM}+2)\tau_B}{\tau_{SB}}}  + o(e^{\Lambda t}) \ . \label{startApp0}
\end{equation}
One may be concerned that the relaxation $e^{-\Lambda T_a/2} \leq 1$ reduced the quality of our bound. We checked this for $c_{BM}=c_{\Lambda}=1$, though we do not present that calculation here. The result is that after the optimization of the original Eq.~(\ref{strongestBound}), which is a more complicated function of $T_a$, the bound is unchanged to the leading order, and in fact acquires a multiplicative correction of the form $\left( 1-\frac{17}{15\sqrt{5}}\frac{\sqrt{\tau_B}}{\sqrt{\tau_{SB}}} + O(\tau_B/\tau_{SB})) \right)$.
We do not pursue this route since the effect is small, and the calculation is unreasonably long. Instead we use $T_a = \sqrt{c_{\Lambda}\tau_{SB}\tau_B/(3c_{BM}+2)}$ obtained above and find that the full expression for the error from Eq.~(\ref{fullWLam}) can be bounded as 
\begin{equation}
     \|\rho_{BM,I}(t) - {\rho}_{C,I}(t)\|_1\leq 2\sqrt{\frac{c_{\Lambda}(3c_{BM}+2)\tau_B}{\tau_{SB}}}\left(e^{\Lambda t+1}-3/5\right)\ ,  \label{looseBound}
\end{equation}
and it holds for all positive $\tau_B/\tau_{SB}$, $t/\tau_{SB}$ and $c_{BM},c_{\Lambda} \geq 1$. The derivation can be found in Appendix~\ref{boundingDetails}. This simple form allows us to finally bound $c_{BM}$. Since for a CP coarse-grained equation $\|\rho_C\|_1 =1$:
\begin{equation}
     \|\rho_{BM,I}(t)\|_1 \leq 1 +  2\sqrt{\frac{c_{\Lambda}(3c_{BM}+2)\tau_B}{\tau_{SB}}}\left(e^{\Lambda t+1}-3/5\right), \quad c_{BM}= 1 + 2\sqrt{\frac{c_{\Lambda}(3c_{BM}+2)\tau_B}{\tau_{SB}}}\left(e^{\Lambda t+1}-3/5\right)
\end{equation}
This is a quadratic equation on $c_{BM}$. Its $\geq 1$ solution is:
\begin{equation}
    c_{BM} = 1 + \frac12 \left( \sqrt{20 X + 9 X^2} +3 X \right) = 1 + O\left(\sqrt{\frac{\tau_B}{\tau_{SB}}}e^{\Lambda t}\right), \quad X  = 4\frac{c_{\Lambda}\tau_B}{\tau_{SB}}\left(e^{\Lambda t+1}-3/5\right)^2 \label{cBMexact}
\end{equation}
And here the big-$O$ notation means that there exist constants $x_0,C$ such that for $0\leq x\leq x_0, O(x) \leq C x$. We note that the choice of $T_a = \sqrt{c_{\Lambda}\tau_{SB}\tau_B/(3c_{BM}+2)}$ depends on the total evolution time, if we use the above bound on $c_{BM}$. We would like to approximate it by a time-independent $T_a$. Note that for $\sqrt{\frac{\tau_B}{\tau_{SB}}}e^{\Lambda t} \ll 1$ we can approximate $c_{BM}\approx 1$ and consequently $T_a =\sqrt{c_{\Lambda}\tau_{SB}\tau_B/5}$. Note also  that our construction was optimal for asymptotic $e^{\Lambda t} \gg 1$ (for the discussion of small times, see Appendix \ref{optAside}). We combine the two conditions as $1\ll e^{\Lambda t} \ll \sqrt{\frac{\tau_{SB}}{\tau_B}}$. In units of actual time, this corresponds to
\begin{equation}
   1\ll \Lambda t \ll \text{ln}\sqrt{\frac{\tau_{SB}}{\tau_B}}\ .
\end{equation}
This is a fairly narrow range for any realistic parameters, but it still becomes $[1,\infty]$ in the weak coupling limit. We choose to work with  $T_a =\sqrt{c_{\Lambda}\tau_{SB}\tau_B/5} $ that is approximately optimal within that range. From a purely mathematical point of view, there was no particular reason to choose it this way. From a physical point of view, we would like to work with errors of $\leq 10\% $, which would correspond to a range like this. We have used $T_{a,\text{adj}} =\sqrt{c_{\Lambda}\tau_{SB}\tau_B/5}$ and $T_{a,\text{theory}} =\sqrt{\tau_{SB}\tau_B/5}$ for the numerical simulations in Section \ref{numeric}. For theoretical purposes of the remainder of this Section, we can always use $c_{\Lambda}=1$ which is proven. We will keep the notation $e^{\Lambda t} = e^{4t/\tau_{SB}}$ for brevity. The error bound for $T_a =\sqrt{\tau_{SB}\tau_B/5} $ is derived as Eq.~\eqref{sqrt5allT}: 
\begin{equation}
     \|\rho_{BM,I}(t) - {\rho}_{C\sqrt{5},I}(t)\|_1\leq \sqrt{\frac{\tau_B}{5\tau_{SB}}} \left( \left(3e^{\Lambda t+1} -1\right)c_{BM} + \left(7e^{\Lambda t+1} -5\right)\right)\ . \label{simplifySqrt5}
\end{equation}
The constant $c_{BM}$ can be obtained numerically independently, and then one just needs to substitute it. For the theoretical use of the CGME, that route is not available, and instead one needs to substitute its upper bound Eq.~\eqref{cBMexact}. For the rest of this section, we will loosen the resulting bound to make the expression more compact.  Since we optimized for $e^{\Lambda t}\gg 1$, dropping constants will not be significant in the parameter regime of interest. We will also update the expression for $c_{BM}$ using $\sqrt{a+b} \leq \sqrt{a} +\sqrt{b}$ in Eq.~\eqref{cBMexact}:
\begin{equation}
     c_{BM} \leq 1 + \sqrt{5X} +3X, \quad \sqrt{X} \leq  2\sqrt{\frac{\tau_B}{\tau_{SB}}}e^{\Lambda t+1}\ . \label{cBMX}
\end{equation}
Now we simplify Eq.~\eqref{simplifySqrt5} by relaxing the constant terms:
\begin{equation}
     \|\rho_{BM,I}(t) - {\rho}_{C\sqrt{5},I}(t)\|_1\leq \sqrt{\frac{\tau_B}{5\tau_{SB}}}e^{\Lambda t
     +1} \left( 3 c_{BM} + 7 \right)  \ .
\end{equation}
With the relaxed form of $c_{BM}$ above this becomes:
 \begin{equation}
     \|\rho_{BM,I}(t) - {\rho}_{C\sqrt{5},I}(t)\|_1\leq \alpha(t) e (2\sqrt{5} + 6e \alpha(t) +\frac{36e^2}{\sqrt{5}}\alpha^2(t)) ,\quad \alpha(t) =\sqrt{\frac{\tau_B}{\tau_{SB}}}e^{\Lambda t}\ .
\end{equation}
The error near $t=0$ is doubled as compared to the longer formula~\eqref{simplifySqrt5}. Moreover, at each of decreasing timescales $t \sim \Lambda^{-1}$, $t\sim T_a$, $t=0$ the error gets progressively worse in our approximations, as compared to the tightest bound possible: the error at $t=0$ is supposed to be $0$. Nevertheless, we proceed, as this route allows us to obtain a concise form and the very small $t$ region is less important for our purposes. The bound can be further relaxed  with rounding, using $6e^2 \alpha \leq 13-2\sqrt{5}e + \frac{9e^3\alpha^2}{13-2\sqrt{5}e}$:
\begin{equation}
    \|\rho_{BM,I}(t) - {\rho}_{C\sqrt{5},I}(t)\|_1\leq 13\sqrt{\frac{\tau_B}{\tau_{SB}}}e^{4t/\tau_{SB}}\left(1 + 29 \frac{\tau_B}{\tau_{SB}}e^{8t/\tau_{SB}}\right)\ .
    \label{compactErr}
\end{equation}
where we plugged in $\Lambda = 4/\tau_{SB}$. We will use this form while computing the total error.
We still need to include the Markov and Born errors. Adding the bound in Eqs.~\eqref{compactErr} and~\eqref{BornObtained}, and using the triangle inequality, we finally obtain:
\begin{equation}
     \|\rho^{\text{true}}_t-\rho^{C\sqrt{5}}_t\|_1  \leq \frac{\left(e^{\frac{4t}{\tau_{SB}}} -1\right)e^{\frac{8t}{\tau_{SB}}}\tau_B \left(12+  O\left(e^{\frac{4t}{\tau_{SB}}}\frac{\tau_B}{\tau_{SB}}\right)\right)}{\tau_{SB}}  +\frac{13\sqrt{\tau_B}e^{\frac{4t}{\tau_{SB}}}\left(1 +  \frac{29\tau_B e^{\frac{8t}{\tau_{SB}}}}{\tau_{SB}}\right)}{\sqrt{\tau_{SB}}}\ . \label{BoundObtained}
\end{equation}
Recall that this bound is applicable within the region $e^{\frac{4t}{\tau_{SB}}}\frac{\tau_B}{\tau_{SB}} \leq C$ for some constant $C$ (from big-$O$), but within that region any triple $\{t,\tau_B, \tau_{SB}\}\geq 0$ is allowed. In particular, for sufficiently small $\tau_B/\tau_{SB}$, the error at any value of $t/\tau_{SB}$ can be shown to be small. As a corollary, we proved the validity in the weak coupling limit, but the above inequality is more general than that. In particular, this proves Eq.~\eqref{eq:58}, which uses $T_a= \sqrt{\tau_B \tau_{SB}/5}$. 

We will finally show how to simplify Eq.~(\ref{BoundObtained}) to the form $O\left(e^{\frac{6t}{\tau_{SB}}}\sqrt{\frac{\tau_B}{\tau_{SB}}}\right)$ presented in Eq.~\eqref{eq:59b}. Denoting $x =e^{\frac{6t}{\tau_{SB}}}\sqrt{\frac{\tau_B}{\tau_{SB}}} $, we can relax the bound as follows:
\begin{equation}
     \|\rho^{\text{true}}_t-\rho^{C\sqrt{5}}_t\|_1  \leq x^2 \left(12+  O\left(e^{\frac{4t}{\tau_{SB}}}\frac{\tau_B}{\tau_{SB}}\right)\right) +13x\left(1 +  29x^2\right)\ . \label{bigOwork}
\end{equation}
We then use the property of Big O notation: for any function $f(t/\tau_{SB},\tau_B/\tau_{SB})$, if $f(t/\tau_{SB},\tau_B/\tau_{SB}) = O\left(e^{\frac{4t}{\tau_{SB}}}\frac{\tau_B}{\tau_{SB}}\right)$, then also $f(t/\tau_{SB},\tau_B/\tau_{SB}) = O(x^2)$. We obtain:
\begin{equation}
     \|\rho^{\text{true}}_t-\rho^{C\sqrt{5}}_t\|_1  \leq x^2 \left(12+  O\left(x^2\right)\right) +13x\left(1 +  29x^2\right)\ .
\end{equation}
From this, the result of Eq.~\eqref{eq:59b} immediately follows:
\begin{equation}
     \|\rho^{\text{true}}_t-\rho^{C\sqrt{5}}_t\|_1  \leq O(x) = O\left(e^{\frac{6t}{\tau_{SB}}}\sqrt{\frac{\tau_B}{\tau_{SB}}}\right)\ .
\end{equation}

%%%%%%%%%%%%%%%%%%%%%
\subsection{The Davies-Lindblad error}
\label{DavSec}

We write the CGME with an explicit dependence on $T_a$:
\begin{equation}
{\dot\rho_C}(t) =-i[H +H_{\mathrm{LS}},\rho_C] + \sum_{\omega, \omega'}\gamma_{\omega\omega'}^{T_a} (A_{\omega} \rho_C A_{\omega'} - \frac{1}{2}\left\{ \rho_C , A_{\omega'} A_{\omega}\right\} )\ .
\end{equation}
The error bound is known for arbitrary $T_a$: see Eq.~\eqref{weakBound}. We now wish to compare the CGME with finite $T_a$ to the limit $T_a \to \infty$ of the same equation:
\begin{equation}
{\dot\rho_C}(t) =-i[H +H_{\mathrm{LS}},\rho_C] + \sum_{\omega, \omega'}\gamma_{\omega\omega'}^{T_a =\infty} (A_{\omega} \rho_C A_{\omega'} - \frac{1}{2}\left\{ \rho_C , A_{\omega'} A_{\omega}\right\} )  + \mathcal{E}_R\ .
\end{equation}
In Appendix \ref{app:D}, we show that this limit is just the Davies-Lindblad equation:
\begin{equation}
{\dot\rho_D}(t) =-i[H +H_{\mathrm{LS}},\rho_D] + \sum_{\omega, \omega'}\gamma_{\omega\omega'}^{T_a =\infty} (A_{\omega} \rho_D A_{\omega'} - \frac{1}{2}\left\{ \rho_D , A_{\omega'} A_{\omega}\right\} )\ .
\end{equation}
We also show in Appendix \ref{app:D} that the difference between the right-hand sides is bounded as:
\begin{equation}
    \|\mathcal{E}_R\|_1 \leq 4\sum_{\omega\omega'} |\gamma_{\omega \omega'}^{T_a} - \gamma_{\omega \omega'}^{T_a=\infty}|\leq \frac{4^n}{\tau_{SB}} O\left(\frac{1}{T_a\delta E}\right)\ .
\end{equation}
Here the system Hilbert space dimension $2^n$ appeared, and we note that $1/\delta E$ could contain an additional exponential dependence on $n$. The accumulated error in the solution is bounded as:
\begin{equation}
    \|\rho_{\text{true}}(t) -\rho_{\text{D}}(t)\|_1 \leq (e^{\Lambda t} -1) \frac{\|\mathcal{E}_R\|_1}{\Lambda} + \|\rho_{\text{true}} -\rho_{\text{C}}\|_1\ .
\end{equation}
In the long time limit the leading terms are: 
\begin{equation}
     \|\rho_{\text{true}}(t) -\rho_{\text{D}}(t)\|_1 \leq e^{4t/\tau_{SB}} 4^n O\left(\frac{1}{T_a\delta E}\right) + \frac{5T_a}{\tau_{SB}}e^{4t/\tau_{SB}} +e^{4t/\tau_{SB}}\frac{\tau_B}{T_a} +\|\rho_{\text{true}} -\rho_{\text{BM}}\|_1+o(e^{4t/\tau_{SB}}) \ .
\end{equation}
Assuming that $\frac{4^n}{\delta E} \gg \tau_B $, we find the optimal $T_a =2^nO(\sqrt{\tau_{SB}/\delta E})$. Using that together with a bound for $t\geq T_a/2$ we get:
\begin{equation}
\|\rho_{\text{true}}(t) -\rho_{\text{D}}(t)\|_1\leq  2^n O\left( \frac{e^{4t/\tau_{SB}}}{\sqrt{\tau_{SB}\delta E}} \left(1 +\frac{\tau_B}{\tau_{SB}} \right) \right) +\|\rho_{\text{true}} -\rho_{\text{BM}}\|_1 \leq  2^n O\left( e^{12t/\tau_{SB}} \left(\frac{1}{\sqrt{\tau_{SB}\delta E}} +\frac{\tau_B}{\tau_{SB}} \right) \right) \ .
 \label{LblBound}
\end{equation}
%but $\sim \tau_B$ error is still there as well. 
Assuming a constant system size $n=O(1)$, this is the form presented in Eq.~\eqref{eq:72}, our second main error bound. 

Then, assuming $\frac{4^n}{\delta E} \gg \tau_B $, we can neglect the second term. Recall the discussion in Sec.~\ref{sec:times}: $g = 1/\sqrt{\tau_B \tau_{SB}}$, where $g$ is the coupling strength $g$. Thus, we can express $\sqrt{1/\tau_{SB}\delta E} =g \sqrt{\tau_B/\delta E}$. If we set $t<c\tau_{SB}$, we can write $ \|\rho_{\text{true}}(t) -\rho_{\text{D}}(t)\|_1 =O(\sqrt{1/\tau_{SB}\delta E}) = O(g \sqrt{\tau_B/\delta E})$, as stated in Eq.~\eqref{eq:1} in the Introduction.

%%%%%%%%%%%%%%%%%%%%%%%%
\subsection{Redfield limits of integration error}
\label{errLimits}

There is one more error left to analyze: the error due to changing $\int^{t} \to \int^{\infty}$ in the Redfield equation. We already had the bound given in Eq.~\eqref{eq:31c}:
\begin{equation}
    \|\mathcal{E}_l\|_1 \leq 4c_{BM}\int_t^{\infty} |\mathcal{C} (t')|dt' \leq \frac{4c_{BM}}{t}\int_t^{T}t' |\mathcal{C} (t')|dt' + 4c_{BM}\int_T^{\infty} |\mathcal{C} (t')|dt'  \leq \frac{4c_{BM}\tau_B}{t\tau_{SB}} + \frac{4c_{BM}\epsilon_T}{\tau_{SB}}\ ,
    \label{eq:184}
\end{equation}
where $\epsilon_T= \tau_{SB}\int_T^{\infty} |\mathcal{C} (t')|dt'$ [Eq.~\eqref{eq:eps_T}]. We augment Eq.~\eqref{eq:184} by a trivial bound:
\begin{equation}
    \|\mathcal{E}_l\|_1 \leq 4c_{BM}\int_t^{\infty} |\mathcal{C} (t')|dt' \leq \frac{4c_{BM}}{\tau_{SB}} \min (1, \tau_B/t + \epsilon_T)\ .
\end{equation}
Since the bound on $\|\mathcal{E}_l\|_1$ is time dependent, we cannot directly use Lemma~\ref{factDif} for Eq.~\eqref{eq:31a}. Instead, we repeat its proof and integrate the corresponding differential equations as was done in the proof of Lemma~\ref{LemTave}. We define a function  $b_R(t)\geq \|\rho_{BM}(t) - \rho_R(t)\|_1$ (recall that $\rho_{BMl} = \rho_R$) that satisfies the following differential equation, analogous to Eq.~\eqref{eq:107}:
\begin{equation}
    \dot{b}_R = \frac{4b_R}{\tau_{SB}} + \frac{4c_{BM}}{\tau_{SB}} \min (1, \tau_B/t + \epsilon_T)
\end{equation}
Integrating it, we obtain:
\begin{equation}
    \|\rho_{BM}(t) - \rho_R(t)\|_1 { \leq \delta(t)} = c_{BM} e^{4t/\tau_{SB}} \int_0^{4t/\tau_{SB}}\min (1, \frac{4\tau_B}{x\tau_{SB}} + \epsilon_T)e^{-x}dx\ .
    \label{eq:182}
\end{equation}
As shown in Appendix~\ref{app:Red-details}, this results in the error bound:
\bes
\begin{align}
    \|\rho_{BM}(t) - \rho_R(t)\|_1  &\leq c_{BM}e^{4t/\tau_{SB}}  \left[  \frac{4\tau_B}{\tau_{SB}}\left(1 - \ln\left(1-e^{-\frac{4\tau_B}{\tau_{SB} (1-\epsilon_T)}}\right)\right)+\epsilon_T \right]\\
    &\leq c_{BM}e^{4t/\tau_{SB}} \left[\frac{4\tau_B}{{\tau_{SB}}} \left({1 + \frac{1}{e}} +\max\left(\text{ln}\frac{\tau_{SB}(1-\epsilon_T)}{4\tau_B},0\right) \right) + \epsilon_T\right]\ .
    \label{eq:183}
\end{align}
\ees
In other words, the $O(\tau_B/\tau_{SB})$ error scaling of the master equation just after Born-Markov approximation, noted in Eq.~\eqref{BMloose} acquires a logarithmic correction.

We will now show how to simplify the total error of Redfield to obtain Eq.\eqref{err:Redfield}. We add the error of $\rho_{BM}$ from Eq.~(\ref{BornObtained}):
\begin{align}
     \|\rho^{\text{true}}_t-\rho^{R}_t\|_1  \leq &\frac{\left(e^{\frac{4t}{\tau_{SB}}} -1\right)e^{\frac{8t}{\tau_{SB}}}\tau_B \left(12+  O\left(e^{\frac{4t}{\tau_{SB}}}\frac{\tau_B}{\tau_{SB}}\right)\right)}{\tau_{SB}}  +\\&+c_{BM}e^{4t/\tau_{SB}} \left[\frac{4\tau_B}{{\tau_{SB}}} \left({1 + \frac{1}{e}} +\max\left(\text{ln}\frac{\tau_{SB}(1-\epsilon_T)}{4\tau_B},0\right) \right) + \epsilon_T\right]\ . 
\end{align}
We will set $\epsilon_T =0$ in this estimate and use the following bound on $c_{BM}$ from Eq.~(\ref{cBMX}):
\begin{equation}
     c_{BM} \leq 1 + \sqrt{5X} +3X, \quad \sqrt{X} \leq  2\sqrt{\frac{\tau_B}{\tau_{SB}}}e^{\frac{4t}{\tau_{SB}}+1}\ . 
\end{equation}
where we denoted $x =e^{\frac{12t}{\tau_{SB}}}\frac{\tau_B}{\tau_{SB}}$.
After a series of simplifications similar to the ones done after Eq.~(\ref{bigOwork}) we arrive at:
\begin{equation}
     \|\rho^{\text{true}}_t-\rho^{R}_t\|_1  \leq O(x)(1 +\max\left(\text{ln}\frac{\tau_{SB}}{4\tau_B},0\right))\ .
\end{equation}
Since big O notation is defined for $x\leq \epsilon$ for arbitrary small $\epsilon$, it follows that $\tau_B/\tau_{SB} \leq \epsilon$ is also small, so the logarithm is the leading term:
\begin{equation}
     \|\rho^{\text{true}}_t-\rho^{R}_t\|_1  \leq O(x)\text{ln}\frac{\tau_{SB}}{\tau_B} = O\left(e^{\frac{12t}{\tau_{SB}}}\frac{\tau_B}{\tau_{SB}}\right)\text{ln}\frac{\tau_{SB}}{\tau_B}\ .
\end{equation}
This is Eq.~\eqref{err:Redfield}.

%%%%%%%%%%%%%%%%%%
\section{Summary and Conclusions}
\label{sec:conc}

Our primary goal in this work was to derive a Markovian master equation with a complete positivity guarantee that is in addition capable of incorporating arbitrary time-dependent driving, and that has a larger range of applicability than the Davies-Lindblad Markovian master equation. The master equation that achieves these desiderata is based on the coarse-graining approach introduced in Ref.~\cite{Majenz:2013qw}, which only considered the time-independent (no driving) case. We rederived this result by showing that it results from the non-CP Redfield master equation after neglecting part of the time-integration domain, which restored complete positivity. We then included arbitrary time-dependent driving and obtained a new CGME that is still CP, and retains essentially the same form as its time-independent counterpart. This is our central new result, given in Eq.~\eqref{tdCoarse}.

This time-dependent CGME depends on one free parameter, the coarse graining time $T_a$. We showed that it is optimal to choose it as proportional to the geometric mean of two key timescales, the bath correlation time $\tau_B$ and the fastest system decoherence timescale $\tau_{SB}$, given in Eq.~\eqref{eq:T1tauB}. These timescales are related to the bath spectral density $\g(\o)$ via the inequalities $2/\tau_{SB} \geq \g(\o)$ and $2\tau_B/\tau_{SB} \geq |\g'(\o)|$. The $1$-parameter family of CGME equations involves a continuous integral, or equivalently, a sum over  exponentially many terms in the frequency domain. Alongside this $1$-parameter family, we also derived an approximation in terms of $3$-parameter family ($T_a$, $\Delta \epsilon$, $k^*$), given in Eq.~\eqref{eq:discrete-approx}. We showed that if $[H,A]=O(1)$, where $H$ is the system Hamiltonian and $A\otimes B$ is the interaction with the bath [Eq.~\eqref{eq:Htot}], then there is no system-size dependence in $\Delta \epsilon, k^*$, where $k^*$ is the number of terms replacing the continuous integral.

Along the way we provided rigorous error bounds on three different master equations: Redfield, Davies-Lindblad, and the CGME. These bounds  [Eqs.~\eqref{err:Redfield}-\eqref{eq:59b}] are expressed in terms of the dimensionless ratio $\tau_B/\tau_{SB}$, and in the Davies-Lindblad case also in terms of the quantity $1/\sqrt{\tau_{SB}\delta E}$, where $\delta E$ is the minimum level spacing. These results are valid even when $\| B\|$ diverges, so they apply in particular to, e.g., oscillator baths. When the system-bath coupling strength $g$ is finite we can write $\tau_B/\tau_{SB} = (g \tau_B)^2$ and $1/\sqrt{\tau_{SB}\delta E} = g \sqrt{\tau_B/\delta E}$.
These bounds provide a new perspective on the range of applicability of Markovian master equations: the Redfield and CG master equations apply in the regime where $g \tau_B \ll 1$, but the Davies-Lindblad master equation applies when in addition the minimum level spacing is large relative to $g^2 \tau_B$. While these are all sufficient and not necessary conditions, if we informally view a large level spacing as a requirement, then it is a very onerous one indeed, since in many cases of interest the spacing shrinks rapidly (polynomially or even exponentially) with problem size, e.g., in adiabatic quantum computing~\cite{Albash-Lidar:RMP}. Thus the guaranteed range of applicability of the Davies-Lindblad master equation is severely limited, while the CGME achieves complete positivity without imposing this cost.

To summarize the latter in the simplest possible terms, what we have shown is that the CGME and Redfield master equations are valid for fast baths, in the sense of a very simple integral condition on the bath correlation function. Ohmic and super-Ohmic baths satisfy this condition, provided the system-bath coupling strength is weak but constant in system size (for a local system). For the Davies-Lindblad master equation the coupling strength needs to be exponentially small in the system size. However, the Redfield equation and the CGME are not on equal footing: the Redfield equation provides no guarantee of complete positivity, while the CGME does.

We list a few potential applications of the CGME:
\begin{itemize}
\item The CGME opens the door to studies of open system dynamics with arbitrary time-dependent drives $H(t)$. In particular, the drive need not be adiabatic. In fact, the {time-dependent} CGME even correctly describes a system driven by $\delta$-function pulses, as demonstrated in Sec.~\ref{ddSec} for dynamical decoupling. {Moreover, the time-dependent CGME overcomes an inherent problem in applying the Davies-Lindblad ME to circuit model quantum computing, in particular in the context of fault-tolerant quantum error correction. The latter ME is incompatible with the assumption of arbitrarily fast gates needed there~\cite{PhysRevA.73.052311}, but the time-dependent CGME can be consistently applied in this context. This opens the door to a rigorous analysis of fault-tolerance in a first-principles Markovian setting.}
\item Earlier studies of the lifetime of the toric code at finite temperature took advantage of the fact that the Davies generators are local for commuting models~\cite{Alicki:07a}. The CGME  provides local generators of open system finite temperature dynamics for non-commuting models (see Sec.~\ref{posLoc}). This opens the door to studying the finite-temperature lifetime of, e.g., topological subsystem codes~\cite{Bombin:2010aa} in the thermodynamic limit.
\item Many-body localization effects in quantum annealing --- previously studied only in the closed system setting~\cite{Altshuler2010} --- can now be extended to finite temperature open system quantum annealing. 
\end{itemize}

We hope that this work, including the applications above, will stimulate further research into the uses of the CGME, the only master equation we know of that provides a complete positivity guarantee, is locally generated, and applies for arbitrarily driven open quantum systems.

\acknowledgements
The research is based upon work (partially)
supported by the Office of the Director of National Intelligence (ODNI), Intelligence
Advanced Research Projects Activity (IARPA), via the U.S. Army Research Office contract
W911NF-17-C-0050. The views and conclusions contained herein are those of the authors and
should not be interpreted as necessarily representing the official policies or
endorsements, either expressed or implied, of the ODNI, IARPA, or the U.S. Government. The
U.S. Government is authorized to reproduce and distribute reprints for Governmental
purposes notwithstanding any copyright annotation thereon.

\newpage 

\appendix

\section{Proof that $\g(\o) > 0$}
\label{app:A}

\paragraph{Trivial case.} First consider $V = A \otimes B$. The function $\gamma(\omega)$ is the Fourier transform of the correlation function:
\begin{equation}
    \gamma(\omega) = \int_{-\infty}^\infty e^{i\omega t}\mathcal{C}(t) dt = \int_{-\infty}^\infty e^{i\omega t}\Tr[\rho_B B(t) B] dt
\end{equation}
Our derivations in this paper are for a stationary bath $\rho_B(t) = \rho_B$, or $[\rho_B,H_B] =0$. Note that $[\sqrt{\rho_B},H_B] =0$ follows.
\begin{equation}
    \gamma(\omega)= \sum_{kl} \int_{-\infty}^\infty e^{i\omega t}\Tr[\sqrt{\rho_B} e^{iH_Bt}\sqrt{\rho_B} B e^{-iH_Bt} B] dt
\end{equation}
Inserting the sum over bath eigenstates:
\begin{equation}
    \gamma(\omega)= \sum_{kl} \int_{-\infty}^\infty e^{i(\omega+ E_{kl}) t }\langle k|\sqrt{\rho_B} B |l\rangle \langle l| B\sqrt{\rho_B}|k\rangle dt
\end{equation}
Integration results in delta-functions:
\begin{equation}
    \gamma(\omega)= 2\pi\sum_{kl} \delta(\omega -E_{lk})|\langle l| B\sqrt{\rho_B}|k\rangle|^2 \geq 0
\end{equation}
This is enough for all the results in this paper. A more complicated interaction with environment, however, requires a different proof.
\paragraph{General case.}  We consider the case $V = \sum_i A_i\otimes B_i$, so that $\g(\o)$ becomes a matrix. $\rho_B$ will still be assumed stationary: $[\rho_B,H_B] =0$. Repeating the same steps, we arrive at
\begin{equation}
    \gamma_{ij}(\omega)= 2\pi\sum_{kl} \delta(\omega -E_{lk})\langle k|\sqrt{\rho_B} B_i |l\rangle \langle l| B_j\sqrt{\rho_B}|k\rangle
\end{equation}
Thus we need to prove that the matrix
\begin{equation}
    M_{ij} = \langle k|\sqrt{\rho_B} B_i |l\rangle \langle l| B_j\sqrt{\rho_B}|k\rangle
\end{equation}
is positive. Let $v_i = \langle k|\sqrt{\rho_B} B_i |l\rangle$, then
\begin{equation}
    M_{ij} = v_i v_j^*\ .
\end{equation}
For positivity, we need to prove that for any vector  $\vec{w}$ with elements $w_i$ it holds that $\sum_{ij}w_iM_{ij}w_j^* \geq 0$. Indeed:
\begin{equation}
    \forall \vec{w}, \quad \sum_{ij}w_iM_{ij}w_j^* = \sum_{ij} w_iv_i v_j^* w_j = \left|\sum_i v_i w_i\right|^2\geq 0\ .
\end{equation}
Thus
\begin{equation}
    \gamma(\omega) \geq 0
\end{equation}
as a matrix.
\paragraph{Bochner's theorem} Our proof above used the eigendecomposition of the bath, and $\gamma(\omega)$ was a sum of Dirac $\delta$ functions. In practice, a smooth $\gamma(\omega)$ is used, corresponding to, e.g., a bath with a continuous spectrum. 
%The summation over bath states then has to be replaced by an integration. Placing that integral next to a $\delta$-function should remove it, but how to treat that operation mathematically is not perfectly clear. So our previous proof contains objects that are not perfectly well defined for the continuous spectrum baths. 
In this sense it is desirable to avoid using bath eigenstates and $\delta$-functions altogether.  Here we give another proof that uses Bochner's theorem, as suggested, e.g., in the textbook~\cite{Breuer:book}, and which avoids the issue mentioned above.

\begin{mylemma}[Bochner]
If a function $\phi(t)$ is of positive type, i.e. $\forall n\in\mathbb{Z}, ~ t_i\in \mathbb{R}, ~\xi_i\in\mathbb{C}$
\begin{equation}
    \sum_{i,j=1}^n \phi(t_i-t_j)\xi_i\xi_j^* \geq 0,
\end{equation}
and also $\phi(0)$ is finite and $\phi(t)$ is continuous at $t=0$, then its Fourier transform (if it exists) is non-negative:
\begin{equation}
    \int_{-\infty}^\infty\phi(t) e^{i\omega t} dt \geq 0
\end{equation}
\end{mylemma}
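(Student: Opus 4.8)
The statement to prove is Bochner's theorem in the form stated: if $\phi(t)$ is of positive type, finite and continuous at $t=0$, then $\int_{-\infty}^\infty \phi(t)e^{i\omega t}\,dt\ge 0$. The plan is to exploit the positive-type hypothesis directly by recognizing the Fourier integral as a limit of the positive-type sums that appear in the definition. First I would discretize: fix $\omega$, and for a large cutoff $L$ and spacing $\Delta t = L/N$ choose sample points $t_i = -L/2 + i\,\Delta t$ for $i=0,\dots,N$. Setting the complex weights $\xi_i = e^{i\omega t_i}$, the positive-type condition gives
\begin{equation}
    0 \le \sum_{i,j=0}^{N} \phi(t_i-t_j)\,\xi_i\,\xi_j^* = \sum_{i,j=0}^{N} \phi(t_i-t_j)\,e^{i\omega(t_i-t_j)}\ .
\end{equation}
Multiplying by $(\Delta t)^2 \ge 0$ preserves the inequality, so the double Riemann sum $\sum_{i,j}\phi(t_i-t_j)e^{i\omega(t_i-t_j)}(\Delta t)^2$ is nonnegative for every $N$ and $L$.

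Next I would pass to the limit. As $N\to\infty$ with $L$ fixed, the double sum converges to $\int_{-L/2}^{L/2}\int_{-L/2}^{L/2}\phi(s-t)e^{i\omega(s-t)}\,ds\,dt$; this uses continuity of $\phi$ at $0$ together with the standard fact that a function of positive type that is continuous at the origin is continuous everywhere and bounded by $\phi(0)$, which makes the integrand bounded and the Riemann sums legitimate. The change of variables $u = s-t$ converts this into a single integral against a triangular (Fejér-type) weight:
\begin{equation}
    0 \le \int_{-L}^{L} \left(L - |u|\right)\phi(u)\,e^{i\omega u}\,du\ .
\end{equation}
Dividing by $L>0$ and letting $L\to\infty$, the kernel $(1-|u|/L)\mathbf{1}_{[-L,L]}(u)$ tends to $1$ pointwise and is dominated by $1$; provided the Fourier transform exists (i.e. $\phi\in L^1$, which is the hypothesis ``if it exists''), dominated convergence yields $\int_{-\infty}^{\infty}\phi(u)e^{i\omega u}\,du \ge 0$, which is the claim.

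\textbf{Main obstacle.} The delicate point is the interchange of limits and the justification of dominated convergence in the last step: strictly, the triangular-weight integrals are what converge unconditionally (this is essentially Fejér summability of the Fourier integral), whereas asserting $\int(1-|u|/L)\phi(u)e^{i\omega u}du \to \int\phi(u)e^{i\omega u}du$ needs $\phi\in L^1$ or some comparable control. I would either invoke the $L^1$ hypothesis hidden in the phrase ``if it exists,'' or, more robustly, state the conclusion for the Fejér-regularized transform and note that when the ordinary transform exists the two agree a.e. A secondary technical nuisance is confirming that positive type plus continuity at $0$ forces global continuity and the bound $|\phi(t)|\le\phi(0)$; this is a short argument using the positive-type inequality with $n=2$ and $n=3$, and I would relegate it to a one-line remark or a citation, since it is standard. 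Everything else is bookkeeping on Riemann sums.
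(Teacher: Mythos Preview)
The paper does not actually prove this lemma: it simply states Bochner's theorem as a known result, with a citation to a functional analysis textbook (Reed--Simon), and then \emph{applies} it to show that the matrix $\gamma_{ij}(\omega)$ is positive semidefinite. So there is no proof in the paper for you to compare against.

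That said, your outline is a correct and standard proof of the one-dimensional Bochner theorem. The discretization with weights $\xi_i = e^{i\omega t_i}$, passage to the double integral, the change of variables producing the triangular (Fej\'er) kernel $(1-|u|/L)$, and the final dominated-convergence step under the $L^1$ hypothesis are all sound. Your identification of the two technical points---that the ``if it exists'' clause is doing the work of justifying dominated convergence, and that continuity at $0$ plus positive type forces global continuity and the bound $|\phi(t)|\le \phi(0)$---is accurate, and both are routine to fill in. Since the paper treats the lemma as a citation rather than something to be proved, your write-up is strictly more than what the paper provides.
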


We again consider many interaction terms $V = \sum_i A_i\otimes B_i$ (as in Sec.~\ref{nTerms}) and a stationary $\rho_B$ s.t. $[\rho_B,H_B]=0$. 
Since $\g(\o)$ is Hermitian we can diagonalize it using a unitary transformation:
\beq
 D \equiv U\gamma U^\dagger \Rightarrow D_{\alpha\beta} = \sum_{i,j}U_{\alpha i}\gamma_{ij}U_{\beta j}^* \ .
 \eeq
$ D $ is diagonal so we only need to consider the diagonal elements (i.e., the eigenvalues of $ \gamma $). Substituting $ \gamma_{ij} = \int_{-\infty}^\infty e^{i\omega s}\mathcal{C}_{ij}(s)ds$, where $ \mathcal{C}_{ij}(s) = \Tr[\rho_B B_i(s) B_j]$, gives
\beq
D_\alpha = \int_{-\infty}^\infty e^{i\omega s}\pen{\sum_{i,j}U_{\alpha i}\mathcal{C}_{ij}(s)U_{\alpha j}^*}ds\ .
\eeq
	We wish to show that $ D_\alpha $ is non-negative for each $ \alpha $. To do this we must consider the function in parenthesis. $ D_\alpha $ is the Fourier transform of this function so if we can show that it is of \emph{positive type} then $ D_\alpha $ must be positive by Bochner's theorem \cite{Reed:1975aa}. Define the following function with $ \cen{t_i} $ an arbitrary time partition:
\beq
f_{mn}^\alpha \equiv \sum_{i,j}U_{\alpha i}\mathcal{C}_{ij}(t_m-t_n)U_{\alpha j}^* \ .
\eeq
Note that \bes
\begin{align}
\mc{B}_{\a\b}(t,t-\tau) &\equiv 
%\langle B^\dagger(s) B(s') \rangle = \Tr[\r_B(0) B^\dagger(s) B(s')] = (\Tr[B(s')^\dagger B(s) \r_B(0) ]^\dagger)^* = (\Tr[\r_B(0)B(s')^\dagger B(s)]^\dagger)^* = \langle B^\dagger(s') B(s) \rangle^*\ .
\langle B_{\alpha}(t)B_{\beta}(t-\tau)\rangle_{B}=\Tr\bigl(e^{iH_{B}t}B_{\alpha}e^{-iH_{B}t}e^{iH_{B}(t-\tau)}B_{\beta}e^{-iH_{B}(t-\tau)}\rho_{B}\bigr)\\
  &=\Tr\bigl(e^{-iH_{B}(t-\tau)}e^{iH_{B}t}B_{\alpha}e^{-iH_{B}t}e^{iH_{B}(t-\tau)}B_{\beta}\rho_{B}\bigr)=
  \Tr\bigl(e^{iH_{B}\tau}B_{\alpha}e^{-iH_{B}\tau}B_{\beta}\rho_{B}\bigr)=\langle B_{\alpha}(\tau) B_{\beta}\,\rangle_{B} \\ 
  & = \mc{B}_{\a\b}(\tau,0) \equiv \mc{B}_{\a\b}(\tau)\ ,
  %= \mc{B}_{\b\a}^*(0,\tau) = \mc{B}_{\b\a}^*(-\tau,0)\ ,
  \label{eq:477c}
\end{align}
\ees
	Now use the property $ \expv{B_\alpha(s)B_\beta(0)} = \expv{B_\alpha(t)B_\beta(t-s)} $ [Eq.~\eqref{eq:477c}] to write $ f_{mn}^\alpha $ as
\beq
 f_{mn}^\alpha = \sum_{i,j}U_{\alpha i}\Tr\ben{\rho_BB_i(t_m)B_j(t_n)}U_{\alpha j}^* = \Tr\pen{\rho_B\sum_iU_{\alpha i}B_i(t_m)\sum_jU_{\alpha j}^*B_j(t_n)} \ .
 \eeq
	We need to show that $ f^\alpha $ is a positive matrix. For arbitrary $ \ket{v} $ we have
\bes
\begin{align}
\bra{v}f^\alpha\ket{v} &= \sum_{m,n}v_m^*v_nf_{mn}^\alpha = \Tr\ben{\pen{\sum_{i,m}v_m^*U_{\alpha i}\sqrt{\rho_B}B_i(t_m)}\pen{\sum_{j,n}v_nU_{\alpha j}^*B_j(t_n)\sqrt{\rho_B}}} \\
& = 
%\Tr\ben{\pen{\sum_{i,m}v_m^*U_{\alpha i}\sqrt{\rho_B}B_i(t_m)}\pen{\sum_{i,m}v_mU_{\alpha i}^*B_i(t_m)\sqrt{\rho_B}}} \\
%&\equiv 
\Tr\pen{M_\alpha^\dagger M_\alpha} \geq 0 \ ,
\end{align}
\ees
	where $M_\alpha \equiv \sum_{i,m}v_m^*U_{\alpha i}\sqrt{\rho_B}B_i(t_m)$.
%	the final inequality follows from the fact that $ M_\alpha^\dagger M_\alpha $ is non-negative which follows immediately from right polar decomposing $ M_\alpha $ (then $ M_\alpha^\dagger M_\alpha = RU^\dagger UR = R^2 \geq 0 $).
	
We have established that $ \bra{v}f^\alpha\ket{v} \geq 0 $ for any time partition $ \cen{t_i} $. Therefore $ D_\alpha $ is non-negative by Bochner's theorem. Consequently, $ \gamma\geq 0$ as a matrix since all its eigenvalues are non-negative.

\section{Lamb shift simplification}
\label{app:LS-simp}
The Lamb shift was given in Eq.~\eqref{eq:HLS1} as
$    H_{\mathrm{LS}}=\frac{i}{2T_a}\int_{-T_a/2}^{T_a/2}dt_1\int_{-T_a/2}^{T_a/2}  dt_2 \sgn(t_1-t_2)\mathcal{C}(t_2-t_1)  A(t_2) A(t_1)$. Recall also that 
$A(t) = \sum_\omega A_{\omega}e^{-i\omega t}$ [Eq.~\eqref{eq:A(t)}], 
so we can write:
\begin{equation}
    H_{\mathrm{LS}}=\sum_{\omega_1\omega_2}F_{\omega_1\omega_2}A_{\omega_2}A_{\omega_1}\ ,
\label{eq:HLS2}
\end{equation}
where 
\begin{subequations}
\begin{align}
F_{\omega_1\omega_2}&= \frac{i}{2T_a}\int_{-T_a/2}^{T_a/2}dt_1\int_{-T_a/2}^{T_a/2}  dt_2 \sgn(t_1-t_2)\mathcal{C}(t_2-t_1)  e^{-i(\omega_2t_2 + \omega_1 t_1)}\\
&=\frac{i}{2T_a}\int_{-T_a/2}^{T_a/2}dt_1\int_{-T_a/2}^{T_a/2}  dt_2 \sgn(\theta_-)\mathcal{C}(-\theta_-)  e^{-i(\omega_-\theta_- + \omega_+ \theta_+)}\ ,
\end{align}
\end{subequations}  
%Note that many of the terms in the sum in Eq.~\eqref{eq:HLS2} vanish because of the product of two hop operators $A_{\omega_1}A_{\omega_2}$ that force the state in between to be the same. One can use that to speed up the calculation of the sum, but one needs to be wary of accidental degeneracies.
where $\omega_\pm = \frac{1}{2}(\omega_1 \pm \omega_2)$ and  $\theta_\pm = t_1 \pm t_2$. This change of variables rotates the square by $\pi/4$ into a rhombus, and the integration limits correspondingly change to:
\bes
\begin{align}
      &\theta_->0: ~ -T_a+\theta_-\leq   \theta_+\leq T_a -\theta_-\\  
      &\theta_-<0: ~ -T_a-\theta_-\leq   \theta_+\leq T_a +\theta_- \ .
\end{align}
\ees
The Jacobian is $1/2$. 
We can now perform the integral over $\theta_+$: 
\bes
\begin{align}
    F_{\omega_1\omega_2}  &= \frac{i}{4T_a}\left( -\int_{-T_a}^0 d\theta_- \int_{ -T_a-\theta_-}^{T_a +\theta_- }d\theta_+  +  \int_{0}^{T_a} d\theta_- \int_{-T_a+\theta_-}^{T_a -\theta_-}d\theta_+\right) \mathcal{C}(-\theta_-)  e^{-i(\omega_-\theta_- + \omega_+ \theta_+)}\\
    &=\frac{1}{4T_a\o_+}\left( \int_{-T_a}^0 d\theta_- ( e^{-i\omega_+(T_a+\theta_-)} -e^{i\omega_+(T_a+\theta_-)} )  -  \int_{0}^{T_a} d\theta_-( e^{-i\omega_+(T_a-\theta_-)} -e^{i\omega_+(T_a-\theta_-)}) \right) \mathcal{C}(-\theta_-)  e^{-i\omega_-\theta_- }\\
%So it is the same as the Redfield integral, but with smaller bounds. 
%Let us rewrite it again:
     %&= \frac{1}{4T_a\o_+}\left( \int_{-T_a}^0 d\theta (e^{-iT_a\o_+-i\omega_1\theta } -e^{iT_a\o_++i\omega_2\theta} )  -  \int_{0}^{T_a} d\theta( e^{-iT_a\o_++i\omega_2\theta} -e^{iT_a\o_+-i\omega_1\theta}) \right) \mathcal{C}(-\theta)  \\
     &= \frac{1}{4T_a\o_+}\left( \int_{-T_a}^0 d\theta (e^{-iT_a\o_+-i\omega_1\theta } -e^{iT_a\o_++i\omega_2\theta} )\mathcal{C}(-\theta)  -  \int_{0}^{T_a} d\theta( e^{-iT_a\o_++i\omega_2\theta} -e^{iT_a\o_+-i\omega_1\theta}) \mathcal{C}(-\theta)\right)   \\
     &= \frac{1}{4T_a\o_+}\left( \int_0^{T_a} d\theta (e^{-iT_a\o_++i\omega_1\theta } -e^{iT_a\o_+-i\omega_2\theta} )\mathcal{C}(\theta)  -  \int_{0}^{T_a} d\theta( e^{-iT_a\o_++i\omega_2\theta} -e^{iT_a\o_+-i\omega_1\theta}) \mathcal{C}^*(\theta)\right)   \\
     &= \frac{1}{4T_a\o_+} \int_0^{T_a} d\theta (e^{-iT_a\o_++i\omega_1\theta } -e^{iT_a\o_+-i\omega_2\theta} )\mathcal{C}(\theta)  +  (  e^{-iT_a\o_++i\omega_1\theta}-e^{iT_a\o_+-i\omega_2\theta})^* \mathcal{C}^*(\theta)  
     %\\
     %&= \frac{1}{4T_a\o_+}\left(e^{i T_a\o_+}  \int_{-T_a}^0 d\theta e^{i\o_1\theta}  + e^{-i T_a\o_+}  \int_{0}^{T_a} d\theta e^{i\o_1\theta}  - e^{-i T_a\o_+}  \int_{-T_a}^0 d\theta e^{-i\o_2\theta}  - e^{i T_a\o_+}  \int_{0}^{T_a} d\theta e^{-i\o_2\theta}  \right)\mathcal{C}(\theta)\\
     %&=  \frac{1}{2T_a\o_+}\left( \Re \int_{0}^{T_a} d\theta e^{i(\o_1\theta-T_a\o_+)}  - \Re \int_{0}^{T_a} d\theta e^{-i(\o_2\theta-T_a\o_+)} \right)\mathcal{C}(\theta)\ ,
\end{align}
\ees
where %in the last equality
we used the identity $\mathcal{C}(\theta) = \mathcal{C}^*(-\theta)$ and 
%another
a change of variables $\theta \mapsto -\theta$ to make all the integration domains positive. When $\omega_+ = 0$ the expression above should be understood as a lim$_{\omega_+ \to 0}$. 
Thus:
\begin{equation}
    F_{\omega_1\omega_2}  =\frac{1}{2T_a\o_+} \Re  \int_{0}^{T_a} d\theta \left( e^{i(\o_1\theta-T_a\o_+)} - e^{-i(\o_2\theta-T_a\o_+)} \right) \mathcal{C}(\theta)\ ,
    \label{eq:Fw1w2-copy}
\end{equation}
which is Eq.~\eqref{eq:Fw1w2} in the main text.

Recall that $A_\omega^\dag = A_{-\omega}$; 
for $H_{\mathrm{LS}}= \sum_{\omega_1\omega_2}F_{\omega_1\omega_2}A_{\omega_2}A_{\omega_1}$ to be Hermitian, it is thus sufficient if
\begin{equation}
    F_{\omega_1\omega_2} = F_{\omega_1\omega_2}^* = F_{-\omega_2,-\omega_1}\ .
\end{equation}
This is indeed satisfied by Eq.~\eqref{eq:Fw1w2-copy}, thus ensuring the Hermiticity.

%%%%%%%%%%%%%%%%%%%%%%%%%%%%

\section{The Davies-Lindblad equation is the infinite coarse-graining time limit of the CGME}
\label{app:D}

The RWA used to derive the Davies-Lindblad equation leaves something to be desired. We simply dropped terms with different Bohr frequencies, without a rigorous mathematical justification. Here we repeat, with some extra clarifications, the derivation given in Ref.~\cite{Majenz:2013qw} which shows that the Davies-Lindblad equation is the infinite coarse-graining timescale limit of the CGME. 

%\subsection{Quick summary}
%For convenience, let us collect the main results of each of the two approaches. The Davies-Lindblad equation is:
%%
%\beq
%\dot{\tilde{\rho}}(t) = -i \left[ H_{\mathrm{LS}}, \tilde{\rho}(t) \right] + \sum_{\omega} \gamma({\omega}) \left( A_{\omega} \tilde{\rho}(t) A_{\omega}^\dagger - \frac{1}{2} \left\{A_{\omega}^{\dagger} A_{\omega}, \tilde{\rho}(t) \right\} \right) 
%\label{eqt:SME_RWA}
%\eeq
%%
%with
%%
%\begin{align} 
%\label{eq:g531}
%\gamma(\omega) = \int_{-\infty}^{\infty} d s e^{i \omega s} C(s,0)\ .
%%H_{\mathrm{LS}} &= \sum_{\omega} Q(\omega) A_{\omega}^{\dagger} A_{\omega} \ , \quad  Q(\omega) = \int_{-\infty}^{\infty} d \omega' \gamma(\omega') \mathcal{P} \left(\frac{1}{\omega - \omega'} \right)\ .
%\end{align}
%
%The CGME is
%%
%\beq \label{eqt:SME_Average}
%\dot{\tilde{\rho}}(t) = -i \left[ H'_{\mathrm{LS}}, \tilde{\rho}(t) \right] + \sum_{\omega, \omega'} \gamma_{\omega \omega'}(T_a) \left( A_{\omega} \tilde{\rho}(t) A_{\omega'}^\dagger - \frac{1}{2} \left\{A_{\omega'}^{\dagger} A_{\omega}, \tilde{\rho}(t) \right\} \right) \ ,
%\eeq
%%
%where the rates $\gamma$ keep a dependence on two different Bohr frequencies $\omega$ and $\o'$:
First we rewrite Eq.~\eqref{eq:18} in the form
\begin{align}
\label{eq:gammaww'}
\gamma_{\omega \omega'} (T_a)  =  \frac{1}{T_a}  \int_0^{T_a} d s \int_0^{T_a} ds'  e^{i ( \omega' s - \omega s' )}  \mathcal{C}(s-s')  \ ,
 \end{align}
 which we can do after changing variables $\omega' \to -\omega'$ as remarked below Eq.~\eqref{eq:CGME}.
Our goal is to show that $\lim_{T_a\to\infty} \gamma_{\omega \omega'} (T_a) = \gamma(\omega)\delta_{\o\o'}$. 

\begin{mylemma}
The following equivalent form holds for $\g_{\omega \omega'} (T_a)$:
\beq
\g_{\omega \omega'} (T_a) = \frac{1}{T_a}e^{i\frac{\omega'-\omega}{2}T_a}\int_0^{T_a}dv\ \cos\left(\frac{\omega'-\omega}{2}(v-T_a)\right)\int_{-v}^{v}du\ e^{i\frac{\omega+\omega'}{2}u}\mathcal{C}(u) \ .
\label{eq:g-simplified}
\eeq
\end{mylemma}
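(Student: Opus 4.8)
The plan is to start from the double-integral expression \eqref{eq:gammaww'},
\[
\gamma_{\omega\omega'}(T_a)=\frac{1}{T_a}\int_0^{T_a}ds\int_0^{T_a}ds'\,e^{i(\omega' s-\omega s')}\,\mathcal{C}(s-s'),
\]
and perform the change of variables $u=s-s'$, $v=\frac{s+s'}{2}$ (or equivalently rotate the square by $\pi/4$, exactly as is done for the Lamb shift in Appendix~\ref{app:LS-simp}). The Jacobian is $1$ in the $(s-s',\tfrac{s+s'}{2})$ convention, and the exponent rewrites as $\omega's-\omega s'=\frac{\omega+\omega'}{2}u+(\omega'-\omega)v$. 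The square $[0,T_a]^2$ becomes a rhombus, which splits into $v\in[0,T_a/2]$ with $u\in[-2v,2v]$ and $v\in[T_a/2,T_a]$ with $u\in[-2(T_a-v),2(T_a-v)]$. After this step I would substitute $v\mapsto v/2$ (or choose the variable so that the inner limits become $[-v,v]$ with $v$ running over $[0,T_a]$, folding the two triangular pieces together) so that the $u$-integral appears in the form $\int_{-v}^v e^{i\frac{\omega+\omega'}{2}u}\mathcal{C}(u)\,du$ that occurs on the right-hand side of \eqref{eq:g-simplified}.

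Next I would handle the $v$-dependence coming from the factor $e^{i(\omega'-\omega)v}$. On the two halves of the rhombus the quantity multiplying the inner $u$-integral is $e^{i(\omega'-\omega)v}$ for $v$ near $0$ and $e^{i(\omega'-\omega)v}$ for $v$ near $T_a$; writing $v=T_a-v'$ on the second half and combining with the first half produces, after symmetrization, a factor $e^{i\frac{\omega'-\omega}{2}T_a}\cdot 2\cos\!\big(\frac{\omega'-\omega}{2}(v-T_a)\big)$ (the $2$ being absorbed by the folding of the two triangular regions into a single integral over $v\in[0,T_a]$). This is precisely the trigonometric kernel in \eqref{eq:g-simplified}. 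The bookkeeping here — matching the integration limits of the two triangular regions, keeping track of the Jacobian factor from the $v$-rescaling, and checking that the folding of the inner $u$-integral is consistent with $\mathcal{C}(u)=\mathcal{C}^*(-u)$ — is the only real content, and the main obstacle is purely organizational: making sure the overall prefactor $1/T_a$ and the factor of $2$ from folding are accounted for exactly once.

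Finally I would verify the result by an independent consistency check: evaluate both sides in the limit $\omega'=\omega$, where \eqref{eq:g-simplified} collapses to $\frac{1}{T_a}\int_0^{T_a}dv\int_{-v}^v du\,e^{i\omega u}\mathcal{C}(u)$, and confirm this agrees with setting $\omega'=\omega$ directly in \eqref{eq:gammaww'} after the same change of variables; and separately check the $\omega_+:=\frac{\omega+\omega'}{2}\to 0$ behaviour to make sure no spurious singularity has been introduced. One could also cross-check against the $T_a\to\infty$ statement (the next lemma): in \eqref{eq:g-simplified} the $v$-integral of $\cos\big(\frac{\omega'-\omega}{2}(v-T_a)\big)$ against a factor whose inner integral converges to $\gamma(\omega)$ for large $v$ should, after dividing by $T_a$, localize to $\omega'=\omega$, which is a reassuring sanity check that the manipulation is correct. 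I do not expect any genuine difficulty beyond careful tracking of the domain of integration and the constants.
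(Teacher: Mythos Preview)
Your approach is correct and essentially identical to the paper's proof. The paper uses the sum variable $v=s+s'$ (Jacobian $1/2$) rather than your $v=(s+s')/2$ (Jacobian $1$), which makes the inner $u$-limits come out directly as $[-v,v]$ on the first triangle and avoids the extra rescaling you mention; it then substitutes $v\to 2T_a-v$ on the second triangle ($v\in[T_a,2T_a]$) to fold it onto the first, and combines $e^{i\frac{\omega'-\omega}{2}v}+e^{i\frac{\omega'-\omega}{2}(2T_a-v)}=2e^{i\frac{\omega'-\omega}{2}T_a}\cos\!\big(\tfrac{\omega'-\omega}{2}(v-T_a)\big)$, exactly as you anticipate. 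Your consistency checks ($\omega'=\omega$ and $T_a\to\infty$) are also precisely what the paper does in the two subsections immediately following the lemma.
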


\begin{proof}
Let us rewrite $\omega' s - \omega s'$ in terms of a sum and difference of Bohr frequencies: 
\beq
\omega' s - \omega s' = \frac{1}{2}(\o'-\o)v+\frac{1}{2}(\o'+\o)u\ ,
\eeq
where $u=s-s'$ and $v=s+s'$. After this change of variables $\mathcal{C}(s-s') = \mathcal{C}(u)$, and since $s=(v+u)/2$ and $s'=(v-u)/2$, the Jacobian of the transformation is $1/2$. In terms of the new variables the integration domain is a rhombus, bounded between the lines $u=v$ and $u=-v$ for $v\in [0,T_a]$ and the lines $u=2T_a-v$ and $v-2T_a$ for $v\in [T_a,2T_a]$.
Thus:
\begin{align}
\g_{\omega \omega'}(T_a)= \frac{1}{2T_a}\int_0^{T_a}dv\ e^{i\frac{\omega'-\omega}{2} v}\int_{-v}^{v}du\ e^{i\frac{\omega+\omega'}{2}u}\mathcal{C}(u)+\frac{1}{2} \int_{T_a}^{2 T_a}dv\ e^{i\frac{\omega'-\omega}{2} v}\int_{-(2T_a-v)}^{2 T_a-v}du\ e^{i\frac{\omega+\omega'}{2}u}\mathcal{C}(u) \ .
\end{align}
To get the integration limits to be the same we make a change of variables from $v$ to $2T_a-v$ in the second double integral:
\bes
\begin{align}
\g_{\omega \omega'}(T_a)&=\frac{1}{2T_a}\int_0^{T_a}dv\ e^{i\frac{\omega'-\omega}{2} [(v-T_a)+T_a]}\int_{-v}^{v}du\  e^{i\frac{\omega+\omega'}{2}u}\mathcal{C}(u)+ \frac{1}{2} \int_{0}^{T_a}dv\ e^{-i\frac{\omega'-\omega}{2} [(v-T_a)-T_a]}\int_{-v}^{v}du\ e^{i\frac{\omega+\omega'}{2}u}\mathcal{C}(u)\\
&=\frac{1}{T_a}e^{i\frac{\omega'-\omega}{2}T_a}\int_0^{T_a}dv\ \cos\left(\frac{\omega'-\omega}{2}(v-T_a)\right)\int_{-v}^{v}du\ e^{i\frac{\omega+\omega'}{2}u}\mathcal{C}(u) \ ,
\end{align}
\ees
which is Eq.~\eqref{eq:gammaww'}.
\end{proof}

\subsection{The $\o=\o'$ case}

For $\omega=\omega'$ we now have:

\begin{equation}
 \g_{\omega \omega}(t)=\frac{1}{T_a}\int_0^{T_a} dv\ \int_{-v}^{v}du\ e^{i\omega u}\mathcal{C}(u) \ .
\end{equation}
Let $U = \int_{-v}^{v}du\ e^{i\omega u}\mathcal{C}(u)$. 
%Recall the Leibnitz rule for differentiating a definite integral:
%\beq
%\partial_z \int_{a(z)}^{b(z)} f(x,z)dx = \int_{a(z)}^{b(z)}\partial_z f(x,z)dx + f(b(z),z)b' - f(a(z),z)a'\ .
%\label{eq:Leibnitz}
%\eeq
%
Then $dU = \left(e^{i\o v} \mathcal{C}(v)+e^{-i\o v} \mathcal{C}(-v)\right)dv$, and integrating by parts 
%($\int_0^{T_a} Udv = \left[Uv\right]_0^{T_a} - \int_0^{T_a} vdU$) 
gives:
%\bes
\begin{align}
\g_{\omega\omega}(T_a)&=\frac{1}{T_a}\left[v \int_{-v}^{v}du\ e^{i\omega u}\mathcal{C}(u)\right]_0^{T_a}-\frac{1}{T_a}\int_0^{T_a}dv\ v\left( e^{i\omega v}\mathcal{C}(v)+e^{-i\omega v}\mathcal{C}(-v)\right)\ .
% &=\  T_a\int_{-T_a}^{T_a}du\left(1-\frac{u}{2 T_a}\right)e^{i\omega u}C(u) \ .
\label{eq:539}
\end{align}
%\ees
Consider the second integral:
\bes
\begin{align}
\left| \frac{1}{T_a} \int_0^{T_a}dv\ v e^{i\omega v}\mathcal{C}(v) \right| &\leq \frac{1}{T_a} \int_0^{T_a}dv\ v \left| \mathcal{C}(v) \right| \leq \frac{1}{T_a}\int_0^{\infty}dv\ v \left| \mathcal{C}(v) \right| \\
&= \frac{\tau_B}{\tau_{SB} T_a}  \stackrel{T_a\to\infty}{\longrightarrow} 0\ ,
\end{align}
\ees
where in the last step we used Eq.~\eqref{eq:T1tauB-b}. Since $\mathcal{C}(v) =\mathcal{C}^*(-v)$, the third integral in Eq.~\eqref{eq:539} satisfies the same bound and limit. We are thus left with
\beq
\lim_{T_a \to \infty} \g_{\o\o}(T_a) =\int_{-\infty}^{\infty}du\ e^{i\omega u}\mathcal{C}(u)=\gamma(\omega) \  ,
\eeq
with the last equality being due to Eq.~\eqref{eq:gamma31}.

\subsection{The $\o\neq\o'$ case}
For $\omega\neq\omega'$ we also perform integration by parts of Eq.~\eqref{eq:g-simplified}, but we shall see that this time the boundary terms vanish. 
We write 
$\g_{\omega \omega'} (T_a) = \frac{1}{T_a}e^{i\frac{\omega'-\omega}{2}T_a} \int_0^{T_a}dV\  U(v)$,
where now $dV = \cos\left(\frac{\omega'-\omega}{2}(v-T_a)\right)dv$ and $U(v) = \int_{-v}^{v}du\ e^{i\frac{\omega+\omega'}{2}u}\mathcal{C}(u)$. Then 
\bes
\begin{align}
V(v) &= \frac{2}{\omega'-\omega}\sin\left(\frac{\omega'-\omega}{2}(v-T_a)\right) \\
dU/dv &= e^{i\frac{\omega+\omega'}{2}v} \mathcal{C}(v)+e^{-i\frac{\omega+\omega'}{2}v} \mathcal{C}(-v)\\
\left[U(v) V(v) \right]_0^{T_a} &= U(T_a) V(T_a) - U(0)V(0) = 0 \ .
\label{eq:695c}
\end{align}
\ees
Therefore:
\begin{equation}
\g_{\omega \omega'}(T_a)=-\int_{0}^{T_a}V dU = -\frac{2 e^{i\frac{\omega'-\omega}{2}T_a}}{(\omega'-\omega)T_a}\int_0^{T_a}dv \sin\left( \frac{(\omega'-\omega)}{2}(v-T_a)\right)\left[e^{i\frac{\omega+\omega'}{2}v}\mathcal{C}(v)+e^{-i\frac{\omega+\omega'}{2}v}\mathcal{C}(-v)\right] \ .
\end{equation}
Changing from $v$ to $-v$ in the second term we get
\bes
\begin{align}
\g_{\omega \omega'}(T_a)=&-\frac{2 e^{i\frac{\omega'-\omega}{2}T_a}}{(\omega'-\omega)T_a}\Bigg[\int_0^{T_a}dv\sin\left(\frac{(\omega'-\omega)}{2}(v-T_a)\right)e^{i\frac{\omega+\omega'}{2}v}\mathcal{C}(v)+\int_{-T_a}^{0}dv\sin\left(\frac{(\omega'-\omega)}{2} (-v-T_a)\right)e^{i\frac{\omega+\omega'}{2}v}\mathcal{C}(v)\Bigg]\\
=&\frac{e^{i\frac{\omega'-\omega}{2}T_a}}{(\omega'-\omega)T_a}\int_{-T_a}^{T_a}dv\left[\sin\left(\frac{\omega'-\omega}{2}T_a\right)\left(e^{i \omega v}+e^{i \omega' v}\right)+\frac{\mathrm{sgn(v)}}{i}\cos\left(\frac{\omega'-\omega}{2}T_a\right)\left(e^{i \omega v}-e^{i \omega' v}\right)\right]\mathcal{C}(v) \ ,
\end{align}
\ees
where we used the angle sum identity for the sine in the last equality. Thus:
\beq
\lim_{T_a \to \infty} \g_{\omega \omega'}(T_a) = \lim_{T_a \to \infty} \frac{e^{i\frac{\omega'-\omega}{2}T_a}}{(\omega'-\omega)T_a}\left[\sin\left(\frac{\omega'-\omega}{2}T_a\right)(\gamma(\omega)+\gamma(\omega'))+2\cos\left(\frac{\omega'-\omega}{2}T_a\right)(S(\omega)-S(\omega'))\right] \ .
\eeq
where $S(\omega)$ is the Lamb shift amplitude in Eq.~\eqref{eq:LS-final}.  Since nothing cancels with the overall $T_a^{-1}$, we find that the $\omega \neq \omega'$ term vanishes.  

A similar calculation could be done for the Lamb shift term~\eqref{eq:HLS1}, showing that $\Im(x_{\o\o'}) \to \Im(x_{\o\o})\delta_{\o\o'}$.  Therefore, the Davies-Lindblad (RWA) equation can be understood as the $T_a\to\infty$ limit of the CGME.

%%%%%%%%%%%%%%%%%%%%%%%%

\section{Derivation of the discrete approximation Eq.~\eqref{eq:discrete-approx}}
\label{app:C}

The starting point is the time-independent CGME given in Eq.~\eqref{eq:Lind}.
We can obtain a discrete sum just by discretizing $\int d\epsilon$. Below is the estimate of the number of terms  $(2k^*-1)$ in the sum for that case. The error is:
\begin{equation}
    \mathcal{E}_k = \int_{-\infty}^\infty d\epsilon (A_{\epsilon} \rho_C A_{\epsilon}^\dag - \frac{1}{2}\left\{ \rho_C , A_{\epsilon}^\dag A_{\epsilon}\right\} )-\Delta \epsilon \sum_{k, \epsilon = \Delta \epsilon k, ~ |k|<k^*} (A_{\epsilon} \rho_C A_{\epsilon}^\dag - \frac{1}{2}\left\{ \rho_C , A_{\epsilon}^\dag A_{\epsilon}\right\} )\ ,
\end{equation}
where $A_\epsilon$ is given in Eq.~\eqref{Atime}.
%\begin{equation}
%    A_\epsilon =  \sqrt{\frac{\gamma(\epsilon)}{2\pi T_a}}\int_{-T_a/2}^{T_a/2}e^{i\epsilon t}A(t)dt 
%\end{equation}
The discretization of an integral introduces errors as follows:
%\begin{equation}
%    \left|\int d\epsilon f(\epsilon)-\Delta \epsilon \sum_{k, \epsilon = \Delta \epsilon k, ~ k<k^*}f(\epsilon) \right| \leq k^* \max|f'|\Delta \epsilon^2 +\left(\int_{-\infty}^{-\Delta \epsilon k^* } + \int_{\Delta \epsilon k^*}^\infty\right) d\epsilon |f(\epsilon)| 
%\end{equation}
%Enhancing this to operators, we get:
\begin{equation}
    \left\|\int d\epsilon f(\epsilon)-\Delta \epsilon \sum_{k, \epsilon = \Delta \epsilon k, ~ |k|<k^*}f(\epsilon) \right\|_1 \leq \frac{k^*-0.5}{2} \max\|f'\|_1\Delta \epsilon^2 +\left(\int_{-\infty}^{-\Delta \epsilon (k^*-0.5) } + \int_{\Delta \epsilon (k^*-0.5)}^\infty\right) d\epsilon \|f(\epsilon)\|_1 \ ,
\label{eq:C2}
\end{equation}
where $f$ is an operator or operator-valued function.
\begin{proof}
By the triangle inequality:
\bes
\begin{align}
    \left\|\int d\epsilon f(\epsilon)-\Delta \epsilon \sum_{k, \epsilon = \Delta \epsilon k, ~ |k|<k^*}f(\epsilon) \right\|_1 \leq \left\|\int_{-\Delta \epsilon (k^*-0.5) }^{\Delta \epsilon (k^*-0.5) } d\epsilon f(\epsilon)-\Delta \epsilon \sum_{k, \epsilon = \Delta \epsilon k, ~ |k|<k^*}f(\epsilon) \right\|_1 
    \\+\left(\int_{-\infty}^{-\Delta \epsilon (k^*-0.5) } + \int_{\Delta \epsilon (k^*-0.5)}^\infty\right) d\epsilon \|f(\epsilon)\|_1 \ .
\end{align}
\ees
We apply the triangle inequality to pull the sum out:
\begin{equation}
     \left\|\int_{-\Delta \epsilon (k^*-0.5) }^{\Delta \epsilon(k^*-0.5) } d\epsilon f(\epsilon)-\Delta \epsilon \sum_{k, \epsilon = \Delta \epsilon k, ~ |k|<k^*}f(\epsilon) \right\|_1 \leq \sum_{k, \epsilon = \Delta \epsilon k, ~ |k|<k^*} \left\|\int_{\epsilon-\Delta \epsilon/2 }^{\epsilon +\Delta \epsilon/2 } d\mu f(\mu)-\Delta \epsilon f(\epsilon) \right\|_1\ .
\end{equation}
To bound this, we use $O(t) = O(0) + \int_0^t O'(\theta)d\theta$, so that $\int_{-1/2}^{1/2} O(t)dt - O(0)  = \int_{-1/2}^{1/2} \int_0^t O'(\theta)d\theta dt$, which implies
\begin{equation}
\left\|\int_{-1/2}^{1/2} O(t)dt - O(0)\right\|_1\leq \frac{1}{4}\max_{\theta\in [0,1]}\|O'\|_1\ ,
\label{eq:C3}
\end{equation}
which applied to the interval $[\epsilon, \epsilon +\Delta \epsilon]$ reads:
\begin{equation}
\left\|\int_{\epsilon-\Delta \epsilon/2 }^{\epsilon +\Delta \epsilon/2 } d\mu f(\mu) - f(\epsilon)\right\|_1\leq \frac{\Delta \epsilon^2}{4}\max_{\mu\in [\epsilon, \epsilon +\Delta \epsilon]}\|f'\|_1\ .
\end{equation}
Substituting this yields:
\begin{equation}
     \left\|\int_{-\Delta \epsilon (k^*-0.5) }^{\Delta \epsilon (k^*-0.5) } d\epsilon f(\epsilon)-\Delta \epsilon \sum_{k, \epsilon = \Delta \epsilon k, ~ |k|<k^*}f(\epsilon) \right\|_1 \leq (2k^* -1) \frac{\Delta \epsilon^2}{4}\max_{\mu\in [-\Delta \epsilon k^*, \Delta \epsilon k^*]}\|f'\|_1\ ,
\end{equation}
which concludes the proof.
\end{proof}
If we use the proven Eq.~\eqref{eq:C3} for
\begin{equation}
    f_\epsilon = (A_{\epsilon} \rho_C A_{\epsilon}^\dag - \frac{1}{2}\left\{ \rho_C , A_{\epsilon}^\dag A_{\epsilon}\right\} )\ , \label{fDef}
\end{equation}
it will express the error $\mathcal{E}_k$ as the sum of:
\begin{equation}
    \mathcal{E}_k =\mathcal{E}_{k1} +\mathcal{E}_{k2} ,\quad \mathcal{E}_{k1} = \frac{k^*-0.5}{2} \max\|f'\|_1\Delta \epsilon^2, \quad  \mathcal{E}_{k1} =\left(\int_{-\infty}^{-\Delta \epsilon (k^*-0.5) } + \int_{\Delta \epsilon (k^*-0.5)}^\infty\right) d\epsilon \|f(\epsilon)\|_1\ .
\end{equation}
Let us start with the $\mathcal{E}_{k2}$ term  corresponding to truncating the integral at $k^*$.
As $\epsilon$ becomes far removed from all the frequencies $\omega$ in the system, this integral becomes small as $1/\epsilon$. To prove this, we integrate by parts:
\begin{equation}
    A_\epsilon =  \sqrt{\frac{\gamma(\epsilon)}{2\pi T_a}}\int_{-T_a/2}^{T_a/2}A(t) e^{i\epsilon t} dt = \sqrt{\frac{\gamma(\epsilon)}{2\pi T_a}}\frac{1}{i\epsilon} \left(A(T_a/2) e^{i\epsilon T_a/2} - A(-T_a/2) e^{-i\epsilon T_a/2} +i\int_{-T_a/2}^{T_a/2}[H,A(t)] e^{i\epsilon t}dt  \right)\ .
\end{equation}
Taking the norm and recalling that $\| A\|=1$, we get:
\begin{equation}
    \|A_\epsilon\| \leq \sqrt{\frac{\gamma(\epsilon)}{2\pi T_a}} \frac{2 + \|[H,A]\| T_a}{|\epsilon|} \leq \sqrt{\frac{1}{\pi \tau_{SB} T_a}}\frac{2 + \|[H,A]\|T_a}{|\epsilon|}\ .
\end{equation}
Note that for a local $n$ qubit Hamiltonian with $O(1)$ local terms and a local operator $A$ with $\|A\| =1$, the commutator $\|[H,A]\| = O(1)$. Combining, we find:
\bes
\begin{align}
    \|\mathcal{E}_{k2}\|_1 &\leq \left(\int^{-\Delta\epsilon (k^*-0.5)}_{-\infty}+\int_{\Delta\epsilon (k^*-0.5)}^{\infty}\right) \|A_\epsilon\|^2 d\epsilon \leq 2 \int_{\Delta\epsilon (k^*-0.5)}^{\infty}\frac{1}{\pi T_a\tau_{SB}}\frac{(2 + \|[H,A]\| T_a)^2}{|\epsilon|^2}d\epsilon \\
    &= 2 \frac{1}{\pi T_a\tau_{SB}}\frac{(2 + \|[H,A]\| T_a)^2}{\Delta\epsilon (k^*-0.5)}\ .
\end{align}
\ees
We enforce that the error is $\sqrt{\tau_B/\tau_{SB}} \frac{1}{\tau_{SB}}$, so as to match the other errors in our equation:
\begin{equation}
\sqrt{\frac{\tau_B}{\tau_{SB}}} \frac{1}{\tau_{SB}} =   2 \frac{1}{\pi T_a\tau_{SB}}\frac{(2 + \|[H,A]\| T_a)^2}{\Delta\epsilon (k^*-0.5)} \quad \Rightarrow\quad  \Delta\epsilon (k^*-0.5) = 2 \frac{1}{\pi T_a}\sqrt{\frac{\tau_{SB}}{\tau_B}}(2 + \|[H,A]\| T_a)^2\ .
\label{eq:C8}
\end{equation}

Let us now focus on the term with the derivatives $\mathcal{E}_{k1}$:
\begin{equation}
    \|\mathcal{E}_{k1}\|_1 \leq 2(k^*-0.5) \max\|A_\epsilon'\|\|A_\epsilon\| \Delta \epsilon^2\ .
\end{equation}
The necessary norm bounds are:
\begin{equation}
     \|A_\epsilon\| \leq \sqrt{\frac{\gamma(\epsilon) T_a}{2\pi}} ,\quad \|A_\epsilon' \| \leq\frac{1}{2}\sqrt{\frac{|\gamma'(\epsilon)| T_a}{2\pi\gamma(\epsilon)}} + \frac{T_a}{4}\sqrt{\frac{\gamma(\epsilon) T_a}{2\pi}}\ .
\end{equation}
Using Eqs.~\eqref{eq:22new} and~\eqref{eq:23} we have the bound
\begin{equation}
     \|A_\epsilon\| \|A_\epsilon' \| \leq\frac{1}{2}\frac{\sqrt{|\gamma'(\epsilon)|} T_a}{2\pi} + \frac{T_a}{4}\frac{\gamma(\epsilon) T_a}{2\pi}  \leq \frac{1}{2}\frac{\sqrt{2\tau_B/\tau_{SB}} T_a}{2\pi} + \frac{1}{4}\frac{ T_a^2}{2\pi \tau_{SB}}\ .
\end{equation}
Substituting $T_a = \sqrt{\tau_B \tau_{SB}/5}$, we get:
\begin{equation}
     \|A_\epsilon\| \|A_\epsilon' \| \leq\frac{\sqrt{2/5} \tau_B}{4\pi} + \frac{ \tau_B}{40\pi} = \frac{(10\sqrt{2/5}+1) \tau_B}{40\pi }\ ,
\end{equation}
and the error bound is then:
\begin{equation}
    \|\mathcal{E}_{k1}\|_1 \leq 2(k^*-0.5) \frac{(10\sqrt{2/5}+1) \tau_B}{10\pi }\Delta \epsilon^2\ .
\end{equation}
This leads to the step-size choice:
\begin{equation}
    2(k^*-0.5) \frac{(10\sqrt{2/5}+1) \tau_B}{10\pi }\Delta \epsilon^2 = \frac{1}{\tau_{SB}}\sqrt{\frac{\tau_B}{\tau_{SB}}}, \quad \Delta\epsilon = \frac{1}{\tau_{SB}\sqrt{\tau_B \tau_{SB}} (k^*-0.5)\Delta \epsilon}\frac{5\pi }{(10\sqrt{2/5}+1) }\ .
\end{equation}
Combining this with Eq.~\eqref{eq:C8}, the solution is:
\begin{equation}
    \Delta\epsilon =\frac{1}{\tau_{SB}}\sqrt{\frac{\tau_B}{\tau_{SB}}}\frac{1}{(2 + \|[H,A]\| T_a)^2}\frac{\sqrt{5}\pi^2 }{2(10\sqrt{2/5}+1) }, \quad k^* = \text{ceil}\left( \sqrt{\frac{\tau_{SB}}{\tau_B}}^3(2 + \|[H,A]\| T_a)^4\frac{4(10\sqrt{2/5}+1) }{\pi^3 } +0.5\right) \ .
    \label{eq:C15}
\end{equation}
If $[H,A]=O(1)$, there is no system-size dependence in $\Delta \epsilon, k^*$. 

We remark that the bounds we used to derive Eq.~\eqref{eq:C15} can be significantly tightened; our goal here was to show that they are $O(1)$. We do not actually recommend to use Eq.~\eqref{eq:C15}  for numerical calculations: one can obtain accurate results with much larger $\Delta \epsilon$ and much smaller $k^*$.

%%%%%%%%%%%%

%\section{The induced superoperator norm}
%\label{app:supopnorm}
%
%Consider a superoperator $X$ mapping between matrices $A\in \mathcal{M}_n$, the space of $n\times n$ matrices. We define the induced superoperator norm~\cite{Perez-Garcia:2006aa}:
%\beq
%\|X\|_{p,q} = \sup_{A\in \mathcal{M}_n}\frac{\|X(A)\|_p}{\|A\|_q} ,
%\label{eq:supopnorm}
%\eeq 
%where the $p$-norm of $A$ is $\|A\|_p = (\Tr|A|^p)^{1/p} = (\sum_i \lambda_i^p)^{1/p}$, and $\lambda_i$ are the singular values of $A$ (eigenvalues of $|A|= \sqrt{A^\dagger A}$). The case $p=1$ is the trace-norm and $p=\infty$ is the operator norm. Clearly,
%\beq
%\| X(A)\|_p  \leq  \| X \|_{q,p} \|A\|_q \ .
%\label{eq:supopnorm-bound}
%\eeq
%For our purposes $A$ is usually a density matrix or a difference of density matrices, $n$ is the Hilbert space dimension, and we are interested in the case $p=q=1$, where we bound the trace-norm $\|\cdot\|_1$, or $p=q=\infty$, where we bound the operator-norm $\|\cdot\|$, or $p=\infty$ and $q=1$, when $A$ is a density matrix and hence $\| A\|_1=1$. 
%In them main text we dropped the $p,q$ subscripts, but they are always implied by context when we take the norm of a superoperator.

%%%%%%%%%%%%

\section{An error bound that is linear in $t$}
\label{LemmaTwo}
\begin{mylemma}
\label{LemLinear}
%Consider the positive trace-preserving (PTP) maps $\Lambda_i(t) = e^{\mathcal{L}_i t}$, generated by the time-independent superoperators $\mathcal{L}_1$ and $\mathcal{L}_2 = \mathcal{L}_1 + \Delta\mathcal{L}$ (we do not need complete positivity). The associated states are $\rho_i(t) = \Lambda_i(t) \rho(0)$. 
%Let $n$ be the Hilbert space dimension. 
%If $\Lambda_1(t)$ is invertible then 
%\beq 
%\|\rho_2(t) - \rho_1(t)\|_1 \leq t \|\Delta\mathcal{L} \| \ .
%\eeq
Let $\mathcal{L}$ be a linear superoperator and consider the pair of equations
\begin{equation}
    \dot{x}(t) = \mathcal{L}x(t) +\mathcal{E}, \quad  \dot{y}(t) = \mathcal{L}y(t)\ .
\end{equation}
If any solution $y(t)$ possesses the property that $\forall t, y(0)$ such that $\|y(0)\|_p\leq 1: \|y(t)\|_p\leq c_y$, then for any finite $t$:
\begin{equation}
    \|x(t)-y(t)\|_p \leq c_y t\|\mathcal{E}\|_p\ .
\end{equation}
The result holds in the same form for the operator ($p=\infty$, omitted in our notation) and trace ($p=1$) norms.
\end{mylemma}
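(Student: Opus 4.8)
\textbf{Proof plan for Lemma~\ref{LemLinear}.}
The plan is to mimic the Duhamel/variation-of-constants argument already used in the proof of Lemma~\ref{factDif}, but to exploit the stronger hypothesis — a uniform bound $c_y$ on the \emph{solution map} of the homogeneous equation — to trade the exponential $e^{\Lambda t}$ for a clean linear-in-$t$ factor. First I would write the formal solutions in integral form and subtract, defining $\delta(t) \equiv x(t) - y(t)$ with $\delta(0) = 0$ and
\begin{equation}
    \delta(t) = \int_0^t \mathcal{L}\,\delta(\tau)\, d\tau + \int_0^t \mathcal{E}\, d\tau\ .
\end{equation}
The key observation is that, since $\mathcal{L}$ is a \emph{time-independent} linear superoperator, $\delta(t)$ is itself the solution of $\dot\delta = \mathcal{L}\delta + \mathcal{E}$ with zero initial condition, and by the (super-)Duhamel principle this can be written as a superposition of homogeneous solutions started at intermediate times: $\delta(t) = \int_0^t e^{(t-s)\mathcal{L}}\mathcal{E}\, ds$, where $e^{u\mathcal{L}}$ denotes the propagator $\Phi_u$ of the homogeneous equation, i.e. $\Phi_u(z)$ is the value at time $u$ of the solution of $\dot y = \mathcal{L}y$ with $y(0) = z$.

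Next I would invoke the hypothesis directly: it states precisely that $\|\Phi_u(z)\|_p \le c_y\|z\|_p$ for every $u$ and every $z$ (the normalization to $\|z\|_p \le 1$ extends to all $z$ by linearity of $\Phi_u$, exactly as in the first paragraph of the proof of Lemma~\ref{factDif}). Applying this with $z = \mathcal{E}$ and $u = t-s$, together with the triangle inequality for integrals (valid for any norm), gives
\begin{equation}
    \|\delta(t)\|_p = \left\| \int_0^t \Phi_{t-s}(\mathcal{E})\, ds \right\|_p \le \int_0^t \|\Phi_{t-s}(\mathcal{E})\|_p\, ds \le \int_0^t c_y \|\mathcal{E}\|_p\, ds = c_y\, t\, \|\mathcal{E}\|_p\ ,
\end{equation}
which is the claimed bound. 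Since nothing in this chain used any special feature of the trace norm versus the operator norm — only the triangle inequality and submultiplicativity-type linearity of $\Phi_u$ — the statement holds verbatim for $p = 1$ and $p = \infty$.

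The main obstacle is a matter of rigor rather than difficulty: one must justify the Duhamel representation $\delta(t) = \int_0^t \Phi_{t-s}(\mathcal{E})\, ds$ and the interchange of $\mathcal{L}$ with the integral sign. In finite dimension (the setting of this paper, where everything acts on a finite-dimensional operator space) this is immediate — $\Phi_u = e^{u\mathcal{L}}$ is an entire matrix exponential, $\delta(t)$ is $C^1$, and differentiating $\int_0^t \Phi_{t-s}(\mathcal{E})\,ds$ recovers $\mathcal{L}\delta + \mathcal{E}$ by Leibniz's rule — so one can alternatively bypass the explicit propagator altogether: define $b(t)$ to be the $p$-norm of the right-hand side candidate bound and verify directly that $g(t) \equiv c_y t\|\mathcal{E}\|_p$ satisfies the integral inequality that $\|\delta(t)\|_p$ obeys, using only the homogeneous-solution bound applied pointwise. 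Either route is routine; the content of the lemma is entirely in recognizing that the solution-map bound $c_y$ is exactly what is needed to replace the Grönwall exponential.
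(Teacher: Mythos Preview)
Your proposal is correct and follows essentially the same approach as the paper: both derive the Duhamel representation $x(t)-y(t)=\int_0^t e^{(t-s)\mathcal{L}}\mathcal{E}\,ds$ (the paper reaches it via invertibility of $e^{\mathcal{L}t}$ and the explicit variation-of-constants formula for $x(t)$, you reach it more directly by observing that $\delta=x-y$ solves the inhomogeneous equation with zero initial data), and then both bound each integrand by $c_y\|\mathcal{E}\|_p$ using the hypothesis applied to the normalized initial datum $\mathcal{E}/\|\mathcal{E}\|_p$.
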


\begin{proof}
The linear differential equation for $y$ has a unique solution in a neighborhood of any initial condition. A trivial corollary is that the evolution operator is reversible for any finite $t$:
\begin{equation}
  e^{\mathcal{L}t}e^{-\mathcal{L}t} =\openone \ .
\end{equation}
This is an equality between superoperators. Now we can express the difference as:
\begin{equation}
    x(t) -y(t) =e^{\mathcal{L}t}(e^{-\mathcal{L}t}x(t) - y(0)) \ .
    \label{eq:119}
\end{equation}
The explicit form of the solution $x(t)$ is:
\begin{equation}
    x(t) = e^{\mathcal{L}t}x(0) + e^{\mathcal{L}t}\int_0^t e^{-\mathcal{L}\tau}\mathcal{E} d\tau\ ,
\end{equation}
which yields, using Eq.~\eqref{eq:119}:
\begin{equation}
    x(t) - y(t) =e^{\mathcal{L}t} \int_0^t e^{-\mathcal{L}\tau}\mathcal{E} d\tau = \int_0^t e^{\mathcal{L}(t- \tau)}\mathcal{E} d\tau  \ .
\end{equation}
We now take the norm of both sides:
\begin{equation}
    \|x(t) - y(t)\|_p \leq  t \|e^{\mathcal{L}(t- \tau)}\mathcal{E} \|_p \ .
    \label{eq:122}
\end{equation}
By linearity of the time evolution and the norm:
\begin{equation}
    \|e^{\mathcal{L}(t- \tau)}\mathcal{E} \|_p  = \|\mathcal{E}\|_p \|e^{\mathcal{L}(t- \tau)}y \|_p, \quad y = \frac{\mathcal{E}}{\|\mathcal{E}\|_p}
\end{equation}
Note that $\|y\|_p=1$. Since we can take $y$ as the initial condition of the equation $\dot{y}(t) = \mathcal{L}y(t)$, and since $e^{\mathcal{L}(t- \tau)}y$ will then be the solution at time $t-\tau$, we have, by assumption of the lemma:
\begin{equation}
    \|e^{\mathcal{L}(t- \tau)}y \|_p \leq c_y\ .
\end{equation}
Note that we had to make these assumptions for any $y$ of bounded norm, not just density operators. It now follows from Eq.~\eqref{eq:122} that:
\begin{equation}
    \|x(t) - y(t)\|_p \leq c_y t \|\mathcal{E}\|_p\ .
\end{equation}
\end{proof}
In particular, if $\mc{L}$ is a positive trace-preserving map then $c_y=1$. 

The only error for which we can apply Lemma~\ref{LemLinear} in this work is the Markov error, which we present here:
\begin{equation}
     \|\rho^{BM}_t-\rho^{B}_t\|_1  \leq 16 \frac{\tau_B c_B c_{BM} t}{\tau_{SB}^2} \ ,
\end{equation}
where $c_y =c_{BM}$ and we used Eq.~\eqref{eq:E_M}, which contains $c_{B}$. We instead used a loose bound from Eq.~\eqref{cBMarkov} since it results in a more compact expression.
Apart from this we have no occasion to use Lemma~\ref{LemLinear} in this work, since not all the equations appearing in our derivation are invertible. This is why we instead use Lemma~\ref{factDif} or prove the necessary results independently.

\section{Proof of Lemma \ref{LemTave}}
\label{LemProof}
\begin{proof}
The formal solutions allow us to write:
\begin{equation}
\delta\rho(t)=   \int_0^t \overline{\mathcal{L}_\theta}\pi(\theta)  d\theta - \int_0^t  \mathcal{L}_{\theta}\rho(\theta)d\theta \ .
\end{equation}
Substituting $\pi(t) = \rho(t) + \delta\rho(t)$ gives:
\begin{equation}
 \delta\rho(t)= \int_0^t (\overline{\mathcal{L}_\theta}\rho(\theta) -\mathcal{L}_{\theta}\rho(\theta))  d\theta + \int_0^t \overline{\mathcal{L}_\theta}\delta\rho(\theta)  d\theta\ .
\end{equation}
Making the time-averaging explicit:
\begin{equation}
 \delta\rho(t)= \frac{1}{T_a}\int_0^t d\theta \int_{-T_a/2}^{T_a/2} d\tau [\mathcal{L}_{\theta+\tau}\rho(\theta) -\mathcal{L}_{\theta}\rho(\theta)]   + \int_0^t \overline{\mathcal{L}_\theta}\delta\rho(\theta)  d\theta\ .
 \label{eq:113}
\end{equation}
We would like to change the integration variables of the first term $\mathcal{L}_{\theta+\tau}\rho(\theta)$ in such a way that  it becomes $\mathcal{L}_{\theta'}\rho(\theta' -\tau')$. The change of variables is $\theta + \tau = \theta', ~ \tau = \tau'$. In the second term $-\mathcal{L}_{\theta}\rho(\theta)$ the change is trivial $\theta =\theta', ~ \tau = \tau'$. This introduces transient effects at the beginning and the end of the evolution, as we want to revert the integration region of the first term $\mathcal{L}_{\theta'}\rho(\theta' -\tau')$ to be the same as the second term. This is illustrated in Fig.~\ref{timeAvtrick} and the corresponding integrals are:
%%%%%%%
\begin{figure}[t]
\includegraphics[width=0.8\linewidth]{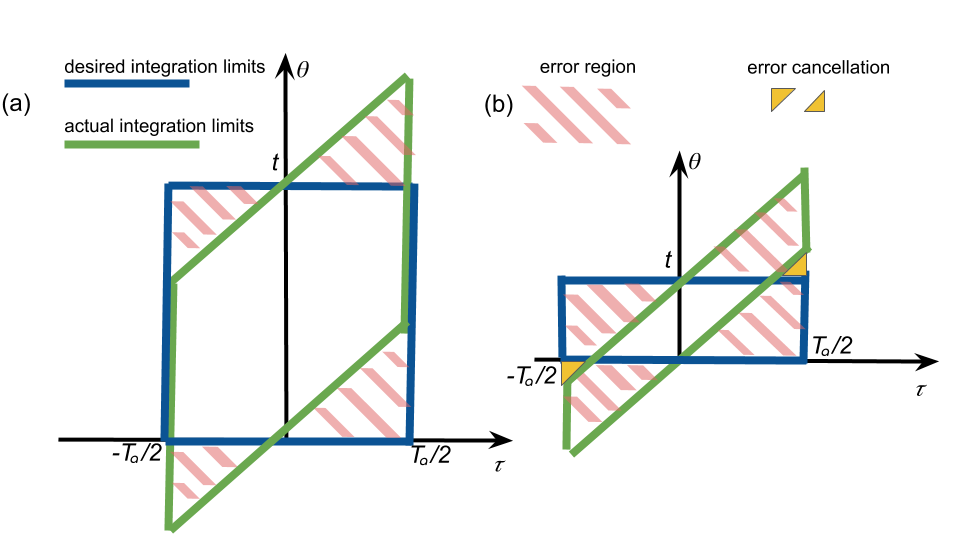}
\caption{Times contributing to $\delta \rho(t)$ for (a) $t>T_a/2$ (b) $t<T_a/2$.  }
\label{timeAvtrick}
\end{figure}
%%%%%%%%
\begin{equation}
   \int_0^t d\theta \int_{-T_a/2}^{T_a/2}d\tau =\int_{-T_a/2}^{T_a/2}d\tau'\int_{\tau'}^{t+\tau'}d\theta'  = \int_{-T_a/2}^{T_a/2}d\tau' \left(\int_0^{t}d\theta' + \int_t^{t+\tau'}d\theta' -\int_0^{\tau'}d\theta' \right)
\end{equation}
After this change of variables and collecting all the contributions from the $\mathcal{L}_{\theta}\rho(\theta)$ term in Eq.~\eqref{eq:113} into a single term, we get:
\begin{equation}
  \delta\rho(t)= \frac{1}{T_a}\int_{-T_a/2}^{T_a/2}d\tau' \int_0^{t}d\theta'\mathcal{L}_{\theta'}(\rho(\theta' -\tau')-\rho(\theta')) +{\frac{1}{T_a}\int_{-T_a/2}^{T_a/2}d\tau' \left(\int_t^{t+\tau'}d\theta' -\int_0^{\tau'}d\theta' \right)\mathcal{L}_{\theta'}\rho(\theta' -\tau')} + \int_0^t \overline{\mathcal{L}_\theta}\delta\rho(\theta)  d\theta .
\end{equation}
We call the middle term transient because its contribution does not grow with $t$. One can interpret it as two instantaneous kicks to the solution at time $0$ and $t$. For $t\leq T_a/2$ the integrals in the second term cancel over the area $(T_a/2 - t)^2$, as indicated by the yellow triangles in Fig.~\ref{timeAvtrick}. In fact by splitting the integral above differently we could get a slightly tighter (parametrically better for the transient) bound, but working with rectangular regions is easier. Now we can take the norm: \begin{equation}
 \|\delta\rho(t)\|_1\leq \frac{\Lambda }{T_a}\int_0^t d\theta' \int_{-T_a/2}^{T_a/2} d\tau' \|\rho(\theta'-\tau') -\rho(\theta')\|_1   + {\frac{c_\rho \Lambda }{2T_a}\left(T_a^2 -\left(\max\left(T_a-2t,0\right)\right)^2\right)} + \Lambda \int_{0}^t\|\delta\rho(\theta)\|_1d\theta \ ,
\end{equation}
where $\Lambda$ is a number such that
\beq
\Lambda  \geq \sup_{\theta \in [0,t],\|\rho\|_1= 1} \|\mathcal{L}_{\theta}(\rho)\|_1\ .
\eeq
The difference between $\rho$'s at nearby time points is bounded trivially as:
\begin{equation}
   \dot\rho = \mathcal{L}\rho\quad \Rightarrow \quad \|\rho(\theta' -\tau') - \rho(\theta')\|_1 \leq c_\rho\Lambda |\tau'| \ ,
\end{equation}
which can be used to simplify:
\bes
\begin{align}
 \|\delta\rho(t)\|_1 &\leq \frac{c_\rho\Lambda }{T_a}\int_0^t d\theta' \int_{-T_a/2}^{T_a/2} d\tau' \Lambda |\tau'|   + \frac{c_\rho \Lambda \min(T_a,4t - 4t^2/T_a)}{2} + \Lambda \int_{0}^t\|\delta\rho(\theta)\|_1d\theta   \\ 
 &\leq \frac{c_\rho\Lambda ^2 t T_a}{4}   + \frac{c_\rho\Lambda \min(T_a,4t - 4t^2/T_a)}{2} + \Lambda \int_{0}^t\|\delta\rho(\theta)\|_1d\theta\ .
\end{align}
\ees
We analyzed a similar inequality in the proof of Lemma~\ref{factDif}, and as we saw there this amounts to solving for $b_2(t)$ that saturates the inequality and upper bounds $\delta\rho(t)$. Note that despite the $\min$ function there is no abrupt slope change on the r.h.s. and the derivative matches, so there is no discontinuity in $\dot{b}_2$. The equations for $b_2$ such that $\|\delta\rho(t)\|_1 \leq b_2(t)$ are:  
\bes
\begin{align}
    \dot{b}_2(t) &= \frac{c_\rho\Lambda ^2  T_a}{4}   + c_\rho\Lambda (2 - 4t/T_a) + \Lambda b_2(t), \quad b_2(0)=0,\quad t\leq T_a/2 \\
      \dot{b}_2(t) &= \frac{c_\rho\Lambda ^2  T_a}{4}   + \Lambda b_2(t), \quad t\geq T_a/2 
\end{align}
\ees
Solving the first equation, we find:
\bes
\begin{align}
b_2(t) &= 4c_\rho t/T_a + c_\rho\frac{4-2\Lambda T_a -(\Lambda T_a/2)^2}{\Lambda T_a} (1 -e^{\Lambda t}) \label{dFull}\\
    b_2(T_a/2) &= c_\rho\frac{4- (\Lambda T_a/2)^2}{\Lambda T_a} + c_\rho e^{\Lambda T_a/2}\frac{(\Lambda T_a/2)^2 + 2\Lambda T_a -4}{\Lambda T_a}  = \frac{c_\rho\Lambda T_a }{2}+ O(\Lambda ^2T_a^2)
\end{align}
\ees
where the last estimate is in the limit $\Lambda T_a\ll 1$. Now we use that as the initial condition for the second equation to find:
\begin{equation}
   b_2(t) = [b_2(T_a/2) + c_\rho\Lambda  T_a/4]e^{\Lambda (t-T_a/2)} - c_\rho\Lambda  T_a/4,   \quad t>T_a/2 \ .
\end{equation}
 Notice that $\dot{b}_2>0$ always. We can express a bound $\|\delta\rho(t)\|_1 \leq b_2(t)$ for all $t$ as follows:
\begin{equation}
    \|\delta\rho(t)\|_1\leq (b_2(\min(t,T_a/2)) + c_\rho\Lambda  T_a/4)\max(e^{\Lambda (t-T_a/2)},1) - c_\rho\Lambda  T_a/4\ .
\end{equation}
The simplest observation is that for $t\leq c/\Lambda $:
\begin{equation}
   \|\delta\rho(t)\|_1 \leq O(\Lambda T_a)\ .
\end{equation}
\end{proof}

\section{Derivation of the bound~\eqref{looseBound}}
\label{boundingDetails}

Here we discuss the derivation of the error introduced by time-averaging and restoring complete positivity, starting from Eq.~(\ref{fullWLam}):
\begin{align}
   \|\rho_{BM,I}(t) - {\rho}_{C,I}(t)\|_1\leq c_{BM}\left(  e^{\Lambda \max (t,T_a/2)} \frac{3 T_a}{\tau_{SB}} -\frac{  T_a}{\tau_{SB}}\right) +  e^{\Lambda t}\frac{2T_a}{\tau_{SB}} + \frac{4\tau_B}{\Lambda T_a\tau_{SB}}\max(e^{\Lambda (t-T_a/2 )}-1,0 )\ . 
\end{align}
Similarly to Eq.~\eqref{weakBound}, we loosen the bound by using $e^{-\Lambda T_a/2} \leq 1$. For $t>T_a/2$ and $c_{\Lambda} = 4/\Lambda \tau_{SB}\geq 1$, we obtain:
\begin{align}
   \|\rho_{BM,I}(t) - {\rho}_{C,I}(t)\|_1\leq ((3c_{BM}+2)e^{\Lambda t}-c_{BM}) \frac{ T_a}{\tau_{SB}}   +(e^{\Lambda t}-1 )\frac{c_\Lambda\tau_B}{ T_a}\ .
   \label{startApp1}
\end{align}
At $T_a =\sqrt{c_\Lambda\tau_B\tau_{SB}/(3c_{BM}+2)}$ for $t>T_a/2$ we have:
\begin{align}
   \|\rho_{BM,I}(t) - {\rho}_{C,I}(t)\|_1\leq \sqrt{\frac{c_\Lambda(3c_{BM}+2)\tau_B}{\tau_{SB}}}\left(2e^{\Lambda t}-\left(1 +\frac{c_{BM}}{(3c_{BM}+2)}\right)\right)\ .
\end{align}
We can also give a bound that works for all times. The simplest way to do so would be to employ the monotonicity of the error:
\begin{align}
   \|\rho_{BM,I}(t) - {\rho}_{C,I}(t)\|_1\leq\sqrt{\frac{c_\Lambda(3c_{BM}+2)\tau_B}{\tau_{SB}}}\left(2e^{\Lambda \max(t,T_a/2)}-\left(1 +\frac{c_{BM}}{(3c_{BM}+2)}\right)\right), \quad T_a = \sqrt{\frac{c_\Lambda\tau_B \tau_{SB}}{3c_{BM}+2}} \ ,
   \label{allT}
\end{align}
and this holds for all $t$ and all $\tau_B/\tau_{SB}$. We note that the more careful optimization in App.~\ref{optAside} allows us to get a tighter bound, but the difference is most drastic for large $\tau_B/\tau_{SB}$ when the bound is weaker than the trivial $ \|\rho_{BM,I}(t) - {\rho}_{C,I}(t)\|\leq 1 +c_{BM}$.
% it does not really matter since the error is so big so pretty much any bound that is concise enough will do our result justice.
 
Note that for $t\leq T_a/2$ the expression in Eq.~\eqref{allT} grows exponentially with $\sqrt{\tau_B/\tau_{SB}}$. But we can have a bound linear in $\sqrt{\tau_B/\tau_{SB}}$ if we combine Eq.~\eqref{allT} with the trivial bound $ \|\rho_{BM,I}(t) - {\rho}_{C,I}(t)\|\leq 1 +c_{BM}$:
\begin{align}
   \|\rho_{BM,I}(t) - {\rho}_{C,I}(t)\|&\leq \text{min}\left(\sqrt{\frac{c_\Lambda(3c_{BM}+2)\tau_B}{\tau_{SB}}}\left(2e^{\Lambda \max(t,T_a/2)}-\left(1 +\frac{c_{BM}}{(3c_{BM}+2)}\right)\right), 1 +c_{BM}\right) \\
   &\leq  \sqrt{\frac{c_\Lambda(3c_{BM}+2)\tau_B}{\tau_{SB}}}\left(2e^{\Lambda t+1}-6/5\right)\ .
   \label{eq:F4}
\end{align}
\begin{proof} Here we prove Eq.~\eqref{eq:F4}. Note that the bound is monotonic in $t$, so the lowest value of $\tau_B/\tau_{SB}= \epsilon^{*2}$ for which the minimum is $2$ for all $t$ is given by equating the expressions at $t=0$:
\bes
\begin{align}
    \sqrt{\frac{c_\Lambda(3c_{BM}+2)\tau_B}{\tau_{SB}}}\left(2e^{2T_a/c_\Lambda\tau_{SB}}-\left(1 +\frac{c_{BM}}{(3c_{BM}+2)}\right)\right)= 1 +c_{BM},  \\
    \epsilon^*\sqrt{c_\Lambda(3c_{BM}+2)}\left(2e^{2\epsilon^*/\sqrt{c_\Lambda(3c_{BM}+2)}}-\left(1 +\frac{c_{BM}}{(3c_{BM}+2)}\right)\right)= 1 +c_{BM}
     \label{maxFrac}
\end{align}
\ees
Relaxing this to an inequality, we find:
\begin{equation}
   \epsilon^* \sqrt{c_\Lambda(3c_{BM}+2)}\left(2-\left(1 +\frac{c_{BM}}{(3c_{BM}+2)}\right)\right)\leq 1 +c_{BM}, \quad  \frac{2\epsilon^* \sqrt{c_\Lambda}}{\sqrt{3c_{BM}+2}}\leq 1, \quad e^{2\epsilon^*/\sqrt{c_\Lambda(c_{BM}+2})}\leq e^{1/c_\Lambda}\leq e \ .\label{eIneq}
\end{equation}
where we used $c_\Lambda\geq 1$. Thus any bound on Eq.~\eqref{allT} for $\tau_B/\tau_{SB}\leq \epsilon^{*2}$ that is monotonic in both $\tau_B/\tau_{SB}$ and $t$ will also be a bound for all $\tau_B/\tau_{SB}$. We choose to use the following in the exponent:
\begin{equation}
    \frac{\max(t,T_a/2)}{\tau_{SB}} \leq \frac{t}{\tau_{SB}} + \frac{1}{2}\sqrt{\frac{c_\Lambda\tau_B}{(3c_{BM}+2)\tau_{SB}}} \leq \frac{t}{\tau_{SB}} + \frac{\epsilon^* \sqrt{c_\Lambda}}{2\sqrt{(3c_{BM}+2)}}, \quad \text{for} \quad \frac{\tau_B }{\tau_{SB}}\leq \epsilon^{*2}\ .
\end{equation}
Thus we arrive at
\begin{align*}
 \|\rho_{BM,I}(t) - {\rho}_{C,I}(t)\|_1\leq\sqrt{\frac{c_\Lambda(3c_{BM}+2)\tau_B}{\tau_{SB}}}\left(2e^{2\epsilon^*/\sqrt{c_\Lambda(3c_{BM}+2)}}e^{\Lambda t}-\left(1 +\frac{c_{BM}}{(3c_{BM}+2)}\right)\right)\ ,
\end{align*}
which evaluates to the r.h.s of Eq.~\eqref{eq:F4} using Eq.~\eqref{eIneq} and $c_{BM}\geq 1  \Rightarrow c_{BM}/(3c_{BM}+2) \geq 1/5$.
\end{proof} 
We will now redo this calculation for $T_a = \sqrt{\tau_B\tau_{SB}/5}$ and $c_\Lambda =1$. Using this in Eq.~\eqref{startApp1}:
\begin{equation}
     \|\rho_{BM,I}(t) - {\rho}_{C\sqrt{5},I}(t)\|_1\leq \sqrt{\frac{\tau_B}{5\tau_{SB}}} \left( \left(3e^{\Lambda t} -1\right)c_{BM} + \left(7e^{\Lambda t} -5\right)\right)\ .
\end{equation}
Now we find $\epsilon^*$:
\begin{equation}
     1+c_{BM}= \frac{\epsilon^*}{\sqrt{5}} \left( \left(3e^{2\epsilon^*/\sqrt{5}} -1\right)c_{BM} + \left(7e^{2\epsilon^*/\sqrt{5}} -5\right)\right) \geq \frac{\epsilon^*}{\sqrt{5}} (2c_{BM}+2), \quad \Rightarrow \quad \frac{2\epsilon^*}{\sqrt{5}} \leq 1\ .
\end{equation}
This allows one to bound:
\begin{equation}
     \|\rho_{BM,I}(t) - {\rho}_{C\sqrt{5},I}(t)\|_1\leq \sqrt{\frac{\tau_B}{5\tau_{SB}}} \left( \left(3e^{\Lambda t+1} -1\right)c_{BM} + \left(7e^{\Lambda t+1} -5\right)\right) \label{sqrt5allT}
\end{equation}
for all $t$.

\section{Optimal time}
\label{optAside}

We begin with Eq.~\eqref{startApp1}, and use it to write down the optimum $t$-dependent coarse-graining time:
\begin{equation}
    T_a(t) = \sqrt{c_\Lambda\tau_B \tau_{SB}\frac{(e^{\Lambda t}-1 )}{(3c_{BM}+2)e^{\Lambda t}-c_{BM}}}\ .
\end{equation}
In principle we could consider a family of equations with different $T_a(t)$ and solve each one of them just to obtain $\rho(t)$ at one point. This is not a significant overhead. We proceed with the goal of obtaining a single equation since it is conceptually simpler. To this end we now define the bound relevance time $t^*$, which is the largest time for which our bound is still better than the trivial bound $ \|\rho_{BM,I}(t) - {\rho}_{C,I}(t)\|\leq 1+c_{BM}$:
\begin{equation}
    1+c_{BM}(t^*)=((3c_{BM}(t^*)+2)e^{\Lambda t^*}-c_{BM}(t^*)) \frac{ T_a}{\tau_{SB}}   +(e^{\Lambda t^*}-1 )\frac{c_\Lambda\tau_B}{T_a}\ .
\end{equation}
We would like to optimize $T_a$ at $t=t^*$, which would mean that $t^*$ is maximized. The corresponding equation for $t^*$ is:
\begin{equation}
   1+c_{BM}(t^*) =2 \sqrt{c_\Lambda\frac{\tau_B}{ \tau_{SB}}(e^{\Lambda t^*}-1 )((3c_{BM}(t^*)+2)e^{\Lambda t^*}-c_{BM}(t^*))}\ ,
\end{equation}
and solving the resulting quadratic equation yields:
\begin{equation}
    e^{\Lambda t^*} = \frac{4c_{BM}(t^*)+2 + Y}{2(3c_{BM}(t^*)+2)}, \quad Y =\sqrt{ 4( c_{BM}(t^*) +1)^2 +2(1+c_{BM}(t^*))^2 (3c_{BM}(t^*)+2)(\tau_{SB}/\tau_B c_\Lambda)}\ .
\end{equation}
Thus $T_a$ is:
\begin{equation}
    T_a = \sqrt{c_\Lambda\tau_B \tau_{SB}\frac{-2c_{BM}(t^*)-2 + Y}{(3c_{BM}(t^*)+2)(2c_{BM}(t^*) +2 + Y)}}\ .
\end{equation}
And the error is:
\bes
\begin{align}
     \|\rho_{BM}(t) - {\rho}_{C*}(t)\|_1&\leq  ((3c_{BM}(t^*)+2)e^{\Lambda\text{max}(t,T_a/2)}-c_{BM}(t^*)) \sqrt{\frac{c_\Lambda\tau_B }{\tau_{SB}} \frac{-2c_{BM}(t^*)-2 + Y}{(3c_{BM}(t^*)+2)(2c_{BM}(t^*) +2 + Y)}}   \\
     &+(e^{\Lambda\text{max}(t,T_a/2)}-1 )\sqrt{\frac{c_\Lambda\tau_B}{\tau_{SB}} \frac{(3c_{BM}(t^*)+2)(2c_{BM}(t^*) +2 + Y)}{-2c_{BM}(t^*)-2 + Y}}
\end{align}
\ees
We  should only use these expressions if $t^*>T_a/2$.
For $\tau_{SB}\ll \tau_B$ (when $t^*$ is not  likely to be $>T_a/2$), this changes the asymptotic form of $T_a$ to $T_a \sim \tau_{SB}$. In the regime $\tau_{SB}\gg \tau_B$ that we normally address with the CGME, we expect these formulas to be a minor  improvement on the loose bounds presented in the main text.  Their main advantage is a tighter bound on the error for $t>T_a/2$, $t\sim\tau_{SB}$.

\section{Proof of the bound~\eqref{eq:183} for the Redfield limits of integration error}
\label{app:Red-details}

Starting from Eq.~\eqref{eq:182}, we first send the integration limit to infinity:
\beq
\int_0^{4t/\tau_{SB}}\min (1, \frac{4\tau_B}{x\tau_{SB}} + \epsilon_T)e^{-x}dx \leq\int_0^{\infty}\min (1, \frac{4\tau_B}{x\tau_{SB}} + \epsilon_T)e^{-x}dx = \epsilon_T + \int_0^{\infty} \min (1-\epsilon_T, \frac{4\tau_B}{x\tau_{SB}}  )e^{-x}dx  \ .
\eeq
The $\min$ function evaluates to $1- \epsilon_T$ when $x\leq x^*$ and to $\frac{4\tau_B}{x\tau_{SB}}$ when $x>x^*$, where \beq
x^* = \frac{4\tau_B}{\tau_{SB} (1-\epsilon_T)}\ .
\eeq
Therefore:
\bes
\begin{align}
\int_0^{\infty} \min (1-\epsilon_T, \frac{4\tau_B}{x\tau_{SB}}  )e^{-x}dx  
&=  \int_0^{x^*}(1-\epsilon_T) e^{-x}dx + \int_{x^*}^\infty \frac{4\tau_B}{x\tau_{SB}} e^{-x}dx \\
&= (1-\epsilon_T)(1-e^{-x^*}) +  \frac{4\tau_B}{\tau_{SB}}E_1(x^*) \ ,
\end{align}
\ees
where $E_1(x)$ denotes the exponential integral function: $E_1(x)\equiv \int_x^\infty t^{-1}e^{-t}dt$. A tight bound in terms of elementary functions is given by~\cite{Alzer:1997}[Thm.~2]:
\beq
E_1(x) \leq -\ln(1-e^{-x}) \quad (0<x<\infty) \ .
\eeq
Thus
\bes
\begin{align}
    \|\rho_{BM}(t) - \rho_R(t)\|_1  &\leq  c_{BM}e^{4t/\tau_{SB}} \left[ \epsilon_T+(1-\epsilon_T)(1-e^{-x^*}) - \frac{4\tau_B}{\tau_{SB}}\ln(1-e^{-x^*}) \right] \\
    &\leq c_{BM}e^{4t/\tau_{SB}}  \left[ \epsilon_T + \frac{4\tau_B}{\tau_{SB}}\left(1 - \ln\left(1-e^{-\frac{4\tau_B}{\tau_{SB} (1-\epsilon_T)}}\right)\right)\right] \ ,
    \label{eq:183new}
\end{align}
\ees
where in the second line we used that $\forall x$: $1-e^{-x} < x$.

Alternatively, and using more elementary methods, assuming $x^*<1$, we obtain:
\bes
\begin{align}
   \int_0^{\infty}\min (1, \frac{4\tau_B}{x\tau_{SB}} + \epsilon_T)e^{-x}dx &=\epsilon_T +(1-\epsilon_T) (1-e^{-x^*}) + \int_{x^*}^1 \frac{4\tau_B}{x\tau_{SB}} e^{-x}dx  +\int_{1}^\infty \frac{4\tau_B}{x\tau_{SB}} e^{-x}dx \\ 
   &\leq \epsilon_T +(1-\epsilon_T) x^* + \int_{x^*}^1 \frac{4\tau_B}{x\tau_{SB}} dx  + \int_1^\infty\frac{4\tau_B}{\tau_{SB}}e^{-x}dx  \\
   &=  \epsilon_T +(1-\epsilon_T) x^* -  \frac{4\tau_B}{\tau_{SB}} \text{ln}x^* + \frac{4\tau_B}{e\tau_{SB}}    \\
   &= \epsilon_T + \frac{4\tau_B}{\tau_{SB} } +  \frac{4\tau_B}{\tau_{SB}} \text{ln}\frac{\tau_{SB} (1-\epsilon_T)}{4\tau_B} + \frac{4\tau_B}{e\tau_{SB}} \quad (x^*< 1 ) \ ,
\end{align}
\ees
while when $x^*\geq 1$ the term with the logarithm is absent:
\begin{align}
   {\int_0^{\infty}\min (1, \frac{4\tau_B}{x\tau_{SB}} + \epsilon_T)e^{-x}dx \leq\epsilon_T + \frac{4\tau_B}{\tau_{SB} } + \frac{4\tau_B}{e\tau_{SB}} \quad (x^*\geq 1 ) } \ .
\end{align}
Combining these two cases directly yields Eq.~\eqref{eq:183}, which is an upper bound on the tighter bound presented in Eq.~\eqref{eq:183new}.

%\printbibliography
\nocite{apsrev41Control}
\bibliography{refs}

\end{document}